
\documentclass[journal,draftcls, onecolumn, 12pt]{IEEEtran}
\pdfoutput=1


\usepackage[usenames,dvipsnames]{xcolor}

\usepackage[tight,footnotesize]{subfigure}
\usepackage[pdftex]{graphicx}





\usepackage{graphics} 
\usepackage{graphicx}
\usepackage{verbatim}
\usepackage{color}
\usepackage{subfigure}

\usepackage{caption}
\usepackage{amsmath} 
\usepackage{amssymb}  



\usepackage{mathtools}

\usepackage{amsmath}
\makeatletter
\newcommand{\distas}[1]{\mathbin{\overset{#1}{\kern\z@\sim}}}%
\newsavebox{\mybox}\newsavebox{\mysim}
\newcommand{\distras}[1]{%
	\savebox{\mybox}{\hbox{\kern3pt$\scriptstyle#1$\kern3pt}}%
	\savebox{\mysim}{\hbox{$\sim$}}%
	\mathbin{\overset{#1}{\kern\z@\resizebox{\wd\mybox}{\ht\mysim}{$\sim$}}}%
}

\newtheorem{theorem}{Theorem}

\newtheorem{corollary}{Corollary}
\newtheorem{lemma}{Lemma}

\def\conv{\otimes}
\def\deconv{\oslash}
\def\minim{\wedge}

\def\eps{\varepsilon}
\def\P{{ Pr}}  
 
\def\S{{\cal S}}

\def\F{{\mathcal{F}}}

\def\Swin{{S_{\rm win}}}

\newcommand{\add}[1]{\textcolor{magenta}{#1}}


\begin{document}

\title{Window Flow Control Systems with Random Service} 
\author{Alireza Shekaramiz*, 
         J\"{o}rg Liebeherr*, Almut Burchard**  \\[20pt] 
        * Department of ECE,   University of Toronto,  Canada. \\[-10pt]
        ** Department of Mathematics,   University of Toronto, Canada. \\[-5pt]
	 	E-mail:  \{ashekaramiz, jorg\}@ece.utoronto.ca; almut@math.toronto.edu.
        }%

\setcounter{page}{1}
\maketitle
\thispagestyle{plain}
\pagestyle{plain}

\begin{abstract}
We present an extension of the window flow control analysis  by R. Agrawal et. al. (Reference \cite{Cruz99}), C.-S. Chang (Reference \cite{Chang98a}), and C.-S. Chang et. al. (Reference \cite{ChangCBT02})  
to a system with random service time and fixed feedback delay. 
We consider two network service models. In the first model, the 
network service process itself has no time correlations. The second model 
addresses a two-state Markov-modulated service.  

\end{abstract}

\section{Introduction}

The discovery that deterministic feedback systems can be 
expressed in the 
network calculus  as the solution of a fixed-point equation in the dioid algebra 
was exploited for an analysis of a general feedback system  \cite{Cruz99,Chang98a,ChangCBT02}. 
The analysis was conducted for deterministic systems, 
which cannot exploit statistical multiplexing of traffic flows.  
A detailed feedback model of the TCP Tahoe and TCP Reno algorithms as 
max-plus linear systems was presented in \cite{baccelliTCP}. The analysis accounts for 
randomness in the feedback system, but assumes that the underlying service process is deterministic.  A further simplification is that the derivations in   \cite{baccelliTCP} are conducted for an overloaded system, with a claim that the results extend to any load condition. 
Overcoming the limitations of a deterministic analysis has motivated the development of the {\it stochastic network calculus}, where traffic and service are characterized by random processes \cite{Book-Jiang}. Whereas, over the last ten years,  many key results of the network calculus have been shown to also hold under probabilistic assumptions, there has not been a successful treatment of stochastic feedback systems, see \cite[p. 82]{Fidler2010}, \cite[p. 104]{Fidler2015}, \cite[pp. 215-216]{Book-Jiang}.

\begin{figure}
\centering
\includegraphics[width=0.6\columnwidth]{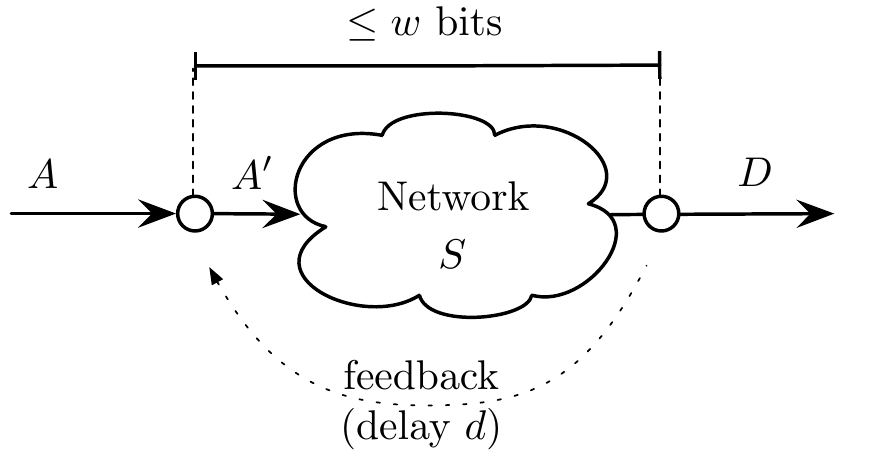}
\caption{Network with window flow control. Traffic is admitted to the network only if its backlog does not exceed $w$ bits. Feedback, with delay $d$, informs the network ingress about departures. Random processes $A, A', D$, and $S$, respectively, describe the external arrivals, the admitted arrivals, the departures, and the network service.} 
\label{chp2-fig:window-flow-control}
\end{figure}
In this paper, we present an  analysis of stochastic feedback systems 
with methods of the stochastic network calculus. 
Our paper extends the deterministic network calculus analysis of  feedback systems in \cite{Cruz99,Chang98a,ChangCBT02,Book-LeBoudec}.
Following prior works, we study a window flow control system as shown in Fig.~\ref{chp2-fig:window-flow-control}, which enforces that only a limited number of 
$w$ bits, referred to as  {\it window size},  can be in transit at any time.  The feedback consists of (acknowledgement) messages, which relay  information on the amount of departing traffic back to  the network entrance. 
The delay of the feedback is assumed to be $d$ time units. 
The feedback signal opens or closes a throttle that prevents traffic from  entering the network.

Selecting a discrete-time model for the analysis, we  derive upper and lower 
bounds on the service experienced by a 
traffic flow where the external arrivals 
and the available network service are characterized by 
bivariate random processes. 
We are interested in gaining insight into the service impediment caused by the 
feedback mechanism  with given window size $w$ and feedback delay $d$. 
This is done by deriving an equivalent service process, which describes 
the available service when taking into account the window flow control constraints.

Our analysis applies the {\it moment-generating function (MGF) network 
calculus}  from \cite{Fidler06}.
One challenge for the analysis of a stochastic feedback system with the MGF network 
calculus is that the standard method to compute the frequently encountered  
convolution operation is not well-suited for feedback systems. 
We adapt the MGF network calculus to feedback systems by deriving a sharpened version of 
the  convolution estimate from \cite{Fidler06}. 


It remains  open to which degree our results can be applied to other feedback systems. 
Even though our paper only presents single node results, we emphasize that available
techniques  of the MGF network calculus permit an immediate 
extension to a multi-node network consisting of a sequence of feedback 
systems. More complex network topologies, in particular, nested feedback systems 
are not covered.

The rest of the paper is structured as follows. In Sec.~\ref{sec:Preliminaries}, we provide background on network calculus, analysis of feedback systems, and the MGF network calculus with bivariate processes. 
In Sec.~\ref{sec:stochfeedback}, we discuss the obstacles to a network calculus analysis of 
stochastic feedback systems, and present the main results of this paper. In Sec.~\ref{sec:cbrvbr}, 
we consider a feedback system with an i.i.d. service model. 
Sec.~\ref{sec:MMOO} extends the analysis to a service model with time correlations. 
We present brief conclusions in Sec.~\ref{sec:conclusions}.

\section{Bivariate Network Calculus} \label{sec:Preliminaries}

The derivations in this paper will be done in the context of a bivariate network calculus, 
which has been used for  deterministic as well as stochastic analyses 
of networks \cite{Book-Chang}. 
We consider a discrete-time domain with $t = 0, 1, 2, \ldots$ describing 
time slots.

The vast majority of network calculus research is done with univariate 
functions $f(t)$ that quantify events in a time interval $[0,t)$. 
A bivariate network calculus uses functions $f(s,t)$ to characterize events in 
the time interval $[s,t)$.  The rationale for preferring the univariate network calculus is 
that it has stronger algebraic properties. 
On the other hand, univariate functions do not lend themselves easily to probabilistic extensions. 
For example, consider a deterministic system that offers a time-invariant  
service $S(s,t)$ for each time interval $[s,t)$. Since  $S(s,t)= S(0,t-s)$,  
the service can be completely described by a univariate function  $S(\tau)$ expressing the service in any time interval of length $\tau$. 
In the corresponding probabilistic model, if the service is stationary, $S(s,t)$ and $S(0,t-s)$ are equal 
in distribution, however,  generally $S(s,t)\neq S(0,t-s)$, thus prohibiting the 
convenient   reduction to  univariate functions.

\subsection{$(\minim, \conv)$-dioid algebra for bivariate functions}
We denote by $\F$ the family of bivariate functions that are non-negative and non-decreasing in the second argument. 
We use ${\F}_o$ to denote the set of functions $f \in \F$ with 
$f (t,t)=0$, which we refer to as causal functions. 
For $f, g \in  \F$,  the minimum ($\minim$) and 
convolution  ($\conv$)  operations are defined by 
\begin{align*}
&  f\minim g \, (s,t)=\min\{f(s,t),g(s,t)\} \, , \\
& f\otimes g \, (s,t)=\min_{s\leq\tau\leq t}\{f(s,\tau)+g(\tau ,t)\} \, . 
\end{align*}

The two operations form a dioid algebra on the sets $\F$ and $\F_o$ \cite{Book-Chang}. 
The convolution operation in these dioids is associative, but not commutative. 
(In contrast, the convolution in the corresponding dioid algebra for 
univariate functions is commutative.)  
Also, the convolution distributes over the minimum, 
i.e., for three functions $f$, $g$, and $h$ we have
 $f \conv (g \minim h) = f \conv g\minim f \conv h$.\footnote{Statements that list 
functions without arguments hold for all pairs $(s,t)$ with $s \le t$. 
Also, we give the  
 convolution operation  precedence over the  minimum, which allows us to omit some parentheses.} The function  
\[
\delta (s, t) = \begin{cases}
0\, & s \ge t \, , \\  
\infty  \, & s <  t \, ,
\end{cases} \
\]
is the neutral element of the convolution operation, with $f \conv \delta = \delta \conv f = f$ for any $f \in \F$. 

For the analysis of feedback systems, we need to convolve a function multiple times with itself.
For that, we use the notation  
\begin{align*}
& f^{(0)}  =\delta \, , \quad f^{(1)}  = f \, , \\
& f^{(n+1)}  = f^{(n)} \conv f \, , \qquad (n\ge 1) \, . 
\end{align*} 
We  define the subadditive closure of $f \in \F$, denoted by $f^{*}$, as
\begin{align*}
		f^{*}  &=\delta \minim f\minim f^{(2)}\minim f^{(3)}\minim \ldots  
		=\bigwedge_{n=0}^{\infty}f^{(n)} \, .  
	\end{align*}
The attribute  `subadditive' derives from the property that  $f^* (s,t)\le f^* (s, \tau)+ f^* (\tau, t) $ for $s \le \tau \leq t$. 
Every subadditive function $f$ satisfies $ f \conv f =f$. 
If $f \in \F_o$, then its subadditive closure is given by $f^{*} = \lim_{n\to\infty}f^{(n)} $. 

\subsection{Bivariate arrival and service processes}
\label{subsec:DetNetCalc}
The arrivals to a network element are characterized by a bivariate process $A$, 
where $A(s,t)$ describes the  
cumulative arrivals in the time interval $[s,t)$. 
We use $D(s,t)$ to describe the  
 departures in the time interval $[s,t)$, 
 subject to the  causality condition $D (0,t) \le A (0,t)$.  
Both $A$ and $D$ are causal functions ($A, D \in \F_o$).
If $a_k$ and $d_k$ denote the arrivals and departures, respectively, 
at time $k$, we have 
\[
A(s,t) = \sum_{k=s}^{t-1} a_k  \quad \text{and} \quad 
D(s,t) = \sum_{k=s}^{t-1} d_k \, . 
\] 

The available service at a  network element is described by a bivariate 
service process $S(s,t) \in \F_o$ that 
satisfies the input-output relationship 
\begin{align*} 
D \geq A \conv S       
\end{align*} 
for any arrival process and corresponding departure process at that element. 
Such a process is called 
a {\it dynamic server} in \cite[p. 178]{Book-Chang}. 
An {\it exact} service process satisfies the input-output relationship with equality. 
The service offered by a sequence of network elements with service processes $S_1, S_2,\ldots,S_N$ is given by 
their  convolution  $S_1 \conv S_2 \conv \ldots \conv S_N$.

Given arrival and service processes, one can formulate bounds on backlog  at a network element. The backlog $B$ of a network element denotes the arrivals that have not yet departed, given by $B(t)=A(0,t)-D(0,t)$.  A bound on the backlog can be expressed in terms of the deconvolution operation ($\deconv$), 
which, for two functions $f, g \in  \F$,  is defined by  
\begin{align*}
f\deconv g(s,t)=\max_{0\leq\tau\leq s}\{f(\tau,t)-g(\tau ,s)\} \, . 
\end{align*}		
With this operation we have \cite[Theorem~2]{Fidler06} 
\begin{align}\label{eq:backlogbound}
B(t) \le A\deconv S(t,t) \, .   
\end{align}
Bounds on the delay and the burstiness of departures also involve the deconvolution operation.

\subsection{Bivariate feedback systems} \label{subsec:FeedbackSystem} 
	
		\begin{figure}
				\centering
				\includegraphics[scale=1]{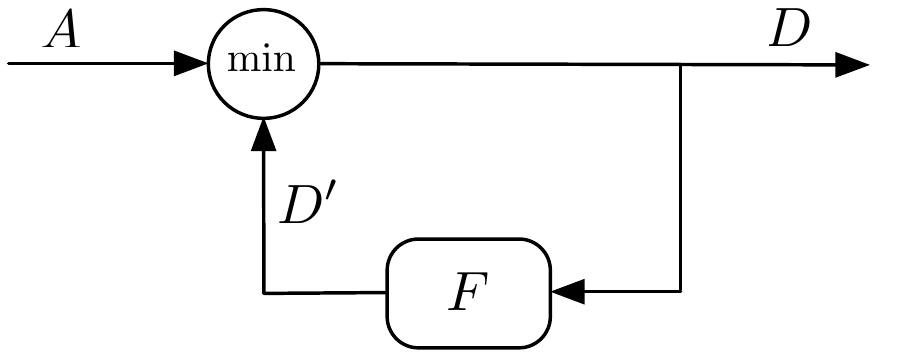}
				\caption{ Generic feedback server.}
				
				\label{fig:FeedbackServer}
		\end{figure}

A network element with feedback is one where the departures from the element influence the arrivals to the same element. Fig.~\ref{fig:FeedbackServer} depicts a generic system-theoretic model of a closed-loop feedback system.  There is a network element with service process $F$ whose output is re-combined 
with the external input, so that the arrivals to the network element 
are the minimum of its output and the external arrivals. Such a system will be 
referred to as {\it feedback server}. 
The feedback server consists of a (non-causal) service element with service process $F(s,t)$ ($F \not\in \F_o$) and 
arrival and departure functions $D' \ge D \conv F$. The service element labeled by `min' in Fig.~\ref{fig:FeedbackServer} represents a 
{\it throttle}, which enforces the minimum $D = A \minim D'$. With this, $A$, $D$, and $F$ satisfy  
$D \ge  A \minim D \conv F$. 
As shown in \cite{Cruz99,Chang98a} for  univariate functions, the closed-loop system can be replaced by an equivalent system, which consists of 
a single network element without feedback with service process $F^*$, where 
$F^*$ is the subadditive closure of $F$. 
In  \cite{ChangCBT02}, the result has been extended to  bivariate functions, 
as expressed in the following lemma. 		
		\begin{lemma}{\it (see \cite[Lemma 2.2]{ChangCBT02})} 
		\label{lemma:Chang-feedback}
			 Given a feedback server as in Fig.~\ref{fig:FeedbackServer} with bivariate functions $A,D$, and $F$. If $F \not\in \F_o$ and 
$D\geq A \minim D\otimes F$, then $D\geq A\conv F^{*}$. 

		\end{lemma}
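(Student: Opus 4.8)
The plan is to iterate the hypothesis $D \ge A \minim D \conv F$ and show that each iteration produces $D \ge A \conv F^{(n)}$ as one of the terms, so that in the limit $D \ge A \conv F^*$. First I would substitute the inequality into itself. Since $\conv$ distributes over $\minim$, convolving the hypothesis on the right by $F$ gives $D \conv F \ge (A \minim D \conv F) \conv F = A \conv F \minim D \conv F \conv F = A \conv F^{(1)} \minim D \conv F^{(2)}$. Plugging this back into $D \ge A \minim D \conv F$ and using that $A = A \conv \delta = A \conv F^{(0)}$, I obtain $D \ge A \conv F^{(0)} \minim A \conv F^{(1)} \minim D \conv F^{(2)}$. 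Repeating this substitution $n$ times yields, by a straightforward induction, $D \ge \bigwedge_{k=0}^{n} A \conv F^{(k)} \, \minim \, D \conv F^{(n+1)}$, which already implies $D \ge \bigwedge_{k=0}^{n} A \conv F^{(k)}$ for every finite $n$.

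The remaining step is to pass from the finite truncations to the infinite subadditive closure. Fixing a pair $(s,t)$, the inequality $D(s,t) \ge \bigl(\bigwedge_{k=0}^{n} A \conv F^{(k)}\bigr)(s,t)$ holds for all $n$, hence $D(s,t) \ge \inf_n \bigl(\bigwedge_{k=0}^{n} A \conv F^{(k)}\bigr)(s,t) = \bigl(\bigwedge_{k=0}^{\infty} A \conv F^{(k)}\bigr)(s,t)$. Finally, since $\conv$ distributes over arbitrary minima, $\bigwedge_{k=0}^{\infty} A \conv F^{(k)} = A \conv \bigwedge_{k=0}^{\infty} F^{(k)} = A \conv F^*$, which gives $D \ge A \conv F^*$ as claimed. (If one prefers to avoid distributing over an infinite minimum, one can instead observe directly that $A \conv F^{(k)}(s,t) \ge A \conv F^*(s,t)$ fails in general — rather, $F^* \le F^{(k)}$ pointwise so $A \conv F^* \le A \conv F^{(k)}$, and a short argument identifies $\inf_k A \conv F^{(k)}(s,t)$ with $A \conv F^*(s,t)$ using that the minimum defining each convolution is over the finite set $s \le \tau \le t$.)

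I expect the one subtle point to be the handling of the tail term $D \conv F^{(n+1)}$: because $F \not\in \F_o$, the values $F^{(n+1)}(s,t)$ need not vanish or grow, and one must be careful that dropping this term at each finite stage is legitimate — it is, because $\minim$ of a family is a lower bound for each member, so $D \ge (\text{something}) \minim D \conv F^{(n+1)}$ immediately gives $D \ge (\text{something})$. The interchange of the infimum over $n$ with the convolution (equivalently, justifying $\bigwedge_{k} A\conv F^{(k)} = A \conv \bigwedge_k F^{(k)}$) is the only place where the structure of $\F$ and the finiteness of the convolution window $[s,t]$ really gets used, and that is where I would be most careful in writing out the details.
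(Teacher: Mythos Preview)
Your iteration to obtain
\[
D \ \ge\ \bigwedge_{k=0}^{n} A \conv F^{(k)} \ \minim\ D \conv F^{(n+1)}
\]
is correct, but the step where you ``drop'' the tail term is wrong. You argue that since $\minim$ of a family is a lower bound for each member, $D \ge X \minim Y$ immediately gives $D \ge X$. This is backwards: $X \minim Y \le X$, so from $D \ge X \minim Y$ you only learn that $D$ dominates something that is \emph{at most} $X$, which says nothing about $D$ versus $X$. (Concretely: $D=5$, $X=10$, $Y=3$ satisfies $D \ge X \minim Y$ but not $D \ge X$.) Hence your conclusion $D \ge \bigwedge_{k=0}^{n} A\conv F^{(k)}$ for every finite $n$ does not follow, and the remainder of the argument does not get off the ground.

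This is precisely where the hypothesis $F \notin \F_o$ must enter --- and you never actually use it. In discrete time, for a fixed pair $(s,t)$ there are only $t-s+1$ integer points in $[s,t]$. In the $(n{+}1)$-fold convolution $F^{(n+1)}(\tau,t)$, any chain $\tau=\sigma_0\le\cdots\le\sigma_{n+1}=t$ with $n+1 > t-\tau$ must, by pigeonhole, have indices $i$ with $\sigma_i=\sigma_{i-1}$, each contributing $F(\sigma_i,\sigma_i)>0$. Since $F\ge 0$, this forces $F^{(n+1)}(\tau,t)\to\infty$ uniformly for $\tau\in[s,t]$, and hence $D\conv F^{(n+1)}(s,t)\to\infty$ as $n\to\infty$. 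Only then, for $n$ sufficiently large (depending on $(s,t)$), does the tail term cease to contribute to the minimum, after which your Step~4 --- which is fine, as both the infimum over $k$ and the convolution infimum over $\tau\in[s,t]$ are finite minima --- yields $D(s,t)\ge A\conv F^*(s,t)$.

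For comparison with the paper: the paper does not prove this lemma at all; it is quoted from \cite{ChangCBT02}.
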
	

Since the lemma applies when  
$A, D$, and $F$ are random processes, an analysis of stochastic feedback system appears 
readily available. However, computing the subadditive closure $F^{*}$ for a 
bivariate random service process presents considerable difficulties.   

\subsection{MGF network calculus}

The MGF network calculus  \cite{Fidler06} offers an analysis 
of network elements, when the arrivals and the offered service are 
bivariate random processes that are characterized in terms 
of their moment-generating functions. The MGF network calculus has been frequently applied to the analysis of wireless networks, 
e.g., \cite{alzubaidy,Fidler-Fading,Mahmood_Rizk_Jiang,ZhengTWC2012}, since the random service model can capture the randomness of a wireless transmission system. 
We denote the moment-generating function of a random variable $X$ for any 
$\theta \in \mathbb{R}$ by $M_{X}(\theta)=E[e^{\theta X}]$. 
The MGF calculus exploits that for any two independent random variables $X$ and $Y$, the relation $M_{X+Y}(\theta) = M_{X}(\theta) M_{Y}(\theta)$ holds. 
The stochastic analysis of feedback systems in this paper will 
take an MGF network calculus approach.  

The moment-generating functions of an arrival process $A$ and a 
service process $S$ for $\theta >0$ are denoted by  
\[
M_A (\theta, s,t) = E[e^{\theta A(s,t)}] \quad \text{and}\quad
 M_S (-\theta, s,t) = E[e^{-\theta S(s,t)}] \, . 
\]
We assume that the arrival and the service are independent.  
For characterizing the random service of a feedback system, we will use 
a bivariate version of the statistical service curve from \cite{BuLiPa06} and the effective capacity from \cite{WuNegi03}. 
A statistical service curve $\S^\eps$ of a bivariate service process $S$ for a given $\eps >0$ \cite{Fidler06} 
is defined by  the property that 
\begin{align}
\P{\Bigl(S  (s,t) \leq {\cal S}^{\eps} (s,t)\Bigl)}\leq \eps \, . 
\label{eq:Srvc-probbound-0}
\end{align}
The statistical service curve is a deterministic function giving a lower bound 
on the available service that is violated with a probability  $\eps$ or less. 
Using the Chernoff bound, a statistical service curve can be computed from $M_S$  as  
\begin{align}
{\cal S}^{\eps}  (s,t) =  \max_{\theta >0}\frac{1}{\theta}
\Bigl\{\log{ \eps}-\log{M_S (-\theta, s,t)}\Bigl\} \, .  
\label{eq:Srvc-probbound}
\end{align}
An alternative measure to describe a stochastic service in terms of  $M_S$ is the 
effective capacity $\gamma_S (-\theta)$ \cite{WuNegi03},  defined for  $\theta>0$ by 
\begin{align}
\gamma_S(-\theta) = \lim_{t\to\infty} - \frac{1}{\theta t} \log M_S (-\theta,0,t) \, . 
\label{eq:Eff-capacity}
\end{align}
Since it is defined as a time limit, the effective capacity is most useful when 
reasoning about long-term traffic rates and scaling properties. In particular, 
$\gamma_S(0) = \lim_{\theta \to 0}\gamma_S(-\theta)$ 
equals the average service rate.
In contrast, the statistical service curve provides a bound for finite values of $t$. 

Given the moment-generating functions of two bivariate processes $f$ and $g$,  bounds on the convolution and deconvolution, given in \cite{Fidler06}, are  
\begin{align}
M_{f \conv g} (-\theta, s,t)  & \le \sum_{\tau = s}^t M_{f} (-\theta, s,\tau)M_{g} (-\theta, \tau,t) \, , \label{eq:mgf-conv} \\
M_{f \deconv g}  (\theta, s, t) & \le 
\sum_{\tau=0}^{s}  M_{f}(\theta, \tau, t) M_{g}(-\theta, \tau, s) \, . 
\label{eq:mgf-deconv}
\end{align}
The convolution bound is used to estimate the moment-generating function of a sequence of  service elements. 
The deconvolution bound 
plays a role when computing performance bounds. For example, with an application of  the Chernoff bound, we can obtain from Eq.~\eqref{eq:backlogbound} a  
bound  on the backlog distribution $\P ( B(t) > b^*(t) ) \le \eps$ \cite{Fidler06}, where 
\begin{align}\label{eq:backlog-MGF}
b^*(t) = \min_{\theta >0 } \frac{1}{\theta} \left\{ \log \left( \sum_{\tau=0}^{s} M_{A}(\theta, \tau, t) M_{S}(-\theta, \tau, s) \right) - \log \eps \right\} \ . 
\end{align}
The delay can be treated in a similar fashion.

\section{Towards a Stochastic Feedback Analysis}\label{sec:stochfeedback}

The generalization of an analysis of a deterministic feedback system to 
random arrival and service processes has remained open   
for considerable time. In this section, we present results that make such an 
analysis possible. We describe the issues 
that make a stochastic analysis of feedback systems within the framework 
of the network calculus hard, and then address how to resolve them.

\subsection{Model Description} \label{subsec:model}

We will analyze the network with window flow control in Fig.~\ref{chp2-fig:window-flow-control}. 
Traffic with  arrival process $A(s,t)$ is serviced by a network 
element with 
service process $S(s,t)$, subject to the additional constraint 
that the total backlog in the element at any time 
may not exceed $w >0$.   
Traffic in excess of that constraint is 
held in a FIFO buffer at the network entrance. The traffic leaving the network 
is expressed 
by a departure function $D (s,t)$. 
The feedback information consists of the value of the departure function 
delayed by $d \ge 0$ time slots, i.e., $D (s,t-d)$. 
If we use $A'$ to denote the arrived traffic that is admitted into the network,  the  flow control system 
requires that   
\begin{align} \label{eq:Aprime}
A'(s,t) = \min \left\{ A(s,t)  , D(s,t-d) +w \right\}  \, .  
\end{align}
This control ensures that admitted traffic 
that has not yet departed cannot exceed $w$.

The system can be described by a feedback model as discussed in Subsec.~\ref{subsec:FeedbackSystem}. 
Let the network service be given by a service process $S$, i.e., 
\begin{align}
\label{eq:conv-A'}
D \geq A' \conv S \, . 
\end{align}
Further, for  $w > 0$ we define a function $\delta^{+w}$  by  
\begin{align}
\label{eq:delta-w-def}
\delta^{+w} (s,t) = \begin{cases}
w \, & s \geq t \, , \\
\infty \, & s < t \, .   
\end{cases} 
\end{align}
This allows us to write   $f(s,t) + w =  f \conv \delta^{+w} (s,t)$. 
Note that $\delta^{+w}$ is not a causal function, i.e., $\delta^{+w}> 0$ for 
$t \le s$. 
Even though the convolution 
of bivariate functions is generally not commutative, we have  $f \conv \delta^{+w} =  \delta^{+w} \conv f$
 for a bivariate process $f$.
Lastly, 
a  service element offering a delay of $d\ge 0$ is 
given by a service process $\delta_d = \delta (s,t-d)$, 
such that $f(s,t-d) = f\conv \delta_d (s,t)$ for every $f \in \F$. 
	\begin{figure}
		\centering
		\includegraphics[scale=1]{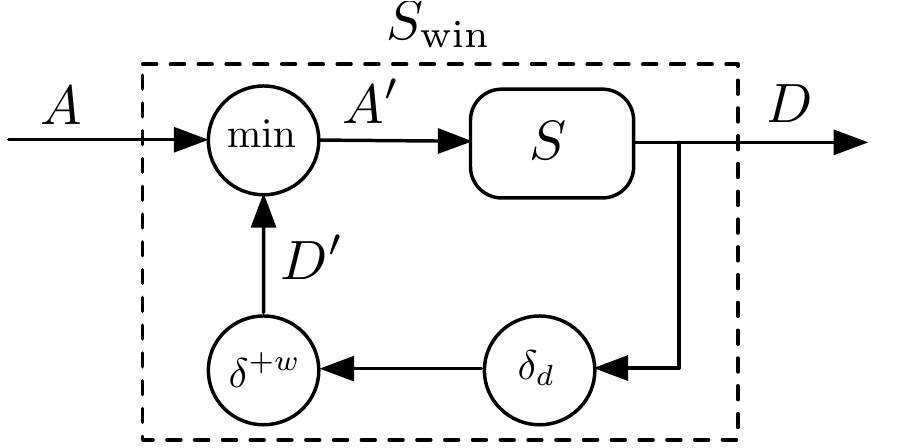}
		\caption{Model with flow control with window size $w >0$ 
and feedback delay $d \ge 0$.} 
		\label{fig:feedback-w-d}
	\end{figure}
With these definitions we rewrite Eq.~\eqref{eq:Aprime} as 
\begin{align} \label{eq:Aprime2}
A'  =   A  \minim  D  \conv \delta_d \conv \delta^{+w}  \,  .   
\end{align}
In Fig.~\ref{fig:feedback-w-d}, we illustrate the feedback system as a concatenation of service elements. 
By inserting Eq.~\eqref{eq:conv-A'} into Eq.~\eqref{eq:Aprime2} we obtain 
\begin{align} \label{eq:Aprime3}
A'  \ge   A  \minim  A' \conv S  \conv \delta_d \conv \delta^{+w}  \,  .   
\end{align}
Note that $A'$ satisfies the conditions of   Lemma~\ref{lemma:Chang-feedback} with $F = S  \conv \delta_d \conv \delta^{+w}$. Applying the lemma and inserting the result into Eq.~\eqref{eq:conv-A'} yields  
\[
D \geq A   \conv (S \conv  \delta_d \conv \delta^{+w})^*  \conv S\ .  
\]
Therefore, the  flow control system can be represented by an equivalent 
service process $S_{\rm win}$ given by  
\begin{align}
\label{eq:swin}
S_{\rm win}=  (S \conv  \delta_d \conv \delta^{+w})^* \conv S \ . 
\end{align}
If $S$ is an exact service process, so is $S_{\rm win}$.


\subsection{Challenges of stochastic feedback systems}

The analysis of a feedback system with a bivariate random service process $S$ 
requires the computation of the subadditive closure in Eq.~\eqref{eq:swin}.  
Let us consider for the moment a system with no feedback delay, that is, $d = 0$. 
Since $(S \conv  \delta^{+w}) \conv (S \conv  \delta^{+w}) = S \conv  S \conv \delta^{+2w}$, 
the computation of the subadditive closure involves the convolution of $S$ with itself. 
When the service process is subadditive, we have 
$S = S \conv S$, and therefore obtain 
$(S \conv  \delta^{+w})^* \conv S = S$. 
Hence, when a  subadditive process $S$ is the 
 service process of a closed-loop system without feedback delay, feedback has no impact on the overall service.   
 On the other hand, a service process with a nonzero feedback delay is 
generally not subadditive and, therefore, the convolution of such a process will not yield a trivial result. 
To see that the concatenation of $S$ and $\delta_{d}$ with $d >0$ is not subadditive,  we compute 
\begin{align*}
S\conv \delta_d (s,\tau)  +  S\conv \delta_d (\tau,t )  
& = S ( s , \tau - d) +  S ( \tau , t - d)  \\
& < S ( s , t - d) \\
& = S\conv \delta_d (s,t) \ . 
\end{align*}
To avoid trivial cases, we will henceforth consider feedback systems with $d >0$.

Writing the expression for the equivalent service process 
in Eq.~\eqref{eq:swin} as 
\begin{align}
\begin{split}
\label{eq:swin-w-d}
S_{\rm win} 
	&=\bigwedge_{n=0}^{\infty}\biggl((S\conv \delta_d\conv \delta^{+w} )^{(n)}\otimes S\biggl) \,   
\end{split}
\end{align} 
makes apparent the need for tools to analyze the distribution
of minima involving a bivariate random service process $S$.  
Note that the $n$-fold convolutions themselves
involve minima as well as sums of random processes.

Consider, for comparison, the same formula
in the case of an univariate service process $S(t)$. 
There, one can
exploit the commutativity of the convolution to obtain
for the $n$-th term in the minimum
$$
(S\conv \delta_d \conv \delta^{+w})^{(n)}\conv S (t)
= S^{(n+1)}\conv \delta^{+nw}\conv \delta_{nd}(t)
= S^{(n+1)}(t-nd) + nw\,.
$$
In particular, if $S$ itself is subadditive, then $S^{(n)}=S$,
the $n$-th term is $S(t-nd) + nw$,
and the entire minimum
reduces to $S_{\rm win}=S\conv S_o$, where 
$S_o(t)=w\left\lceil\frac td \right\rceil$. Since deterministic feedback systems 
can generally be expressed using univariate service processes, the computation of 
the subadditive closure of the service process is much simplified. 
However, for bivariate service processes, the convolution is not commutative, 
and 
we must compute the $n$-th term as
\begin{align}
\label{eq:swin-n}
(S\conv \delta_d \conv \delta^{+w})^{(n)}\conv S (s,t)
&=\min_{\tau_o\le \dots \le \tau_{n}}
\left\{
\sum_{i=1}^n
\Bigl( S(\tau_{i-1},\tau_i-d)\Bigr) + S(\tau_{n},t)
\right\} + nw\,,
\end{align}
where the minimum ranges over all non-decreasing
sequences $\tau_o,\dots,\tau_n$ with $\tau_o=s$ and $\tau_n\le t$.
One issue with this expression is that the number of terms grows rapidly with $n$ and 
$t-s$. The second difficulty is that the 
joint distribution of $S(\tau_{i-1},\tau_i-d)$ for $i=1, \ldots, n$ is not determined by the distribution of $S(s,t)$
alone, but requires information
on time correlations.
We now proceed to address these problems.

\subsection{Service processes of stochastic feedback systems}

Our first result provides an exact characterization of $\Swin$ for the special case $d=1$.

\bigskip
\begin{lemma} 
\label{lem:Swin-d1-exact}
Consider a feedback system as in Fig.~\ref{fig:feedback-w-d} 
with an additive service process 
\begin{align*}
S(s,t)& =\sum_{k=s}^{t-1} c_k\,,
\end{align*}
where $(c_k)_{k\ge 1}$
is an arbitrary sequence of non-negative random variables.
For $d=1$ and $w>0$, the equivalent service process
is given by
\begin{align*}
S_{\rm win}(s,t) = \sum_{k=s}^{t-1} \min \{c_k, w\}\,.
\end{align*}
\end{lemma}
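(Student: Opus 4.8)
The plan is to argue pathwise: the asserted identity contains no randomness, so it suffices to fix an arbitrary realization of $(c_k)_{k\ge 1}$ and to verify the equality as an identity between deterministic bivariate functions. Using Eq.~\eqref{eq:swin-w-d}, write $S_{\rm win}=\bigwedge_{n\ge 0}T_n$, where, specializing Eq.~\eqref{eq:swin-n} to $d=1$,
\[
T_n(s,t)=\min_{s=\tau_0\le\tau_1\le\cdots\le\tau_n\le t}\Bigl\{\sum_{i=1}^n S(\tau_{i-1},\tau_i-1)+S(\tau_n,t)\Bigr\}+nw\,,
\]
with $S(a,b)=\sum_{k=a}^{b-1}c_k$ and empty sums equal to $0$. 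I will prove $S_{\rm win}(s,t)=\sum_{k=s}^{t-1}\min\{c_k,w\}$ by establishing the two matching inequalities.

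For the lower bound, I fix $n\ge 0$ and an admissible sequence $\tau_0,\dots,\tau_n$ and bound $T_n$ term by term. The elementary estimate I will use is
\[
S(\tau_{i-1},\tau_i-1)+w\ \ge\ \sum_{k=\tau_{i-1}}^{\tau_i-1}\min\{c_k,w\}\,,
\]
which follows by replacing each $c_k\ge\min\{c_k,w\}$ in $S(\tau_{i-1},\tau_i-1)$ and noting that the spare summand $+w$ dominates the single extra term $\min\{c_{\tau_i-1},w\}\le w$ that the delay-$1$ shift leaves out; it also holds trivially when $\tau_i\in\{\tau_{i-1},\tau_{i-1}+1\}$. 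Summing over $i$ and adding $S(\tau_n,t)\ge\sum_{k=\tau_n}^{t-1}\min\{c_k,w\}$, the integer intervals $[\tau_{i-1},\tau_i)$, $i=1,\dots,n$, together with $[\tau_n,t)$, are pairwise disjoint with union $[s,t)$, so the bound collapses exactly to $\sum_{k=s}^{t-1}\min\{c_k,w\}$. Since this holds for every $n$ and every admissible sequence, $S_{\rm win}(s,t)\ge\sum_{k=s}^{t-1}\min\{c_k,w\}$.

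For the upper bound I exhibit one optimal configuration. Let $J=\{k:\ s\le k\le t-1,\ c_k>w\}$, set $n=|J|$, and pick $\tau_1<\cdots<\tau_n$ with $\{\tau_i-1:1\le i\le n\}=J$; then $s=\tau_0<\tau_1$ and $\tau_n\le t$, so the sequence is admissible. Because the break points are strictly increasing, the intervals $[\tau_{i-1},\tau_i-1)$ and $[\tau_n,t)$ partition $[s,t)$ with precisely the indices of $J$ removed, so this term equals $\sum_{k\in[s,t)\setminus J}c_k+n w=\sum_{k\notin J}\min\{c_k,w\}+\sum_{k\in J}\min\{c_k,w\}=\sum_{k=s}^{t-1}\min\{c_k,w\}$. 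Hence $S_{\rm win}(s,t)\le T_n(s,t)=\sum_{k=s}^{t-1}\min\{c_k,w\}$, and combining the two bounds completes the proof.

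I expect the only subtle point to be the interval bookkeeping forced by the shift: with $d=1$ the $n$ break points delete the indices $\tau_1-1,\dots,\tau_n-1$ rather than $\tau_1,\dots,\tau_n$, and the degenerate cases $\tau_i=\tau_{i-1}$, $\tau_1=s$, and $\tau_n=t$ must be handled consistently. The per-term estimate $S(\tau_{i-1},\tau_i-1)+w\ge\sum_{k=\tau_{i-1}}^{\tau_i-1}\min\{c_k,w\}$ is precisely what makes these cases uniform and removes any need to track which $c_k$ are ``cancelled'' by which copy of $w$.
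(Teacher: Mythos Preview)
Your proof is correct. The paper's argument is organized a little differently: rather than proving two inequalities, it first computes the $n=1$ term explicitly as $T_1(s,t)=S(s,t)-\max_{k\in[s,t)}c_k+w$, then asserts that for general $n$ the minimum in Eq.~\eqref{eq:swin-n} is achieved by replacing the $n$ largest values of $c_k$ on $[s,t)$ by $w$, and finally reads off that the overall minimum over $n$ is attained at $n=|J|$. Your route bypasses this intermediate characterization of each $T_n$: the per-interval estimate $S(\tau_{i-1},\tau_i-1)+w\ge\sum_{k=\tau_{i-1}}^{\tau_i-1}\min\{c_k,w\}$ turns the lower bound into a telescoping sum and absorbs the degenerate cases $\tau_i\in\{\tau_{i-1},\tau_{i-1}+1\}$ uniformly, which the paper's inductive sketch leaves implicit. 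Both proofs exhibit the same optimizing configuration for the upper bound. Your version is more self-contained and arguably more rigorous; the paper's version gives a cleaner structural picture of what the $n$-fold convolution does at each level.
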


\bigskip\noindent 
\begin{proof} Set $d=1$. For $n=1$, the convolution
in Eq.~\eqref{eq:swin-n} is given
by \begin{align*}
S\conv \delta_1\conv\delta^{+w}\conv S(s,t) 
&=\min_{s\le \tau\le t} 
\bigl\{ S(s,\tau-1) + S(\tau,t)\bigr\} + w\\
& = \min_{s\le \tau\le t} \biggl\{\sum_{k=s}^{\tau-2} c_k +\sum_{k=\tau}^{t-1}
c_k
\biggr\}+ w\\
&=S(s,t) -\max_{k\in [s,t)} c_k + w\,.
\end{align*}
Here, the convolution replaces the largest
value in the sum representing $S(s,t)$
by $w$. Likewise, for $n>1$, the minimum in
Eq.~\eqref{eq:swin-n} is obtained by replacing the $n$ 
largest values of $c_k$ on $[s,t)$
with $w$. In this case, the minimum in Eq.~\eqref{eq:swin-w-d} is attained for 
 $n=\#\{k\in [s,t)\,\mid\,c_k>w\}$.
\end{proof}

\bigskip

The following theorem shows that the
exact result from Lemma~\ref{lem:Swin-d1-exact}
provides a lower bound on $S_{\rm win}$
for $d>1$ in terms of the ratio $w/d$. The theorem also provides a complementary upper bound on $S_{\rm win}$.

\begin{theorem}
\label{thm:Swin-apriori}
Given a  service process
$S \in \F_o$. Let $S_{\rm win}$ be the equivalent service process
for a feedback system with parameters $d>0$ and $w>0$.
Then
\begin{equation}
\label{eq:Swin-apriori}
S'(s,t)\ \le \ S_{\rm win}(s,t)
\ \le \ \min\left \{S(s,t),\left\lceil\tfrac{t-s}{d}\right\rceil w
\right\} 
\,,\quad 0\le s\le t\,,
\end{equation}
where $S'$ is the equivalent service process (given by Eq.~\eqref{eq:swin-w-d}) 
with $d'=1$ and $w'=w/d$.
\end{theorem}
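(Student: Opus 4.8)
The plan is to prove the two halves of Eq.~\eqref{eq:Swin-apriori} independently, working throughout from the explicit expansion in Eq.~\eqref{eq:swin-n} together with the two elementary identities $f\conv\delta_d(s,t)=f(s,t-d)$ and $f\conv\delta^{+w}=f+w$. The upper bound is a short direct computation; the lower bound is the substantial part and I would handle it algebraically, the key point being a pointwise comparison between the $d$-fold convolution of the ``slow'' generator and the ``fast'' one.

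For the upper bound I would first note that the $n=0$ term of the minimum in Eq.~\eqref{eq:swin-w-d} is $\delta\conv S=S$, which already gives $\Swin\le S$. To get $\Swin(s,t)\le\lceil(t-s)/d\rceil w$, set $m=\lceil(t-s)/d\rceil$ and evaluate the $m$-th term of Eq.~\eqref{eq:swin-w-d} at the admissible non-decreasing sequence $\tau_i=\min\{s+id,\,t\}$, $i=0,\dots,m$. For this choice $\tau_i-d\le\tau_{i-1}$ (since $\tau_i-d\le s+(i-1)d$ and $\tau_i-d\le t$) and $\tau_m=t$; hence, because $S\in\F_o$ vanishes whenever its second argument does not exceed its first, every term $S(\tau_{i-1},\tau_i-d)$ and the trailing $S(\tau_m,t)$ equals $0$, so the $m$-th term is exactly $mw$. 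Taking the minimum over $n$ finishes this part.

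For the lower bound, set $H=S\conv\delta_d\conv\delta^{+w}$ and $G=S\conv\delta_1\conv\delta^{+w/d}$, so that $\Swin=\bigwedge_{n\ge 0}(H^{(n)}\conv S)$ and $S'=\bigwedge_{m\ge 0}(G^{(m)}\conv S)$. The two identities give $H(s,t)=S(s,t-d)+w$ and $G(a,b)=S(a,b-1)+w/d$, so $G^{(d)}(s,t)=\min\{\sum_{j=1}^{d}S(\rho_{j-1},\rho_j-1)\}+w$, the minimum running over non-decreasing sequences $\rho_0=s\le\rho_1\le\dots\le\rho_d=t$. The heart of the proof is the pointwise inequality $G^{(d)}\le H$ on $\{s\le t\}$, that is, $\min_\rho\sum_{j=1}^{d}S(\rho_{j-1},\rho_j-1)\le S(s,t-d)$, which I would establish by displaying one admissible $\rho$: if $t-s\ge d$, take $\rho_0=s$ and $\rho_j=t-d+j$ for $j=1,\dots,d$, so the first hop contributes exactly $S(s,t-d)$ while every later hop has $\rho_j-1=\rho_{j-1}$ and contributes $0$; if $t-s\le d-1$, take $\rho_j=\min\{s+j,\,t\}$, so every hop has $\rho_j-1\le\rho_{j-1}$ and contributes $0$ while $S(s,t-d)\ge 0$.

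Once $G^{(d)}\le H$ is in hand, associativity and monotonicity of $\conv$ give $G^{(nd)}=(G^{(d)})^{(n)}\le H^{(n)}$, hence $G^{(nd)}\conv S\le H^{(n)}\conv S$ for every $n$; since $S'\le G^{(nd)}\conv S$ by definition of the minimum, we get $S'\le H^{(n)}\conv S$ for all $n$ and therefore $S'\le\Swin$. I expect $G^{(d)}\le H$ to be the one genuinely delicate step: it is the quantitative statement that ``$d$ hops of delay $1$ with window $w/d$ are at least as generous as one hop of delay $d$ with window $w$'', and although it becomes transparent once one picks the ``burn the first hop to skip ahead to $t-d+1$, then idle'' sequence, one must check that sequence is admissible and dispose of the short-interval regime $t-s<d$ separately. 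Everything else is routine bookkeeping with the $\conv$/$\delta^{+w}$ algebra and the causality of $S$.
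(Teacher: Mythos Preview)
Your proposal is correct and follows essentially the same route as the paper. The paper's proof of the upper bound likewise selects the $n=0$ and $n=\lceil(t-s)/d\rceil$ terms; for the lower bound it first notes the identity $\delta_d\conv\delta^{+w}=(\delta_1\conv\delta^{+w'})^{(d)}$ and then observes that $H(s,t)=S(s,t-d)+w$ appears as one of the terms in the minimum defining $G^{(d)}(s,t)$, which is exactly your pointwise inequality $G^{(d)}\le H$ proved by exhibiting a specific admissible sequence --- so the arguments coincide, yours being the more explicit of the two (in particular, the paper does not spell out the short-interval case $t-s<d$ that you handle separately).
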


\bigskip

\begin{proof} For the lower bound, 
observe that \begin{align}
\delta_d \conv \delta^{+w} = (\delta_1 \conv \delta^{+w'})^{(d)}\,.
\end{align}
This implies
\begin{align*}
S\conv \delta_d\conv\delta^{+w} (s,t)
&\ge \min_{s=\tau_o\le \dots\le \tau_d=t}
\biggl\{ \sum_{i=1}^d \Bigl( S(\tau_{i-1},\tau_i-1)+w'\Bigr)
\biggr\} \\
&= \left(S\conv \delta_1\conv \delta^{+w'}\right)^{(d)} (s,t)\,.
\end{align*}
Indeed, the left-hand side appears as a term in 
the minimum on the right-hand
side. Therefore, the $n$-th term in the expression
for $S_{\rm win}$ in Eq.~\eqref{eq:swin-w-d} is bounded
from below by the $(nd)$-th term in the expression
for $S'$. 

For the upper bound, we simply use
just two terms from the minimum in Eq.~\eqref{eq:swin-w-d}
to bound $S_{\rm win}$, 
namely $n=0$ and $n= \left\lceil\tfrac{t-s}{d}\right\rceil$. 
\end{proof}

An implication of the theorem is that if $S$, $S_{\rm win}$, and $S_{\rm win}'$ 
have long-term  average rates $C$, $C_{\rm win}$, and $C_{\rm win}'$, then 
$$
C'_{\rm win}\ \le \ C_{\rm win} \ \le\ 
\min \left\{C, \frac w d \right\}\,.
$$

\bigskip 
Theorem~\ref{thm:Swin-apriori}
holds for general (deterministic or random) service processes.
The bounds on the average rates
depend only on the ratio $w/d$, not on
the values of the individual parameters.
Our main results, in Theorems~\ref{thm:Swin-VBR} and~\ref{thm:Swin-MMOO},
will strengthen the lower bound when $d>1$ for two 
important classes of additive service processes.
They rely on the next lemma,
which reduces the number of terms
that contribute to Eq.~\eqref{eq:swin-n} 
and limits the
range of the minimum in Eq.~\eqref{eq:swin-w-d}
to $n\le \left \lceil\frac{t-s}{d}
\right\rceil$. 

\begin{figure}[t!]
	\centering
	\includegraphics[scale=0.5]{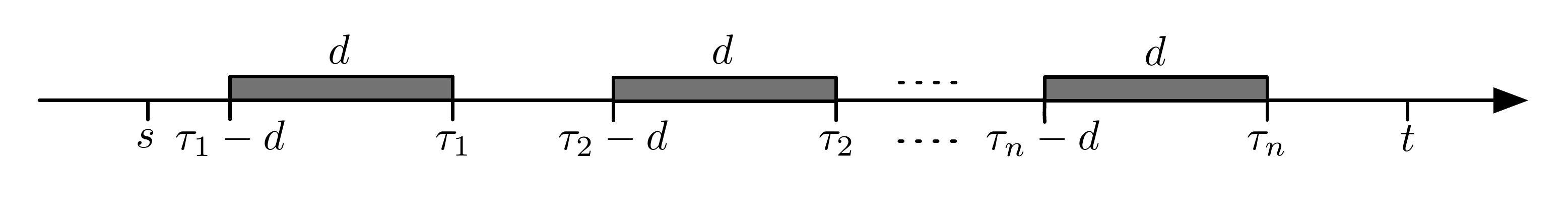}
	\caption{Geometric interpretation of the index set $C_n(s,t)$.}
	\label{fig:convForndelayIntervals}  
\end{figure}
\bigskip

\begin{lemma} 
\label{lem:Swin-giganticMin}
Given a feedback system with a  service process
$S \in \F_o$.
Then, for any choice of 
$d>0$ and $w>0$,
\begin{align} 
\label{eq:Swin-giganticMin}
S_{\rm win}(s,t) =
\bigwedge_{n=0}^{\left\lceil \frac {t-s}{d}\right \rceil}
\left\{ \min_{\tau_o,\dots, \tau_n\in C_n(s,t)}
\left(
\sum_{i=1}^{n}
S(\tau_{i-1},\tau_i-d)+S(\tau_n , t)\right)+ nw\right\} \,,
\end{align}
where the minimum in the braces ranges over
\begin{align*}
 C_n(s,t) = \left\{\tau_o, \tau_1, \ldots, \tau_n \, \Big\vert \, 
\tau_o=s, \tau_n\le t,\ \text{and}\ 
\tau_i - \tau_{i-1} \ge d \ \text{for}
\ i=1,\dots,n\right\}\, , 
\end{align*} 
if $nd\le t-s$. If $nd>t-s$, then $C_n(s,t)$ contains the single
sequence $\tau_i =(s+id)\wedge t$ for $i=0,\dots, n$.

\end{lemma}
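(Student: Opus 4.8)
The plan is to start from the general formula for $S_{\rm win}$ in Eq.~\eqref{eq:swin-w-d}, namely $S_{\rm win}(s,t)=\bigwedge_{n=0}^\infty\bigl((S\conv\delta_d\conv\delta^{+w})^{(n)}\conv S\bigr)(s,t)$, and show two things: that each inner $n$-th term can be restricted to the index set $C_n(s,t)$ without changing its value, and that the outer minimum can be truncated at $n=\lceil(t-s)/d\rceil$. First I would unfold the $n$-fold convolution explicitly. Using associativity and the identity $f\conv\delta^{+w}=\delta^{+w}\conv f$ for bivariate processes (stated in Subsec.~\ref{subsec:model}), one pulls all $d$ copies of $\delta^{+w}$ together to produce the additive constant $nw$, and all $d$ copies of $\delta_d$ as shifts, giving exactly Eq.~\eqref{eq:swin-n}: the $n$-th term is $\min_{\tau_o\le\cdots\le\tau_n}\{\sum_{i=1}^n S(\tau_{i-1},\tau_i-d)+S(\tau_n,t)\}+nw$ over non-decreasing sequences with $\tau_o=s$, $\tau_n\le t$. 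This is already done in the excerpt, so I may cite it.

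The core step is the reduction from arbitrary non-decreasing sequences to sequences in $C_n(s,t)$, i.e. those satisfying the spacing constraint $\tau_i-\tau_{i-1}\ge d$. The key observation is that $S\in\F_o$ is non-negative and non-decreasing in its second argument, so $S(\tau_{i-1},\tau_i-d)\ge 0$ always, and $S(\tau_{i-1},\tau_i-d)=0$ whenever $\tau_i-d\le\tau_{i-1}$, i.e. whenever the spacing constraint is violated. So suppose a minimizing sequence $(\tau_o,\dots,\tau_n)$ has some index $j$ with $\tau_j-\tau_{j-1}<d$; then the term $S(\tau_{j-1},\tau_j-d)$ contributes $0$ (interpreting $S(a,b)=0$ for $b\le a$, consistent with $\delta$ being the convolution identity). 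Dropping $\tau_j$ from the sequence gives a shorter sequence of length $n-1$; the sum loses a zero term and the remaining terms only decrease or stay equal by monotonicity of $S$ in the second argument (since $\tau_{j}$ only appears as a lower endpoint in $S(\tau_j,\tau_{j+1}-d)$, and replacing the lower endpoint $\tau_j$ by the smaller $\tau_{j-1}$ can only increase that term — so I must be careful: actually I should argue that the length-$(n-1)$ term with the new sequence is $\le$ the original length-$n$ value plus the fact that $nw\ge(n-1)w$). The cleanest framing: the value of the $n$-th term is always $\ge$ the value of the $(n-1)$-th term evaluated at the contracted sequence minus $w$ is wrong; instead I argue that if the optimal sequence for term $n$ has a violated gap, then term $n-1$ achieves a value at most (that of term $n$) $-\,w\le$ term $n$, so term $n$ is dominated in the outer minimum and can be ignored, or alternatively its minimizing sequence can be taken in $C_n$. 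Running this inductively shows the outer minimum is unchanged if each inner minimum is restricted to $C_n(s,t)$.

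For the truncation, once sequences lie in $C_n(s,t)$ we need $\tau_n=\sum(\tau_i-\tau_{i-1})+\tau_o\ge nd+s$, and since $\tau_n\le t$ this forces $nd\le t-s$, i.e. $n\le\lfloor(t-s)/d\rfloor$. For $n=\lceil(t-s)/d\rceil$ (the boundary case when $(t-s)/d$ is not an integer) the definition extends $C_n$ to the single degenerate sequence $\tau_i=(s+id)\wedge t$, which makes all the clamped terms $S(\tau_{i-1},\tau_i-d)$ vanish once $\tau_{i-1}$ reaches $t$, so this term equals $nw$ with $n=\lceil(t-s)/d\rceil$ — matching the upper bound $\lceil(t-s)/d\rceil w$ from Theorem~\ref{thm:Swin-apriori} and showing nothing is lost by stopping there; larger $n$ would only give $\ge nw$, hence never improve the minimum. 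Assembling the restricted inner minima and the truncated outer minimum yields Eq.~\eqref{eq:Swin-giganticMin}. The main obstacle I anticipate is the bookkeeping in the induction step: making the ``delete a redundant $\tau_j$'' argument airtight requires care about which terms have $\tau_j$ as an upper versus lower endpoint and about the boundary conventions $S(a,b):=0$ for $b\le a$ (so that $\delta_d$ composed past $t$ behaves correctly), plus confirming that the dropped term's $w$ in $nw$ is exactly what compensates so that term $n$ never beats term $n-1$ when a gap is violated. Everything else is monotonicity of $S$ and the algebra of $\delta_d$, $\delta^{+w}$ already set up in the paper.
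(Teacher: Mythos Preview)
Your deletion move has a genuine gap, and it is not just bookkeeping. Suppose the optimal sequence for the $n$-th term has $\tau_j-\tau_{j-1}<d$. Deleting $\tau_j$ replaces the two summands
\[
S(\tau_{j-1},\tau_j-d)+S(\tau_j,\tau_{j+1}-d)=0+S(\tau_j,\tau_{j+1}-d)
\]
by the single summand $S(\tau_{j-1},\tau_{j+1}-d)$. Since $\tau_{j-1}\le\tau_j$, this new summand is at least as large as the old one (service over a longer interval), and the increase $S(\tau_{j-1},\tau_{j+1}-d)-S(\tau_j,\tau_{j+1}-d)$ is not bounded by $w$ or by anything else. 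The saving of one $w$ does \emph{not} compensate, so the inequality ``term $n-1\le$ term $n-w$'' that you rely on simply fails. You flagged the endpoint issue but then restated the same unjustified inequality as the ``instead'' argument.

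The paper avoids comparing different values of $n$ altogether. It proves, for each fixed $n$ with $nd\le t-s$, that the unrestricted minimum in Eq.~\eqref{eq:swin-n} is already attained on $C_n(s,t)$. The induction writes the $(n{+}1)$-st term as
\[
\min_{s\le\tau\le t}\Bigl\{\bigl[(S\conv\delta_d\conv\delta^{+w})^{(n)}\conv S\bigr](s,\tau-d)+S(\tau,t)\Bigr\}+w\,,
\]
and observes that the bracket equals the constant $nw$ whenever $\tau-d\le s+nd$ (all inner $S$-terms can be made to vanish there). On that range the objective is $nw+S(\tau,t)$, which is non-increasing in $\tau$; hence the outer minimum is achieved at some $\tau\ge s+(n{+}1)d$, and together with the inductive hypothesis this places the full sequence in $C_{n+1}(s,t)$. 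In short, the correct elementary move is to \emph{push a too-close $\tau_j$ to the right} (which keeps the $j$-th $S$-term at zero and can only decrease the next one), not to delete it. Your truncation argument for $n>\lceil(t-s)/d\rceil$ is fine and matches the paper's.
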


\bigskip
The set $C_n(s,t)$ has a 
geometric interpretation,  illustrated in Fig.~\ref{fig:convForndelayIntervals}.
Each sequence $\tau_o,\dots,\tau_n$ in $C_n(s,t)$
corresponds to a collection
of $n$ disjoint subintervals of length $d$ in
$[s,t]$, given by $(\tau_i-d,\tau_i]$. 
On each of these subintervals, the original 
service process $S(\tau_i-d,\tau_i]$ is
interrupted by a delay of length $d$, followed by
an addition of $w$.  In the special case $d=1$,
the set $C_n(s,t)$ consists
precisely of the $n$-element subsets of $[s,t)$,
and we recover the statement of Lemma~\ref{lem:Swin-d1-exact}.
 
\bigskip

\begin{proof} 
We need to prove that only sequences
$\tau_o,\dots,  \tau_n$ in $C_n(s,t)$ contribute
to the minimum in Eq.~\eqref{eq:swin-n}.  
If $nd> t-s$, the minimum value is achieved by subdividing
$[s,t)$ into $n$ subintervals of length at most $d$,
so that all terms involving $S$ vanish.
Then the right-hand side of Eq.~\eqref{eq:swin-n}
equals $nw\ge \left\lceil\frac{t-s}{d}\right\rceil w$.

For $nd\le t-s$, we proceed by induction over $n$.
The $n=0$ term in both Eq.~\eqref{eq:swin-w-d} and
Eq.~\eqref{eq:Swin-giganticMin} is given  by $S(s,t)$ and there
is nothing to show. The $n=1$ term
equals
\begin{align}
\label{eq:convFor2delayIntervals}
S\conv \delta_d\conv \delta^{+w}\conv S(s,t)
&= \min_{s\le\tau\le t} \bigl\{
S(s,\tau-d) + S(\tau, t)\bigr\} + w\,.
\end{align}
The key observation is that the expression in the braces is 
non-increasing in $\tau$ for $s\le\tau\le d$,
since the first term vanishes while
the second term is always non-increasing. 
Therefore, the minimum is achieved for some $\tau\ge s+d$.
This means that the sequence $\tau_o=s, \tau_1=\tau$ lies in $C_1(s,t)$.

Now consider $n\ge 1$. Suppose we already know that
\begin{align*}
(S\conv \delta_d \conv \delta^{+w})^{(n)}\conv S (s,t)
= \min_{\tau_o,\dots,\tau_n\in C_n(s,t)}
\sum_{i=0}^n
\Bigl(S(\tau_{i-1},\tau_i-d)+S(\tau_{n}, t)\Bigr)+ nw
\end{align*}
for all $t\ge s$. Using the associativity of the convolution,
we write
\begin{align*}
(S\conv \delta_d \conv \delta^{+w})^{(n+1)}\conv S
&=
\Bigl[(S\conv \delta_d \conv \delta^{+w})^{(n)}\conv S
\Bigr] \conv \delta_d\conv \delta^{+w}\conv S\,,
\end{align*}
and then expand
\begin{align*}
(S\conv \delta_d \conv \delta^{+w})^{(n+1)}\conv S(s,t)
&= \min_{s\le \tau\le t}
\left\{
\Bigl[(S\conv \delta_d \conv \delta^{+w})^{(n)}\conv S
\Bigr] (s,\tau-d) + S(\tau,t) \right\} +w\,.
\end{align*}
For $\tau-d\le s+nd$, the
term in the square brackets takes
the constant value $nw$, while the second summand
is non-increasing in $\tau$. Therefore,
the minimum occurs at some $\tau\ge s+(n+1)d$.
By the inductive assumption,
\begin{align*}
\Bigl[(S\conv \delta_d \conv \delta^{+w})^{(n)}\conv S
\Bigr] (s,\tau-d) & = \min_{\tau_o,\dots, \tau_n\in C_n(s,\tau-d)}
\left\{
\sum_{i=1}^n 
\Bigl(S(\tau_{i-1},\tau_i-d)+S(\tau_n , \tau-d)\Bigr)+ nw\right\} \,.
\end{align*}
Since $\tau_o,\dots, \tau_n,\tau\in C_{n+1}(s,t)$,
the claim is proved.
\end{proof}

\section{Variable Bit Rate  Service with Feedback}
\label{sec:cbrvbr}
We next  apply the results from the previous section to a 
specific random service process in a feedback system, 
consisting of a work-conserving  FIFO buffer with a random time-variable service rate, 
which we refer to as variable bit rate (VBR) server. 
The feedback mechanism is as described earlier with window size $w >0$ and feedback delay  $d >0$. 
The VBR server 	offers the service process $S(s,t)=\sum_{k=s}^{t-1} c_k$, 
where $c_k$ is a random amount of available service 
in the $k$-th  time slot.  We assume that the 
$c_k$'s are independent and identically distributed (i.i.d.) random 
variables, with moment-generating functions  $M_c(\theta) = E[e^{\theta c_k}]$. 
The moment-generating function of $S$ is $M_S(\theta,s,t) = \left(M_c(\theta)\right)^{t-s}$, and 
the effective capacity has the 
simple expression $\gamma_S(-\theta)=-\frac{1}{\theta}\log M_c(-\theta)$.

\subsection{Bounds for a VBR service with feedback}
\label{subsec:vbr}

We now derive bounds on the
equivalent service process  $S_{\rm win}$ of a VBR server in the 
feedback system of  Fig.~\ref{fig:feedback-w-d}. 
The bounds will be expressed either in terms of a statistical service curve for 
a chosen violation probability $\eps>0$ (using Eq.~\eqref{eq:Srvc-probbound}) 
or in terms of the effective capacity (from   Eq.~\eqref{eq:Eff-capacity}), 
both of which require bounds on the moment-generating function of $S_{\rm win}$, denoted by 
$M_{S_{\rm win}}$.   
We use the expression  for $\Swin$ given in Eq.~\eqref{eq:swin-w-d} 
as the starting point for the computation of $M_{S_{\rm win}}$, where we  exploit the following  
relationship of  moment-generating functions. 
	\begin{lemma} \label{lemma:mgfofmin}
		Given two  bivariate random processes $f$ and $g$. Then, for every $\theta>0$,  
		\begin{align*}
		M_{f\minim g}(-\theta,s,t)\leq M_{f}(-\theta,s,t)+M_{g}(-\theta,s,t) \, . 
		\end{align*}
Note that this lemma does not require  $f$ and $g$ to be independent. 
	\end{lemma}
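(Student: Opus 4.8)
The plan is to bound the moment-generating function of $f \minim g$ directly from the definition, using the elementary observation that on any sample path the minimum is one of the two values, so $e^{-\theta (f\minim g)(s,t)}$ equals either $e^{-\theta f(s,t)}$ or $e^{-\theta g(s,t)}$, depending on the outcome. First I would fix $\theta > 0$ and $s \le t$, and partition the probability space into the event $E = \{f(s,t) \le g(s,t)\}$ and its complement $E^c$. On $E$ we have $(f\minim g)(s,t) = f(s,t)$, and on $E^c$ we have $(f\minim g)(s,t) = g(s,t)$.

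Next I would write, using the indicator functions $\mathbf{1}_E$ and $\mathbf{1}_{E^c}$,
\begin{align*}
M_{f\minim g}(-\theta,s,t)
&= \E\bigl[e^{-\theta (f\minim g)(s,t)}\bigr] \\
&= \E\bigl[e^{-\theta f(s,t)}\mathbf{1}_E\bigr] + \E\bigl[e^{-\theta g(s,t)}\mathbf{1}_{E^c}\bigr]\,.
\end{align*}
Since $\theta > 0$ and $f, g \ge 0$ (they lie in $\F$), the integrands $e^{-\theta f(s,t)}$ and $e^{-\theta g(s,t)}$ are non-negative (indeed bounded by $1$), so dropping the indicators only increases each term:
\begin{align*}
\E\bigl[e^{-\theta f(s,t)}\mathbf{1}_E\bigr] \le \E\bigl[e^{-\theta f(s,t)}\bigr] = M_f(-\theta,s,t)\,,
\end{align*}
and likewise $\E\bigl[e^{-\theta g(s,t)}\mathbf{1}_{E^c}\bigr] \le M_g(-\theta,s,t)$. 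Adding the two inequalities yields the claim. I would also remark explicitly that this argument never used independence of $f$ and $g$, which is the point the lemma emphasizes; the only structural facts used are $\theta > 0$ and non-negativity of the exponentials, the latter holding even without the assumption $f,g\ge 0$ since $e^{-\theta x}\le 1$ is not needed — only $e^{-\theta x}\ge 0$.

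There is no real obstacle here; the proof is a one-line case split followed by monotonicity of expectation. The only point requiring a modicum of care is making sure the two events form a genuine partition (so that the expectation splits exactly as a sum with no double counting and no gap) — using $\{f \le g\}$ and $\{f > g\}$ handles the boundary $f = g$ unambiguously. If one wanted an even slicker phrasing, one could note $e^{-\theta(f\minim g)} = \max\{e^{-\theta f}, e^{-\theta g}\} \le e^{-\theta f} + e^{-\theta g}$ pointwise (valid because both terms are non-negative), and take expectations; this avoids indicators entirely. I would present the indicator version since it makes the "no independence needed" remark most transparent.
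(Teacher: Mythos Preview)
Your proof is correct. In fact, the ``slicker phrasing'' you mention at the end --- writing $e^{-\theta(f\minim g)} = \max\{e^{-\theta f}, e^{-\theta g}\} \le e^{-\theta f} + e^{-\theta g}$ pointwise and taking expectations --- is exactly the paper's proof; your indicator version is an equivalent unpacking of that same inequality.
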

	\begin{proof}
		\begin{align*}
			M_{f\minim g}(-\theta,s,t)&=E\big[e^{-\theta\min{\{f(s,t),g(s,t)\}}}\big]\\
			&=E\big[\max{\{e^{-\theta f(s,t)},e^{-\theta g(s,t)}\}}\big]\\
			&\leq E\big[e^{-\theta f(s,t)}+e^{-\theta g(s,t)}\big]\\
			&=M_{f}(-\theta,s,t)+M_{g}(-\theta,s,t) \, . 
		\end{align*}
	\end{proof}

Since the service guaranteed by $S$ is independent 
on disjoint intervals, 
that is $S (t_1, t_2)$ and $S(t_3, t_4)$ are independent for 
$t_1 < t_2 \le t_3 < t_4$, 
standard techniques for bounding
the moment-generating functions of
sums and convolutions
can be applied directly to Eq.~\eqref{eq:swin-w-d}
to obtain, for $\theta>0$,
\begin{align} 
\label{eq:Swin-with-fidlersbound}
\begin{split} 
  M_{S_{\rm win}}(-\theta,s,t) 
& \le \sum_{n=0}^\infty 
\left\{ \sum_{s=\tau_o\le \ldots \le \tau_n\le t}
\left(
\prod_{i=1}^n
M_S(-\theta,\tau_{i-1},\tau_i-d) \right)\cdot M_S(-\theta,
\tau_n,t)e^{-\theta n w}\right\}\\
 & \le \left(M_c(-\theta)\right)^{t-s}
\sum_{n=0}^\infty \dbinom{t-s+1+n}{n}
\left((M_c(-\theta))^{-d} e^{-\theta w}\right)^n\\
& =
\frac{\left(M_c(-\theta)\right)^{t-s}}{\left(
1-(M_c(-\theta))^{-d} e^{-\theta w}\right)^{t-s+2}}\,,
\end{split}
\end{align}
as long as the convergence condition
$$
M_c(-\theta)^{-d}e^{-\theta w}<1 
$$ 
holds. 
The first line follows
by Lemma~\ref{lemma:mgfofmin}, and  the independence of the service
on disjoint intervals in each summand.
In the second line, $M_S$ is expressed in terms of $M_c$;
the binomial coefficient counts
the number of non-decreasing sequences
$\tau_o\le \ldots \le \tau_n$
with $\tau_o=s$ and $\tau_n\le t$.
The last line follows from the identity
$$
\sum_{n=0}^\infty \dbinom{t+n}{n} x^n = \frac{1}{(1-x)^{t+1}}\,,\quad (|x|<1)\,.
$$

Eq.~\eqref{eq:Swin-with-fidlersbound} provides a useful bound only 
when $\theta$ is chosen so that
the convergence condition is satisfied, and the
right-hand side of Eq.~\eqref{eq:Swin-with-fidlersbound}
is less than one.
We now improve the bound with the help of Lemma~\ref{lem:Swin-giganticMin}. 

\bigskip 
\begin{theorem} 
\label{thm:Swin-VBR}
Let $S(s,t)$ be a VBR server with feedback as described above.
Then, for every $\theta>0$,
\begin{align}
\label{eq:Swin-VBR}
M_{S_{\rm win}}(-\theta,s,  t) \le
\left( \bigl(M_c(-\theta)\bigr)^d 
+ d e^{-\theta w}\right)^{\left\lfloor\frac{t-s}{d}
\right\rfloor}\,.
\end{align}
\end{theorem}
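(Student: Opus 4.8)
The plan is to start from the sharpened representation of $S_{\rm win}$ in Lemma~\ref{lem:Swin-giganticMin}, which already restricts the outer minimum to $0\le n\le\lceil (t-s)/d\rceil$ and the inner minimum to sequences in $C_n(s,t)$. Applying Lemma~\ref{lemma:mgfofmin} to the outer minimum over $n$, and then bounding the expectation of $e^{-\theta(\cdot)}$ of each inner minimum by the \emph{sum} over $\tau_o,\dots,\tau_n\in C_n(s,t)$ of the corresponding exponential moments, we obtain
\begin{align*}
M_{S_{\rm win}}(-\theta,s,t)\ \le\ \sum_{n=0}^{\lceil(t-s)/d\rceil}
\ \sum_{(\tau_i)\in C_n(s,t)}
\Bigl(\prod_{i=1}^n M_S(-\theta,\tau_{i-1},\tau_i-d)\Bigr)\,M_S(-\theta,\tau_n,t)\,e^{-\theta nw}\,.
\end{align*}
Here I use crucially that the $n$ ``delay gaps'' $(\tau_{i-1},\tau_i-d]$ together with the final piece $(\tau_n,t]$ are pairwise disjoint subintervals of $[s,t)$ — this is exactly the geometric content of $C_n(s,t)$ — so the increments of the VBR process on them are independent, and the MGF of the sum factors. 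Writing $M_c=M_c(-\theta)$ and noting that the total length of the gap intervals plus the final interval is $t-s-nd$, each product collapses to $M_c^{\,t-s-nd}$, which is independent of the particular choice of $(\tau_i)$.

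The next step is to count the sequences in $C_n(s,t)$ and reorganize the sum so the geometric-series trick of Eq.~\eqref{eq:Swin-with-fidlersbound} no longer produces the problematic $(t-s+2)$-exponent. The idea is to group the $d$ consecutive time slots into $\lfloor(t-s)/d\rfloor$ blocks and observe that a valid configuration in $C_n(s,t)$ amounts to choosing which blocks carry a delay-and-$+w$ insertion: each of the $n$ subintervals of length $d$ can be placed, and within the combinatorics each ``inserted'' slot contributes a factor $e^{-\theta w}$ while each ``non-inserted'' block of $d$ slots contributes $M_c^{\,d}$. Carefully accounting for the fact that a subinterval of length $d$ may be positioned in $d$ different ways relative to a block boundary yields the factor $d\,e^{-\theta w}$ per inserted block, and summing over all choices of which blocks are inserted gives the binomial expansion
\begin{align*}
M_{S_{\rm win}}(-\theta,s,t)\ \le\ \sum_{n=0}^{\lfloor(t-s)/d\rfloor}
\binom{\lfloor(t-s)/d\rfloor}{n}\bigl(M_c^{\,d}\bigr)^{\lfloor(t-s)/d\rfloor-n}\bigl(d\,e^{-\theta w}\bigr)^{n}
=\Bigl(M_c^{\,d}+d\,e^{-\theta w}\Bigr)^{\lfloor(t-s)/d\rfloor}\,,
\end{align*}
which is the claimed bound; no convergence condition is needed since the sum is finite.

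The main obstacle I anticipate is making the combinatorial reindexing in the second step rigorous: one must show that summing $\prod M_S\cdot e^{-\theta nw}$ over $C_n(s,t)$, for all $n$ simultaneously, is bounded by the product over the $\lfloor(t-s)/d\rfloor$ blocks of $\bigl(M_c^{\,d}+d\,e^{-\theta w}\bigr)$, correctly tracking the ``slack'' slots when $d\nmid(t-s)$ and the edge case $nd>t-s$ where $C_n(s,t)$ degenerates to a single sequence with all $S$-terms vanishing. A clean way to do this is an induction on the number of blocks, peeling off the first $d$ slots and conditioning on whether the first delay-insertion falls inside that block: the terms where it does contribute at most $d\,e^{-\theta w}$ times the bound for the remaining blocks, and the terms where it does not contribute $M_c^{\,d}$ times that same bound — this reproduces the recursion $x_{m+1}\le (M_c^{\,d}+d\,e^{-\theta w})\,x_m$ with $x_0=1$ and gives the result directly.
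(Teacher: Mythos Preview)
Your proposal follows the paper's proof in its overall architecture: both start from Lemma~\ref{lem:Swin-giganticMin}, apply Lemma~\ref{lemma:mgfofmin} together with the independence of the VBR increments on the disjoint subintervals singled out by $C_n(s,t)$, observe that every summand collapses to $M_c(-\theta)^{\,t-s-nd}e^{-\theta nw}$, and then control $|C_n(s,t)|$. The paper, however, carries out the last step more directly than your block-peeling induction. First, it dispatches the case $d\nmid t-s$ in one line by using that $S_{\rm win}(s,t)$ is non-decreasing in $t$, so one may replace $t$ by $s+Nd$ with $N=\lfloor(t-s)/d\rfloor$; this removes the ``slack-slot'' obstacle you flagged without any extra work. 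Second, with $t-s=Nd$ it bounds the count via the explicit inequality
\[
|C_n(s,t)|=\binom{(N-n)d+n}{n}=\prod_{j=1}^n\frac{(N-n)d+j}{j}\ \le\ \prod_{j=1}^n\frac{(N-n)d+dj}{j}=d^{\,n}\binom{N}{n}\,,
\]
after which the binomial theorem gives $(M_c(-\theta)^d+de^{-\theta w})^N$ at once. Your induction encodes the same heuristic---each delay interval admits about $d$ positions relative to a block boundary---but as written it does not close cleanly: if the first delay interval has its left endpoint at slot $s+k$ with $0\le k<d$, the residual problem lives on $[s+d+k,t)$, whose length $(N-1)d-k$ is not a multiple of $d$, so the recursion $x_N\le(M_c^d+de^{-\theta w})\,x_{N-1}$ is not immediate. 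The paper's monotonicity reduction plus the one-line combinatorial bound sidestep this entirely.
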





\begin{proof}  Fix $\theta>0$. Since $S_{\rm win}(s,t)$ is non-decreasing
in $t$, we can round down the length
of the time interval to the nearest integer multiple of $d$.
Thus, it suffices to consider the case where $t-s=Nd$
for some integer $N$.
By Lemma~\ref{lem:Swin-giganticMin},
\begin{align*}
S_{\rm win}(s,t) =
\bigwedge_{n=0}^N
\left\{ \min_{\tau_o,\dots, \tau_n\in C_n(s,t)}
\sum_{i=1}^{n}
\Bigl(S(\tau_{i-1},\tau_i-d)+S(\tau_n , t)\Bigr)+ nw\right\} \,.
\end{align*}
By Lemma~\ref{lemma:mgfofmin}
and using that $S$ is independent on disjoint time intervals,
we estimate
\begin{align*}
M_{\Swin} (-\theta,s,t) 
&\le \sum_{n=0}^N \left\{
\sum_{\tau_o,\dots, \tau_n\in C_n(s,t)}
\left(\prod_{i=1}^n M_S(-\theta,\tau_{i-1},\tau_i-d)
\right) M_S(-\theta, \tau_n, t)e^{-\theta n w}\right\} \, . 
\end{align*}
The number of sequences in $C_n(s,t)$ is bounded by
$$
|C_n(s,t)|=
\dbinom{(N-n)d + n}{n} = 
\prod_{j=1}^n \frac{(N-n)d+j}{j}
\le \prod_{j=1}^n \frac{(N-n)d+dj}{j}
= d^n \dbinom{N}{n}\,.
$$
Expressing $M_S$ in terms of $M_c$, we arrive at
\begin{align*}
M_{S_{\rm win}}(-\theta,s,t) 
&\le \sum_{n=0}^N 
|C_n(s,t)| \, \bigl(M_c(-\theta)\bigr)^{(N-n)d} e^{-\theta n w}\\
&\le \sum_{n=0}^N \dbinom {N}{n}
\bigl((M_c(-\theta))^d\bigr)^{N-n} 
\bigl(d e^{-\theta w}\bigr)^n\\
&= \bigl((M_c(-\theta))^d + 
d e^{-\theta w}\bigr)^N \,.
\end{align*}
Since $N=\frac{t-s}{d}$, the claim is proved.
\end{proof}

\bigskip
Theorem~\ref{thm:Swin-VBR} (as well as Eq.~\eqref{eq:Swin-with-fidlersbound}) can be directly inserted into expressions of the MGF calculus, e.g., for the backlog expression in 
Eq.~\eqref{eq:backlog-MGF}. Further, via Eq.~\eqref{eq:Srvc-probbound}, the theorem also provides a statistical service curve for $\Swin$, which we will denote as $\S_{\rm win}^\eps$.  
By taking the logarithm of $M_\Swin$, we can obtain bounds on the effective capacity $\gamma_{\rm win}$,    
as expressed in 
this  corollary. 
\begin{corollary}\label{cor:Swin-VBR}
Lower bounds on the effective capacity $\gamma_{\rm win}$ of a VBR process 
$S(s,t)$ with feedback  are given for $\theta >0$ by 
\begin{align}
\label{eq:effC-VBR-unimproved}
  \text{a)} \quad & \gamma_{\rm win}(-\theta) \ge 
\gamma_S(-\theta) +
\frac{1}{\theta} \log \left(1-e^{\theta (d \gamma_S(-\theta)-w)} \right)\,, \\
\label{eq:effC-VBR}
  \text{b)}\quad & 
 \gamma_{\rm win} (-\theta)  \ge 
 \gamma_S(-\theta) 
-\frac{1}{d\theta}\log \left(
1+ d e^{\theta (d \gamma_S(-\theta) -w)}\right) \, . \qquad\qquad\qquad\qquad
\end{align}
\end{corollary}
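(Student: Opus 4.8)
The plan is to derive both bounds directly from the moment-generating function estimates already in hand, by the standard route: take logarithms, divide by $\theta t$ (or $\theta t/d$), and pass to the limit $t\to\infty$ using the definition of effective capacity in Eq.~\eqref{eq:Eff-capacity}. For part b), I would start from Theorem~\ref{thm:Swin-VBR}. Applying Eq.~\eqref{eq:Eff-capacity} to $S_{\rm win}$ over the interval $[0,t)$ gives
\[
\gamma_{\rm win}(-\theta) = \lim_{t\to\infty} -\frac{1}{\theta t}\log M_{S_{\rm win}}(-\theta,0,t)
\ \ge\ \lim_{t\to\infty} -\frac{1}{\theta t}\left\lfloor\tfrac{t}{d}\right\rfloor
\log\!\left(\bigl(M_c(-\theta)\bigr)^d + d e^{-\theta w}\right).
\]
Since $\lfloor t/d\rfloor / t \to 1/d$, the right-hand side equals $-\frac{1}{d\theta}\log\bigl((M_c(-\theta))^d + d e^{-\theta w}\bigr)$. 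Now substitute $M_c(-\theta) = e^{-\theta\gamma_S(-\theta)}$, which is just the rearrangement of the formula $\gamma_S(-\theta) = -\frac1\theta\log M_c(-\theta)$ recorded in Sec.~\ref{sec:cbrvbr}; then $(M_c(-\theta))^d = e^{-\theta d\gamma_S(-\theta)}$, and factoring this out of the logarithm,
\[
-\frac{1}{d\theta}\log\!\Bigl(e^{-\theta d\gamma_S(-\theta)}\bigl(1 + d\,e^{\theta(d\gamma_S(-\theta)-w)}\bigr)\Bigr)
= \gamma_S(-\theta) - \frac{1}{d\theta}\log\!\bigl(1 + d\,e^{\theta(d\gamma_S(-\theta)-w)}\bigr),
\]
which is exactly Eq.~\eqref{eq:effC-VBR}.

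For part a), I would run the same argument starting from Eq.~\eqref{eq:Swin-with-fidlersbound} instead. The right-hand side there is $\bigl(M_c(-\theta)\bigr)^{t}\big/\bigl(1 - (M_c(-\theta))^{-d}e^{-\theta w}\bigr)^{t+2}$ over $[0,t)$. Taking $-\frac1{\theta t}\log$ of this, the numerator contributes $-\frac1\theta\log M_c(-\theta) = \gamma_S(-\theta)$, and the denominator contributes $+\frac{t+2}{\theta t}\log\bigl(1 - (M_c(-\theta))^{-d}e^{-\theta w}\bigr)$, whose prefactor tends to $1/\theta$. Writing $(M_c(-\theta))^{-d} = e^{\theta d\gamma_S(-\theta)}$ turns the bracket into $1 - e^{\theta(d\gamma_S(-\theta)-w)}$, giving Eq.~\eqref{eq:effC-VBR-unimproved}. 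One should note that this bound is only meaningful when the convergence condition $M_c(-\theta)^{-d}e^{-\theta w} < 1$ of Eq.~\eqref{eq:Swin-with-fidlersbound} holds, i.e.\ when $d\gamma_S(-\theta) < w$ (equivalently $\theta(d\gamma_S(-\theta)-w)<0$), so that the logarithm is of a positive number; the statement should be read with that proviso, and in fact the same condition makes the argument of the logarithm in part b) unproblematic as well.

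The main thing to be careful about is not an obstacle so much as a bookkeeping point: one must confirm that the $\limsup$/$\liminf$ subtleties do not bite, i.e.\ that $\gamma_{\rm win}(-\theta)$ exists as a genuine limit (or, if only the $\liminf$ is known to exist, that the inequalities are still the intended ones). Since the MGF bounds in Theorem~\ref{thm:Swin-VBR} and Eq.~\eqref{eq:Swin-with-fidlersbound} are of the clean form $(\text{const})^{\Theta(t)}$, the $-\frac1{\theta t}\log$ of each converges, so the lower bound on the $\liminf$ — which is all that Eq.~\eqref{eq:Eff-capacity} requires once we interpret it as a limit inferior — goes through cleanly; if a two-sided limit for $\gamma_{\rm win}$ is desired, it can be obtained by pairing these upper bounds on $M_{S_{\rm win}}$ with the trivial lower bound $M_{S_{\rm win}}(-\theta,0,t)\ge M_S(-\theta,0,t) \cdot$(something)\,, but for the corollary as stated only the displayed lower bounds are needed. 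A secondary point worth a sentence in the write-up is the direction of monotonicity of $\log$: since $M_c(-\theta)\in(0,1]$ and the bracketed quantities are positive, all logarithms are of arguments in $(0,1]$ or near $1$, and no sign errors arise when the inequality $\ge$ is carried through the $-\frac1{\theta t}\log(\cdot)$ operation (which reverses inequalities).
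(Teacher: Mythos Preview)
Your proposal is correct and follows exactly the route the paper takes: the paper simply states that Eq.~\eqref{eq:effC-VBR-unimproved} ``follows from Eq.~\eqref{eq:Swin-with-fidlersbound}'' (subject to the convergence condition $\gamma_S(-\theta)<w/d$) and that Eq.~\eqref{eq:effC-VBR} ``follows immediately from Theorem~\ref{thm:Swin-VBR},'' which is precisely the take-logs-divide-by-$\theta t$-and-pass-to-the-limit argument you spell out. Your additional remarks on the $\liminf$ and on the sign bookkeeping are more careful than what the paper records, but not different in substance.
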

The first bound follows from Eq.~\eqref{eq:Swin-with-fidlersbound}, and requires the convergence condition
$\gamma_S(-\theta) < \frac{w}{d}$. 
The second bound follows immediately from Theorem~\ref{thm:Swin-VBR}. 
Eqs.~\eqref{eq:effC-VBR-unimproved} and~\eqref{eq:effC-VBR} 
clearly express the service impediment due to the feedback process, by subtracting a positive term from the available service 
without feedback. 
In Subsec.~\ref{subsec:eval-VBR}, we present numerical examples that 
evaluate both bounds. 

All results in this section can be applied to a `leftover
service' model at a server offering
a constant-rate service in the presence of cross-traffic
with independent increments.
The leftover service expresses the service available to
a flow in terms of the capacity that is left unused 
by competing cross-traffic. The leftover model assumes that the 
analyzed traffic flow has lower priority than the cross-traffic. 
Explicitly, if the service rate is $C$ and cross-traffic arrivals are given by
$$
A^{\rm c}(s,t)=\sum_{k=s}^{t-1}a^{\rm c}_k\,,
$$
where the cross-traffic arrivals in each time slot are given by
an i.i.d. sequence of random variables $a^{\rm c}_k$,
then the leftover service $S^{\rm lo}$ available to the flow 
satisfies
$$
S^{\rm lo}(s,t)\ge \sum_{k=s}^{t-1} (C-a^{\rm c}_k)\,.
$$
Although the summands $C-a^{\rm c}_k$ may take negative values,
Theorems~\ref{thm:Swin-apriori} and~\ref{thm:Swin-VBR}
remain valid and provide non-trivial bounds on
the service process with feedback, as long
as the stability condition $E[a^{\rm c}_k]<C$ is satisfied.

\subsection{Quality of the VBR bounds}
\label{subsec:VBR-accuracy}

We next use Theorem~\ref{thm:Swin-apriori}  to address the accuracy of the derived lower bounds from Subsec.~\ref{subsec:vbr}.  
Applying the theorem to  the VBR server gives the  bounds
\begin{align}
\label{eq:Swin-VBR-w/d-x}
\sum_{k=s}^{t-1} \min
\Bigl\{c_k,\frac{w}{d}\Bigr\}\ \le  S_{\rm win}(s,t) 
\le  \min \Bigl\{
\sum_{k=s}^{t-1} c_k,\left\lceil\frac{t-s}{d}\right\rceil
w\Bigr\}\,.
\end{align}
Even though the upper bound is optimistic, its difference to 
the (lower) bounds on the service computed from 
Theorem~\ref{thm:Swin-VBR} 
limits the deviation from the true value of $\Swin$. 
Moreover, the difference between the upper and lower  bounds in Eq.~\eqref{eq:Swin-VBR-w/d-x} 
indicates the range of useful estimates. Note that the 
lower bound in Eq.~\eqref{eq:Swin-VBR-w/d-x} 
is exact for the special case $d=1$. In particular, 
any  lower bound that falls below 
this bound can be replaced by the simpler estimate 
of  Eq.~\eqref{eq:Swin-VBR-w/d-x}. 

Let us, for the moment, consider a deterministic server with $c_k \equiv C$. Then, 
the lower and upper bounds are essentially equivalent, giving 
$ S_{\rm win}(s,t) \approx \min \{C ,\frac{w}{d}\} (t-s)$. This corresponds to a 
well-known expression for the throughput of a window flow control system, e.g., \cite[Eq. 6.1]{Book-BeGa}. On the other hand, when $S$ is random, the upper and lower bounds have different
long-term average rates.

We will use Eq.~\eqref{eq:Swin-VBR-w/d-x} to obtain upper and lower bounds on $\S_{\rm win}^\eps$ and $\gamma_{\rm win}$. 
We obtain a lower bound on $\S_{\rm win}^\eps$ by taking the moment-generating 
function of the lower bound in Eq.~\eqref{eq:Swin-VBR-w/d-x}, resulting for 
$\theta >0$ in  
\begin{align}
\label{eq:Swin-VBR-w/d-x1}
M_{S_{\rm win}}(-\theta,s,t) 
\ \le \ \bigl(E[e^{-\theta\min\{c_k,\frac{w}{d} \}}]\bigr)^{t-s}\,. 
\end{align}
Note the change of  direction  of the inequality since we  
use $-\theta$ as a function argument.  Inserting this bound into 
Eq.~\eqref{eq:Srvc-probbound} provides a lower  bound on $\S_{\rm win}^\eps(s,t)$.

Obtaining an opposing upper bound on $\S_{\rm win}^\eps$ from 
 Eq.~\eqref{eq:Swin-VBR-w/d-x} is not as straightforward, since 
statistical 
service curves express lower bounds on the available service. 
We exploit that, for some VBR servers, it is possible to get the exact distribution of 
$S(s,t) = \sum_{k=s}^{t-1} c_k$, where $S$ corresponds  to the 
service process of the VBR system without feedback. For example, 
in  Subsec.~\ref{subsec:eval-VBR}, where we use an exponentially  
distributed $c_k$,  the service $ S(s,t)$ has an Erlang distribution. In such cases, since 
the upper bound in Eq.~\eqref{eq:Swin-VBR-w/d-x} implies $\Swin \le S$, the $\eps\,$--quantile   
of $S$ is an upper bound for any statistical service curve  $\S_{\rm win}^\eps$ 
of the VBR server with feedback. 
Thus, 
Eq.~\eqref{eq:Swin-VBR-w/d-x} gives 
\begin{align}
\label{eq:Swin-upperbound}
\S_{\rm win}^\eps(s,t) \le 
\min \bigl \{\eps\,\text{--quantile   
of } S (s,t),
\left\lceil\frac{t-s}{d}\right\rceil w\bigr\}\, ,  
\end{align}
where we also use the second (deterministic) term of the upper bound 
in Eq.~\eqref{eq:Swin-VBR-w/d-x}. 

Bounds for the  effective capacity  are 
directly obtained from  Eq.~\eqref{eq:Swin-VBR-w/d-x} by first computing moment-generating functions and then taking the logarithm.
\begin{corollary}\label{cor:apriori-VBR}
Under the assumptions of Theorem~\ref{thm:Swin-VBR}, $\gamma_{\rm win} (-\theta)$ 
is bounded for $\theta>0$ by 
\begin{align*}
-\frac{1}{\theta} 
\log E\bigl [e^{-\theta \min\left\{c_k,\frac w d\right\}}\bigr]
 \le \gamma_{\rm win}(-\theta)  \le 
\min \left\{
\gamma_S(-\theta), \frac{w}{d}\right\}\,.
\end{align*}
\end{corollary}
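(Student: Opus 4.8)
The plan is to read off the corollary directly from the a priori bounds of Theorem~\ref{thm:Swin-apriori} specialized to the VBR server, i.e.\ from Eq.~\eqref{eq:Swin-VBR-w/d-x}, and then pass to the effective-capacity limit of Eq.~\eqref{eq:Eff-capacity}. Throughout I fix $s=0$ and let $t\to\infty$, and I use the elementary fact that a pointwise inequality $X\le Y$ between random variables yields $M_X(-\theta,\cdot)\ge M_Y(-\theta,\cdot)$ for $\theta>0$, since multiplying the exponent by $-\theta<0$ reverses the inequality. Thus the sandwich $\sum_{k}\min\{c_k,w/d\}\le S_{\rm win}(0,t)\le\min\{S(0,t),\lceil t/d\rceil w\}$ turns into a two-sided estimate on $M_{S_{\rm win}}(-\theta,0,t)$, which after taking $-\frac{1}{\theta t}\log(\cdot)$ and letting $t\to\infty$ becomes the asserted two-sided bound on $\gamma_{\rm win}(-\theta)$.

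For the lower bound on $\gamma_{\rm win}$, I start from $S_{\rm win}(0,t)\ge\sum_{k=0}^{t-1}\min\{c_k,w/d\}$. Since the $c_k$ are i.i.d., so are the truncated variables $\min\{c_k,w/d\}$, and independence gives $M_{S_{\rm win}}(-\theta,0,t)\le\bigl(E[e^{-\theta\min\{c_k,w/d\}}]\bigr)^t$, which is exactly Eq.~\eqref{eq:Swin-VBR-w/d-x1} with $s=0$. Applying $-\frac{1}{\theta t}\log(\cdot)$ to both sides, the right-hand side collapses to the $t$-independent quantity $-\frac1\theta\log E[e^{-\theta\min\{c_k,w/d\}}]$; letting $t\to\infty$ yields $\gamma_{\rm win}(-\theta)\ge-\frac1\theta\log E[e^{-\theta\min\{c_k,w/d\}}]$.

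For the upper bound I use the two terms of the upper bound in Eq.~\eqref{eq:Swin-VBR-w/d-x} separately. From $S_{\rm win}(0,t)\le S(0,t)$ I get $M_{S_{\rm win}}(-\theta,0,t)\ge M_S(-\theta,0,t)=\bigl(M_c(-\theta)\bigr)^t$, hence $-\frac{1}{\theta t}\log M_{S_{\rm win}}(-\theta,0,t)\le-\frac1\theta\log M_c(-\theta)=\gamma_S(-\theta)$ for every $t$, so $\gamma_{\rm win}(-\theta)\le\gamma_S(-\theta)$ in the limit. From the deterministic bound $S_{\rm win}(0,t)\le\lceil t/d\rceil w$ I get $M_{S_{\rm win}}(-\theta,0,t)\ge e^{-\theta\lceil t/d\rceil w}$, hence $-\frac{1}{\theta t}\log M_{S_{\rm win}}(-\theta,0,t)\le\frac{w}{t}\lceil t/d\rceil\to w/d$. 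Combining the two gives $\gamma_{\rm win}(-\theta)\le\min\{\gamma_S(-\theta),w/d\}$, completing the proof.

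The one point that deserves care is that the definition of the effective capacity in Eq.~\eqref{eq:Eff-capacity} presupposes the limit exists. I expect this to be the only (minor) obstacle: one either invokes existence of $\gamma_{\rm win}$ as part of the VBR modeling assumption, or states the inequalities for $\liminf$ on the left and $\limsup$ on the right, which is all the argument above actually uses; the ceiling term is harmless since $\lceil t/d\rceil/t\to 1/d$. No convergence condition on $\theta$ is needed here, in contrast to Corollary~\ref{cor:Swin-VBR}a).
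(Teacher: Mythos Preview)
Your argument is correct and follows exactly the route the paper indicates: the paper's entire justification is the sentence ``Bounds for the effective capacity are directly obtained from Eq.~\eqref{eq:Swin-VBR-w/d-x} by first computing moment-generating functions and then taking the logarithm,'' and you have simply written out those steps in full. Your remark about existence of the limit versus $\liminf$/$\limsup$ is a valid caveat that the paper leaves implicit.
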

The lower bound 
becomes an equality when $d=1$.
The upper bound is strictly larger than the lower bound for all values of $\theta> 0$.
Most importantly, the upper bound becomes sharp if we take $d\to\infty$ while holding $w/d$ fixed.
In the deterministic case, with $c_k\equiv C$, 
the upper and lower bounds both equal 
$\min\{C,\frac w d \}$, in accordance with our previous discussion. 
Moreover, in the limit $\theta\to\infty$, 
the right-hand side of Eq.~\eqref{eq:effC-VBR} 
is asymptotic to the shown bounds.
Therefore, the bounds on the effective capacity are sharp in
the limit $\theta\to \infty$.

\subsection{Numerical evaluation of VBR bounds}
\label{subsec:eval-VBR}

We now present a numerical evaluation of our bounds for the VBR server with feedback. 
We consider a VBR server with an exponential distribution 
with 
moment-generating function and effective capacity given by 
\begin{align*}
M_S(-\theta,s,t)=(1+C\theta)^{-(t-s)}  \, , \
\gamma_S(-\theta)= \frac{1}{\theta} \log(1+C\theta)\,.
\end{align*} 
We select $C=1$~Mb for the available service in a time slot of length 1~ms, 
which gives an average service rate of 1~Gbps.


\begin{figure}[!t]
\centering
	\subfigure[$w/d= 100$~Mbps.]{
 \includegraphics[width=0.8\textwidth]{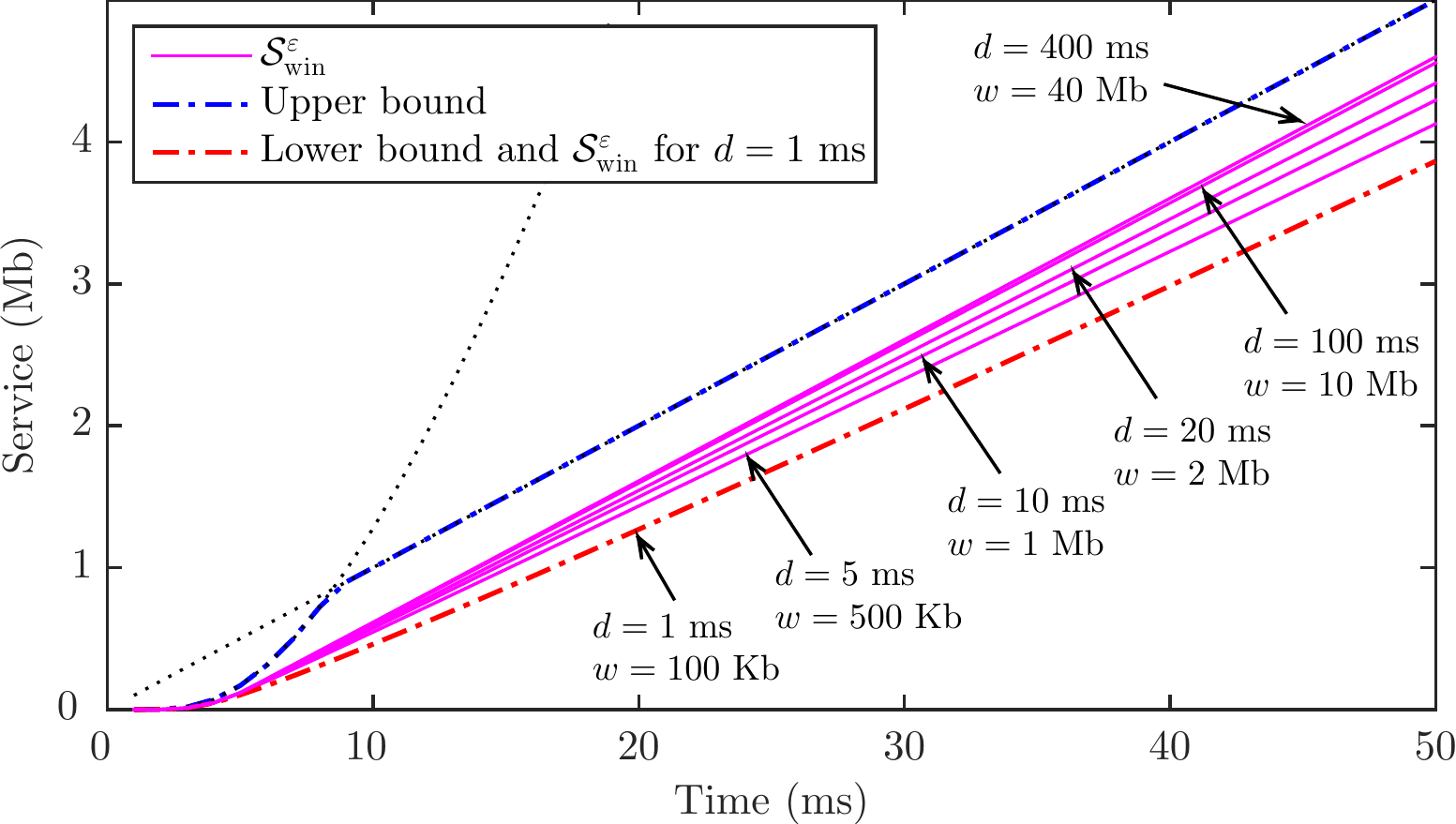}
		\label{fig:VBR-Swin-100}
	}
	
\centering
	\subfigure[$w/d= 500$~Mbps.]{
		\includegraphics[width=0.8\textwidth]{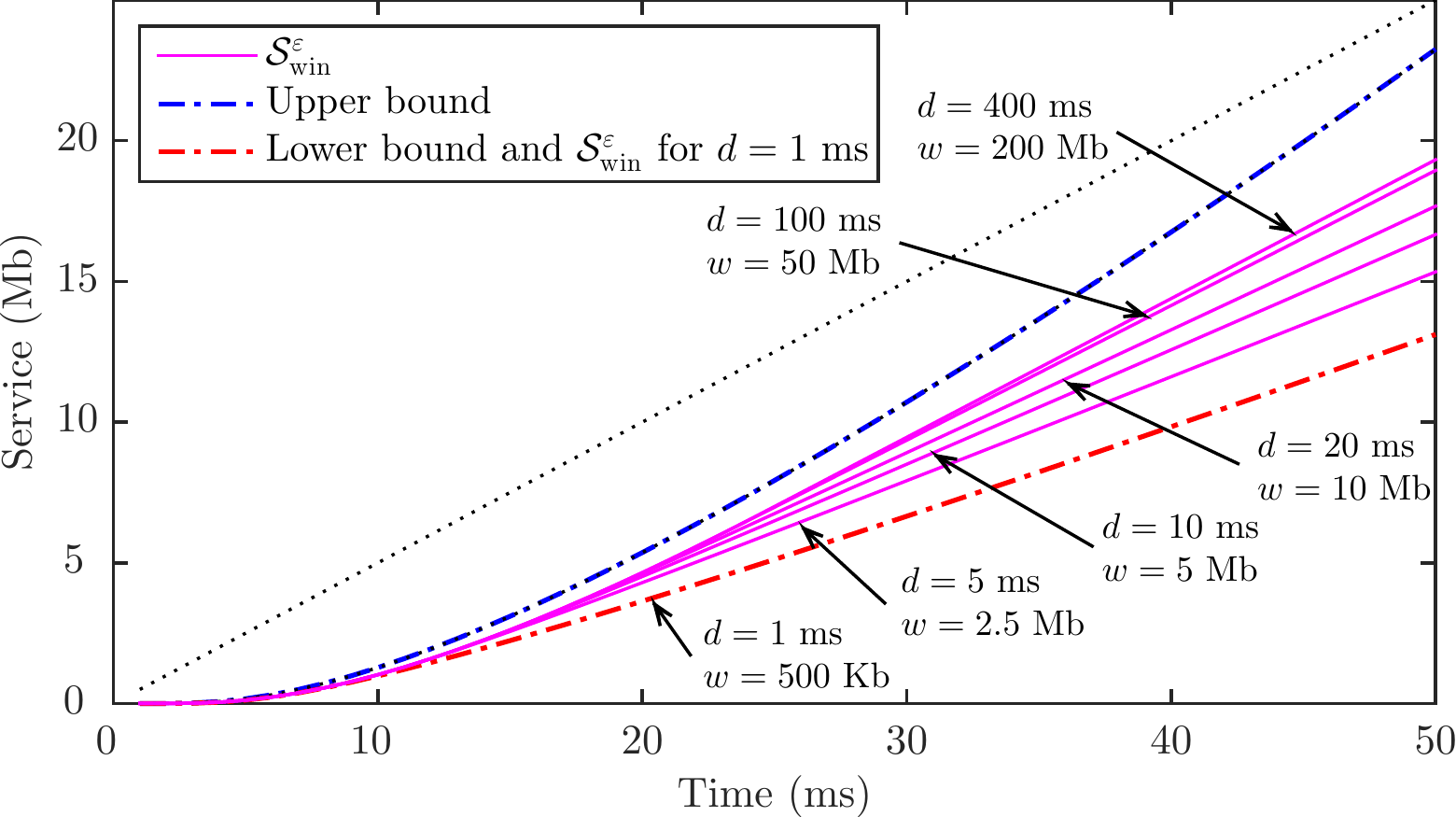}
		\label{fig:VBR-Swin-500}
	}
\caption{Statistical service curves $\S_{\rm win}^\eps (0,t)$ for VBR server with feedback (Avg. rate: 1~Gbps, $\eps = 10^{-6}$). \small{The dotted lines correspond to the two terms of the upper bound 
in Eq.~\eqref{eq:Swin-upperbound}.}}
\label{fig:VBR-Swin}

\end{figure}
\begin{figure}[!t]
\centering
	\subfigure[$w/d= 100$~Mbps.]{
 \includegraphics[width=0.8\textwidth]{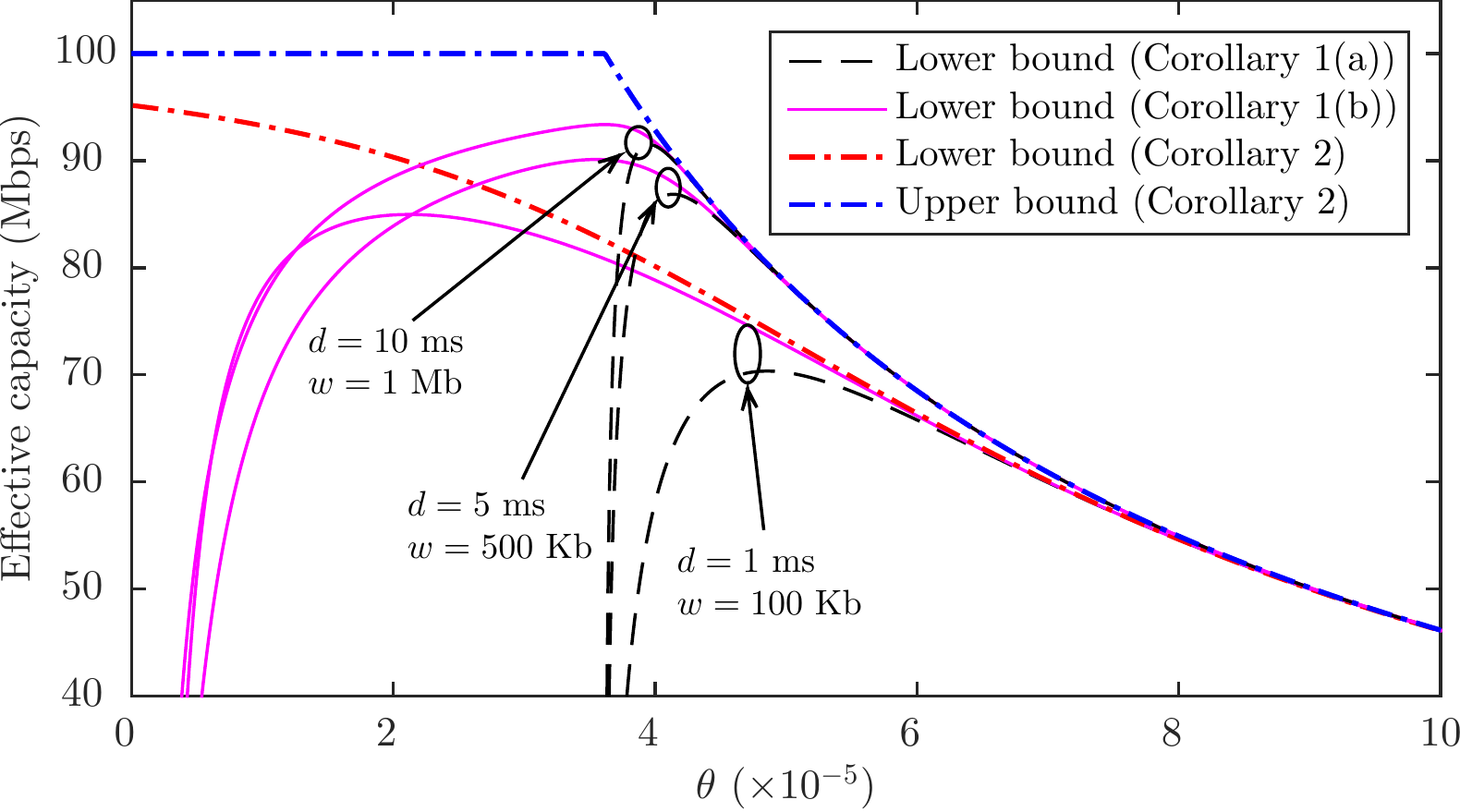}
		\label{fig:VBR-EffCap-100}
	}
		
\centering
	\subfigure[$w/d= 500$~Mbps.]{
\includegraphics[width=0.8\textwidth]{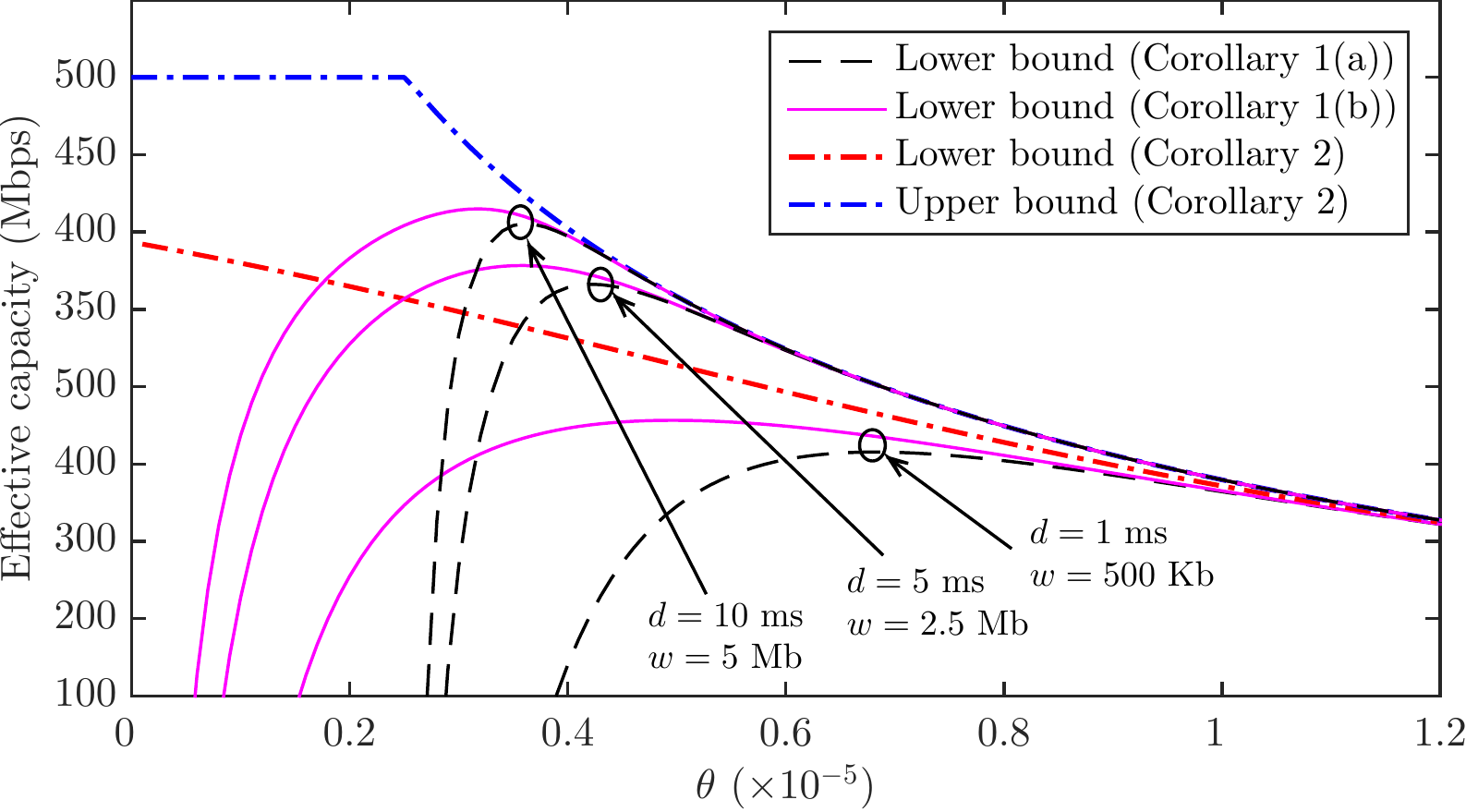}
		\label{fig:VBR-EffCap-500}
	}			
\caption{Effective capacity $\gamma_{\rm win}(\theta)$ of VBR server with feedback (Avg. rate: 1~Gbps).}
\label{fig:VBR-EffCap}

\end{figure}

In Figs.~\ref{fig:VBR-Swin-100} and~\ref{fig:VBR-Swin-500} we plot  statistical service curves  $\S_{\rm{win}}^\eps (0,t)$ as functions  of time, 
where we use a violation probability of $\eps=10^{-6}$.
We compute multiple service curves where we 
vary the delay $d$ and the window size $w$, but fix the ratio $w/d$.
The statistical service curves use Eq.~\eqref{eq:Srvc-probbound} with the 
bound on $M_\Swin$ from Theorem~\ref{thm:Swin-VBR}. We compare these service curves 
to the upper and lower bounds computed from 
Eq.~\eqref{eq:Swin-VBR-w/d-x}.

Fig.~\ref{fig:VBR-Swin-100} shows  statistical service curves, plotted as solid 
lines, using a ratio $w/d = 100$~Mbps.  
The upper and lower bounds are represented by dash-dotted lines. We also  
indicate the  
two terms in the upper bound of Eq.~\eqref{eq:Swin-upperbound} by dotted lines. 
Note that the  lower bound is in fact equal to 
$\S_{\rm win}^\eps$ for the special case $d=1$~ms, due to 
Lemma~\ref{lem:Swin-d1-exact}. 

It is evident that the rates of the service curves match well to those of the upper and lower bounds. 
We observe that increasing the 
delay and the window size simultaneously 
improves the available 
service. 
For large values of $d$, the service curves appear to have a supremum 
well below the plotted upper bound. 
This is expected, since 
the rates of our bounds are exact for $d \to \infty$. 
In a deterministic feedback system with a fixed rate of 1~Gbps, 
varying $w$ and $d$ with a fixed ratio 
$w/d$ results in all cases in an  essentially constant rate service of 
100~Mbps, with minuscule deviations.  

Fig.~\ref{fig:VBR-Swin-500} evaluates the same scenario for a different parameter selection, this time, fixing   $w/d = 500$~Mbps. Note that the range of 
the y-axis is modified from  Fig.~\ref{fig:VBR-Swin-100}. For the shown range of 
time values, the rate of the 
statistical service curves are close to that of the lower bound, but noticeably smaller 
than that of the upper bound. This will change when we consider larger time intervals.  
The reason is that with  this choice of $w/d$,
the upper bound 
of the service curves is dominated for a longer period of time  by the first term  of Eq.~\eqref{eq:Swin-upperbound}. Once the second term (with rate $w/d$) governs 
the bound, the rate of the statistical service curves will be close to $w/d$ as well.

 We now turn to the effective capacity. 
For the same set of parameters  as before, we evaluate in Fig.~\ref{fig:VBR-EffCap} the 
lower bounds of the effective capacity $\gamma_{\rm win}$ from Corollary~\ref{cor:Swin-VBR}. 
Note that the corollary  presents two bounds. Corollary~\ref{cor:Swin-VBR}(a) 
(Eq.~\eqref{eq:effC-VBR-unimproved}) does not take advantage of Theorem~\ref{thm:Swin-VBR}, whereas Corollary~\ref{cor:Swin-VBR}(b) (Eq.~\eqref{eq:effC-VBR}) involves Theorem~\ref{thm:Swin-VBR}.
We plot these bounds on $\gamma_{\rm win}(-\theta)$ as a function of $\theta >0$. Recall 
that the actual (not estimated) effective capacity $\gamma_{\rm win}(-\theta)$ is a decreasing function of  $\theta$, and that its value for $\theta \to 0$ is the 
average service rate. 

Fig.~\ref{fig:VBR-EffCap-100} depicts the  bounds for 
$w/d = 100$~Mbps, and Fig.~\ref{fig:VBR-EffCap-500} those for  
$w/d = 500$~Mbps. Consider first the lower and upper bounds 
from Corollary~\ref{cor:apriori-VBR},  which are indicated by dash-dotted lines. 
As discussed in Sec.~\ref{subsec:VBR-accuracy},  the bounds converge for large values of $\theta$. The figure 
indicates that the convergence occurs early. 
We observe that bounds from  Corollary~\ref{cor:Swin-VBR}(a) are inferior to 
those of Corollary~\ref{cor:Swin-VBR}(b), which emphasizes the value of applying Theorem~\ref{thm:Swin-VBR}. 
For $d>1$~ms, the lower bounds  of Corollary~\ref{cor:Swin-VBR}(b) for different values of $w$ and $d$ are accurate when $\theta> 4 \cdot 10^{-5}$ (Fig.~\ref{fig:VBR-EffCap-100}) and $\theta> 0.5 \cdot 10^{-5}$ (Fig.~\ref{fig:VBR-EffCap-500}), but degrade for small values of  $\theta$. When this happens the lower bound from Corollary~\ref{cor:apriori-VBR} should be used. 
For $d=1$~ms, the bounds from Corollary~\ref{cor:Swin-VBR}(b) are  pessimistic over 
a large range. Here, the lower bound from Corollary~\ref{cor:apriori-VBR}, 
which is exact when $d=1$~ms, is the better result. 
Since the actual $\gamma_{\rm win}(-\theta)$ increases 
when reducing $\theta$, the best estimates of the average available service of $\Swin$ 
are obtained at the maximum of the curves. 
It is interesting to observe that Corollary~\ref{cor:Swin-VBR} generates bounds 
which are not always monotonic when increasing $d$.


\begin{figure}[!t]
\centering
		\includegraphics[width=0.8\textwidth]{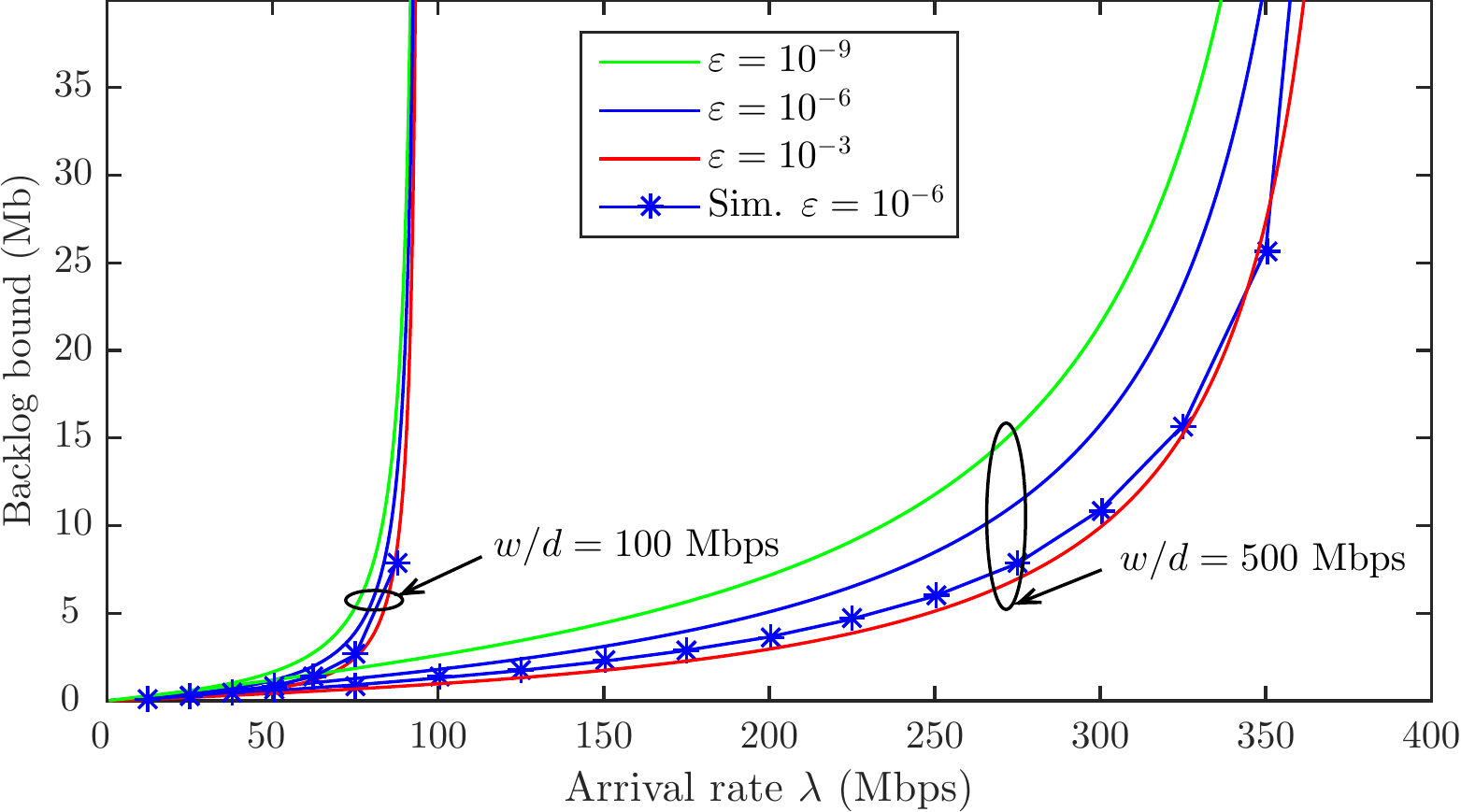}
		\caption{Backlog bounds for VBR server with feedback ($C= 1$~Gbps).}
\label{fig:VBR-Backlog}

\end{figure}


Our  last numerical example presents a backlog analysis of the VBR server 
with feedback, by applying Eq.~\eqref{eq:backlog-MGF}. 
For the arrivals, we select a process similar to the service process, 
where arrivals in each time slot are i.i.d. with an exponential distribution
that generates arrivals at an average rate of $\lambda$~Mbps. The moment-generating function of this arrival process, 
denoted by $M_A$, is given by 
\begin{align*}
M_A(\theta,s,t)=\frac{1}{(1-\lambda\theta)^{t-s}} \,.
\end{align*}
The service is the same VBR server as evaluated before. 
By setting $d=1$~ms, we can 
take advantage of 
the exact expression for $M_\Swin$ that follows from Lemma~\ref{lem:Swin-d1-exact}. Even though we have exact 
moment-generating functions for arrivals and service, the backlog bound 
of the MGF network calculus will be pessimistic, due to the deconvolution expression in Eq.~\eqref{eq:mgf-deconv}. 
In Fig.~\ref{fig:VBR-Backlog} we depict backlog bounds for the ratios 
$w/d = 100$~Mbps and $500$~Mbps, for different violation probabilities $\eps$, 
as functions  of the arrival rate $\lambda$. 
We note that  this presents the first probabilistic backlog bounds of 
a random  feedback system using methods of the network calculus.
We observe that 
for each choice of $w$, the system saturates at a well-defined rate. 
The saturation rates are 
close to the lower bounds on $\gamma_{\rm win} (-\theta)$ for 
$\theta \to 0$ in Fig.~\ref{fig:VBR-EffCap}. 
The plots indicate a low sensitivity of the bounds to $\eps$. 
We also  include simulation results for $\eps=10^{-6}$.\footnote{The simulations 
represents runs of $10^{9}$ time slots, where the simulation is started with an empty system, and  results from first $10^5$ time slots are discarded.} A comparison with the  simulations 
shows that the achieved backlog bounds are pessimistic, but 
track the blow up of the backlog at high utilizations well. 
The analytical bounds will be more pessimistic for  $\eps=10^{-3}$  
and less so for  $\eps=10^{-9}$.

\section{Service Processes with Positive Time Correlations}
\label{sec:MMOO}

The VBR server from the previous section offered an i.i.d. service in 
each time slot, and the time correlations of the feedback system resulted 
exclusively from the feedback mechanisms. 
Nonetheless, the analysis of this simple feedback system 
proved to be substantial. 
This raises the question whether feedback systems
with more complex service processes are at all tractable with  
our analysis approach.
In this section, we provide a positive answer, by analyzing  
a feedback system for a service process with memory. 

The server is represented by a Markov-modulated 
On-Off (MMOO) process \cite{Book-Chang}, which 
operates in two states. In the ON state (state~1) 
the server is transmitting a constant 
amount of  $P>0$ units of traffic per time slot. 
In the OFF state (state~0), the server does not 
transmit any traffic.
The state is selected at the beginning of each time slot using 
fixed transition probabilities, 
where $p_{ij}$ denotes the probability of moving
from state $i$ to state~$j$ ($i,j \in \{0,1\}$).
By definition, the service process is additive, 
$$
S(s,t) =\sum_{k=s}^{t-1} c_k\,,
$$
where $c_k=P$ if the system is in the ON state at
time $k$, and $c_k=0$ otherwise.
Since $p_{01}= 1  -p_{00}$ and 
$p_{10}= 1 - p_{11}$, the system
is fully characterized by
the three parameters, $p_{00}$, $p_{11}$,
and $P$.  The Markov chain is assumed
to be in its steady-state, where
the probability that the system is ON in any given
time slot equals $\P(ON)=\frac{p_{01}}{p_{10}+p_{01}}$.
MMOO processes are frequently used in the literature
for modelling bursty traffic or service \cite{KeWaCh93}.
The average rate of the MMOO process is 
$$
\frac{ E[S(s,t)]}{t-s} = \frac{p_{01}}{p_{01}+p_{10}}P\,.
$$
Its effective capacity 
is given in \cite[Eq.~(7.18)]{Book-Chang} as
\begin{equation}
\label{eq:MMOO-effC}
\gamma_S (-\theta) = -\frac{1}{\theta}
\log  \left( \frac12 \left\{ (p_{00}+p_{11}e^{-\theta P})
+ \sqrt{(p_{00}+p_{11}e^{-\theta P})^2-4
(p_{00}+p_{11}-1)e^{-\theta P}} \right\} \right)
\end{equation}
for all $\theta\ge 0$.

We assume throughout this section that
the transition probabilities satisfy
$p_{01}+p_{10} <1$. 
This condition ensures that the system does not alternative
rapidly between the two states. Smaller
values of $p_{01}$ cause the system to linger in the
OFF state, while smaller values of $p_{10}$ cause
extended bursts.  

\subsection{Time-correlation properties of MMOO processes}

The analysis of an MMOO server with feedback requires a 
deeper inspection of the properties of an MMOO process. 
The following lemma concerns the time correlations of
the underlying two-state Markov chain. In this subsection 
only, we allow the parameter $\theta$ to take
both positive and negative values.

\bigskip 
\begin{lemma} \label{lem:On-Off-monomials} 
Consider a two-state Markov chain with $p_{01}+p_{10} <1$. 
For every strictly increasing sequence
$\tau_1,\dots ,\tau_n$, the probability that
the system is in the ON state at times $\tau_1,\dots,\tau_n$ is 
a decreasing function of the differences $\tau_i-\tau_{i-1}$,
$i=2,\dots, n$. 
\end{lemma}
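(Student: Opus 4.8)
The plan is to reduce the statement to a monotonicity property of a single $2\times 2$ transition matrix, then propagate it through the product that expresses the joint ON-probability. Write $p = p_{01}$, $q = p_{10}$, and let $P(k)$ denote the $k$-step transition matrix of the chain, so that $P(k) = M^k$ where $M = \begin{pmatrix} 1-p & p \\ q & 1-q \end{pmatrix}$. The hypothesis $p + q < 1$ means the second eigenvalue $\lambda = 1 - p - q$ of $M$ lies in $(0,1)$. Diagonalizing gives the explicit form $P(k)_{ij} = \pi_j + \lambda^k R_{ij}$, where $\pi = (q/(p+q),\, p/(p+q))$ is the stationary distribution and $R$ is a fixed rank-one correction matrix; in particular $P(k)_{11} = \pi_1 + \lambda^k \frac{q}{p+q} \cdot \frac{p}{p}$-type expression, i.e. $P(k)_{11}$ and $P(k)_{01}$ are of the form $\pi_1 \pm (\text{positive const})\lambda^k$. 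The key elementary fact is then: since $0 < \lambda < 1$, each entry $P(k)_{ij}$ is monotone in $k$ — specifically $P(k)_{11}$ is decreasing in $k$ and $P(k)_{01}$ is increasing in $k$, both converging to $\pi_1$.

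Next, express the joint probability. Starting from stationarity, the probability that the chain is ON at all of $\tau_1 < \dots < \tau_n$ is
\begin{align*}
\mathbb{P}(\text{ON at }\tau_1,\dots,\tau_n) = \pi_1 \prod_{i=2}^{n} P(\tau_i - \tau_{i-1})_{11}\,.
\end{align*}
Here the crucial structural point is that conditioning the chain to be in state $1$ at time $\tau_{i-1}$ and asking it to be in state $1$ again at $\tau_i$ contributes exactly the factor $P(\tau_i - \tau_{i-1})_{11}$, and these factors multiply because the Markov property makes the future independent of the past given the present state (which is pinned to $1$ at each $\tau_i$). Each factor $P(\tau_i - \tau_{i-1})_{11}$ is a number in $(\pi_1, 1]$ that is decreasing in the gap $\tau_i - \tau_{i-1}$ by the previous paragraph. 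Since the whole product is a product of positive terms, increasing any single gap $\tau_i - \tau_{i-1}$ decreases the corresponding factor and hence decreases the product; this is exactly the claimed decreasing dependence on the differences $\tau_i - \tau_{i-1}$, $i = 2,\dots,n$.

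I expect the main obstacle to be purely bookkeeping rather than conceptual: one must verify carefully that $P(k)_{11} > \pi_1$ for all finite $k$ (so that the correction term $\lambda^k R_{11}$ has the right sign and the factors genuinely decrease, not just weakly), and that the spectral decomposition is valid — i.e. $M$ is diagonalizable, which holds because $p+q<1$ forces the two eigenvalues $1$ and $\lambda$ to be distinct (the degenerate case $\lambda = 1$, i.e. $p=q=0$, is excluded). A minor subtlety worth a sentence is the edge case $n=1$, where the statement is vacuous (no differences), and $n=2$, where it reduces directly to the monotonicity of $k \mapsto P(k)_{11}$. One could alternatively avoid diagonalization entirely by observing that $P(k+1)_{11} - P(k)_{11} = (P(k)_{11} - P(k)_{01})(-p) \cdot(\dots)$, establishing the monotone recursion by induction on $k$ using $P(k)_{11} \ge P(k)_{01}$; I would present whichever is shorter, but the eigenvalue argument makes the limiting value $\pi_1$ transparent and is my preferred route.
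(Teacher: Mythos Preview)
Your proposal is correct and follows essentially the same route as the paper: factor the joint ON-probability via the Markov property as $\pi_1\prod_{i=2}^n P(\tau_i-\tau_{i-1})_{11}$, then use the spectral decomposition of the transition matrix (second eigenvalue $\lambda=1-p_{01}-p_{10}\in(0,1)$) to show $P(k)_{11}=\pi_1+(1-\pi_1)\lambda^k$ is strictly decreasing in $k$. The paper carries out exactly this computation, writing the $\tau$-step transition matrix explicitly as the sum of the rank-one steady-state projection and $\mu^\tau$ times the complementary projection, arriving at the same formula $E[X_{\tau_i}\mid X_{\tau_{i-1}}=1]=p+(1-p)\mu^\tau$.
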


\bigskip

\begin{proof} 
Consider the state variables 
$$
X_k = \left\{\begin{array}{ll}
1\quad &\text{if the state is ON at time $k$},\\
0 \quad &\text{else}\,.
\end{array}\right.
$$
The product $\prod_j X_{\tau_j}$
is the indicator of the event that  the system is in  the
ON state at each time $\tau_i$, for $i=1,\dots, n$.
By the Markov property,
\begin{align*}
\P\left(\text {ON at time $\tau_i$ for all $i=1,\dots, n$}\right)
&= E\left [\prod_{i=1}^n X_{\tau_i}\right] \\
&= 
\P(X_{\tau_1}=1) \cdot \prod_{i=2}^n 
E\left [X_{\tau_i}\big\vert X_{\tau_{i-1}}=1\right]\,.
\end{align*}
By stationarity, the leading factor is a constant
determined by the steady state of the Markov chain,
and the $i$-th factor in the product depends
on $\tau_i-\tau_{i-1}$.  
We now verify that each of these factors decreases
with $\tau_i-\tau_{i-1}$.

Let $p=\P(X_k=1)$
be the probability that the state is ON at time $k$,
and let $\tau=\tau_i-\tau_{i-1}$.
We compute the $\tau$-step transition matrix as
\begin{align}
\left(\begin{array}{cc}
p_{00}& p_{01}\\
p_{10} & p_{11}\end{array}\right)^\tau = 
\left(\begin{array}{cc}
(1\!-\!p) & p \\
(1\!-\!p) & p \end{array}\right)
+ \mu^\tau \left(\begin{array}{cc}
p & -p\\
-(1\!-\!p) & (1-p)\end{array}\right)\,,
\end{align}
where the first matrix on the right-hand side is the spectral projection
onto the steady state, 
$\mu=1-p_{01}-p_{10}$ is the non-trivial 
eigenvalue of the transition matrix,
and the second matrix
is the spectral projection onto the eigenstate
corresponding to $\mu$. Since $0<\mu<1$, 
$$
E\left[X_{\tau_i}\ \big\vert \ X_{\tau_{i-1}}=1\right] =
p  + (1-p) \mu^\tau
$$
is a decreasing function of $\tau$, proving the claim.
\end{proof}

The next lemma provides bounds on the moment-generating 
function of the MMOO service process.

\bigskip
\begin{lemma} 
\label{lem:pos-corr}
Let $S$ be an MMOO process as described above. If $p_{01}+p_{10}<  1$,
then the following inequalities hold for all $\theta\in\mathbb{R}$:
\begin{enumerate} 
\item For every strictly increasing sequence
$\tau_1<\dots < \tau_n$, 
\begin{align}
\label{eq:pos-corr-gaps}
E\left[e^{\theta \sum_{i=1}^n c_{\tau_i}}\right]
\le M_S(\theta,0,n)\,.
\end{align}

\item $M_S(\theta, 0, t)$ is supermultiplicative in $t$,
\begin{align}
\label{eq:pos-corr-supermult}
M_S(\theta,0,s) \cdot M_S(\theta,0,t) \le
M_S(\theta,0,s+t)  
\quad (\forall s,t\ge 0)\,.
\end{align}

\item The moment-generating function is bounded by
\begin{align}
\label{eq:pos-corr-effC}
(M_c(\theta))^t\le M_S(\theta,0,t) \le m_+(\theta)^t\,
\quad (\forall t\ge 0)\,,
\end{align}
where $M_c(\theta)=(1\!-\!p)+pe^{\theta P}$ is the
moment-generating function of the service in a single
time slot, and 
where $m_+(\theta)$ is the larger eigenvalue of the matrix
\begin{equation}
\label{eq:MMOO-effC-L}
L(\theta) = \left(\begin{array}{cc} p_{00} & p_{01}\\ p_{10} & p_{11}
\end{array}\right) 
\left(\begin{array}{cc} 1 & 0\\ 0 & e^{\theta P}
\end{array}\right) \,.
\end{equation}

\item Furthermore,
\begin{align}
\label{eq:pos-corr-effC-lower}
M_S(\theta,0,t) \ge 
K(\theta)(m_+(\theta))^t\,,
\end{align}
where $0<K(\theta)<1$ is an explicit constant 
(to be computed in the proof).
\end{enumerate}
\end{lemma}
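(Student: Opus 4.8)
The four parts are tightly linked, and I would organize everything around part~(2). Parts~(3) and~(4) will then follow from part~(2) combined with the Perron--Frobenius theory of the $2\times 2$ transfer matrix $L(\theta)$ of Eq.~\eqref{eq:MMOO-effC-L}, while part~(1) is independent and rests directly on Lemma~\ref{lem:On-Off-monomials}. I would assume $p_{01},p_{10}>0$ throughout (otherwise the chain is absorbed, $S$ is deterministic, and all four claims are trivial); then $L(\theta)$ is entrywise positive and primitive, so its Perron eigenvalue $m_+(\theta)$ is simple, strictly exceeds the modulus of the other eigenvalue, and has strictly positive left and right eigenvectors. The single nonroutine step will be the positive-correlation inequality in part~(2): it is the only place where the structural hypothesis $p_{01}+p_{10}<1$, equivalently $\mu=1-p_{01}-p_{10}>0$ (so that $p_{11}-p_{01}=p_{00}-p_{10}=\mu$), is used in an essential way.

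\emph{Part (1).} I would expand $e^{\theta\sum_i c_{\tau_i}}=\prod_i e^{\theta c_{\tau_i}}$ in terms of the ON-indicators $X_k$ from the proof of Lemma~\ref{lem:On-Off-monomials}. For $\theta\ge 0$, writing $e^{\theta c_k}=1+(e^{\theta P}-1)X_k$ gives
\[
E\Bigl[e^{\theta\sum_{i=1}^{n}c_{\tau_i}}\Bigr]=\sum_{J\subseteq\{1,\dots,n\}}(e^{\theta P}-1)^{|J|}\,\P\bigl(\text{ON at }\tau_j\text{ for all }j\in J\bigr),
\]
and $M_S(\theta,0,n)$ is the same sum with each $\tau_j$ replaced by the consecutive time $j-1$. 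Since the $\tau_i$ are strictly increasing integers, $\tau_{j_\ell}-\tau_{j_{\ell-1}}\ge j_\ell-j_{\ell-1}$ for every $J=\{j_1<\dots<j_m\}$, so Lemma~\ref{lem:On-Off-monomials} gives $\P(\text{ON at }\tau_j,\,j\in J)\le\P(\text{ON at the consecutive times})$ for each $J$; as the coefficients $(e^{\theta P}-1)^{|J|}$ are nonnegative, summing over $J$ yields \eqref{eq:pos-corr-gaps}. For $\theta<0$ these coefficients change sign, so I would instead expand in the OFF-indicators via $e^{\theta c_k}=e^{\theta P}+(1-e^{\theta P})(1-X_k)$, whose coefficients $(e^{\theta P})^{n-|J|}(1-e^{\theta P})^{|J|}$ are nonnegative, and invoke the OFF-analogue of Lemma~\ref{lem:On-Off-monomials}; that analogue holds because relabelling the two states leaves $\mu$ and the hypothesis $p_{01}+p_{10}<1$ invariant.

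\emph{Part (2).} By stationarity it is equivalent to prove $\mathrm{Cov}\bigl(e^{\theta S(0,s)},e^{\theta S(s,s+t)}\bigr)\ge 0$. Here $S(0,s)$ is a nondecreasing function of $(X_0,\dots,X_{s-1})$ and $S(s,s+t)$ of $(X_s,\dots,X_{s+t-1})$, so for $\theta>0$ (resp.\ $\theta<0$) the two random variables are both nondecreasing (resp.\ both nonincreasing) in the trajectory; since $\mu>0$ makes the two-state transition matrix monotone, the stationary chain is positively associated and the covariance is $\ge 0$ in either sign regime. For a self-contained proof I would condition on $X_{s-1}$: by the Markov property $S(0,s)$ and $S(s,s+t)$ become conditionally independent, and on the two-point state space the covariance equals $\pi_0\pi_1(a_0-a_1)(b_0-b_1)$ with $a_i=E[e^{\theta S(0,s)}\mid X_{s-1}=i]$ and $b_i=E[e^{\theta S(s,s+t)}\mid X_{s-1}=i]$; both $a_i$ and $b_i$ are monotone in $i$ in the same direction by the $\mu>0$ stochastic monotonicity of the chain, where I would use reversibility of the two-state chain to rewrite the ``past'' factor $E[e^{\theta S(0,s-1)}\mid X_{s-1}=i]$ inside $a_i$ as a forward conditional expectation. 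Hence $(a_0-a_1)(b_0-b_1)\ge 0$ and \eqref{eq:pos-corr-supermult} follows.

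\emph{Parts (3) and (4).} The lower bound in \eqref{eq:pos-corr-effC} is immediate: iterating $M_S(\theta,0,a+b)\ge M_S(\theta,0,a)M_S(\theta,0,b)$ from part~(2) gives $M_S(\theta,0,t)\ge M_S(\theta,0,1)^t=M_c(\theta)^t$. For the upper bound, $b_t:=\log M_S(\theta,0,t)$ is superadditive by part~(2), so I would invoke Fekete's lemma to get $b_t/t\le\lim_{u}b_u/u$ for all $t$; the limit equals $\log m_+(\theta)$ by the standard transfer-matrix computation behind \eqref{eq:MMOO-effC}, namely $M_S(\theta,0,u)=\pi^{\top}\mathrm{diag}(1,e^{\theta P})\,L(\theta)^{u-1}\mathbf 1$ and Perron--Frobenius for $L(\theta)$, so $M_S(\theta,0,t)\le m_+(\theta)^t$. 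For \eqref{eq:pos-corr-effC-lower} I would combine this same representation with the identity $\pi^{\top}\mathrm{diag}(1,e^{\theta P})\,v=m_+(\theta)\,\pi^{\top}v$ for the Perron right-eigenvector $v$ of $L(\theta)$ --- obtained by left-multiplying $L(\theta)v=m_+(\theta)v$ by $\pi^{\top}$ and using $\pi^{\top}P=\pi^{\top}$ --- together with the componentwise bound $\mathbf 1\ge(\max_i v_i)^{-1}v$, which after applying $L(\theta)^{t-1}$ gives
\[
M_S(\theta,0,t)\ \ge\ (\max_i v_i)^{-1}m_+(\theta)^{t-1}\,\pi^{\top}\mathrm{diag}(1,e^{\theta P})\,v\ =\ K(\theta)\,m_+(\theta)^{t},\qquad K(\theta)=\frac{\pi_0 v_0+\pi_1 v_1}{\max\{v_0,v_1\}}.
\]
Since $K(\theta)$ is a weighted average of $v_0,v_1$ divided by their maximum, $0<K(\theta)\le 1$, with equality only if $v_0=v_1$, i.e.\ only if $\mathbf 1$ is the Perron eigenvector of $L(\theta)$, which forces the row sums of $L(\theta)$ to agree, i.e.\ $\mu e^{\theta P}=\mu$, i.e.\ $\theta=0$ (the degenerate case $M_S\equiv 1$, $m_+\equiv 1$); thus $K(\theta)\in(0,1)$ for $\theta\ne 0$. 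Everything after part~(2) is this kind of Perron--Frobenius bookkeeping, so I expect the covariance estimate in part~(2) to be the one genuinely delicate point, precisely because the sign of $\theta$ must be handled with care and the monotonicity $\mu>0$ enters there and nowhere else.
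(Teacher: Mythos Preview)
Your proof is correct; Parts~(1) and~(3) match the paper almost exactly. Parts~(2) and~(4), however, proceed differently. For Part~(2) the paper does not introduce association or conditioning on $X_{s-1}$ at all: it simply reuses the expansion from Part~(1), setting $f(\ell)=E\bigl[e^{\theta(S(0,s)+S(s+\ell,s+t+\ell))}\bigr]$ and observing via Lemma~\ref{lem:On-Off-monomials} that $f$ is nonincreasing in $\ell$, so that $f(0)=M_S(\theta,0,s+t)$ dominates $f(\infty)=M_S(\theta,0,s)M_S(\theta,0,t)$. This has the advantage that the entire lemma rests on a single tool; your covariance decomposition via stochastic monotonicity and reversibility is more conceptual and exposes the positive-dependence structure cleanly, at the cost of invoking (or sketching) those monotonicity facts. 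For Part~(4) the paper writes the full two-term spectral decomposition $M_S(\theta,0,t)=K(\theta)m_+(\theta)^t+(1-K(\theta))m_-(\theta)^t$, solves $K(\theta)=\bigl(M_c(\theta)-m_-(\theta)\bigr)/\bigl(m_+(\theta)-m_-(\theta)\bigr)$ from $t=1$, and drops the second summand (positive since $\det L(\theta)=\mu\,e^{\theta P}>0$ forces $m_-(\theta)>0$); your eigenvector-comparison constant $K(\theta)=\pi^\top v/\max_i v_i$ is an equally valid explicit choice that avoids the second eigenvalue altogether. One small correction to your closing remark: the hypothesis $\mu>0$ is not used \emph{only} in Part~(2) --- Part~(1) needs it as well, through Lemma~\ref{lem:On-Off-monomials}.
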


\bigskip\begin{proof} \begin{enumerate} \item 
Consider first the case $\theta\ge 0$, and
let $X_k$ be the indicator function that the
system is in the ON state at time $k$.  Writing
$$
c_k=PX_k\,,\quad e^{\theta c_k} = 1 + (e^{\theta P}-1)X_k\,,
$$
we expand
$$
e^{\theta \sum_{i=1}^n c_{\tau_i}}
= \prod_{i=1}^n \left(1+ (e^{\theta P}-1)X_{\tau_i}\right)
= \sum_{J\subset \{\tau_1,\dots, \tau_n\}} (e^{\theta P}-1)^{|J|}
\prod_{j\in J} X_j\,.
$$
For each subset $J\subset \{\tau_1,\dots, \tau_n\}$,
the distance between consecutive elements
increases with the distances $\tau_i-\tau_{i-1}$ for
$i=2,\dots, n$, and all coefficients are
positive. Therefore, we can apply Lemma~\ref{lem:On-Off-monomials}
to see that
$$
E\left[e^{\theta \sum_{i=1}^n c_{\tau_i}}\right]
= \sum_{J\subset \{\tau_1,\dots, \tau_n\}} (e^{\theta P}-1)^{|J|}
E\left[\prod_{j\in J} X_j\right]
$$
is a decreasing function of $\tau_i-\tau_{i-1}$. 
Since these differences take the smallest possible value
when $\{\tau_1,\dots,\tau_n\}=\{0,\dots, n-1\}$,
this proves the claim for $\theta\ge 0$.

For $\theta<0$, let $Y_k=1-X_k$ be the indicator 
function that the state is OFF at time $k$,
and set $\phi=-\theta>0$.  We write
$$
e^{\theta c_k} = e^{-\phi P(1-Y_k)} = 
e^{-\phi P}\left(1+ (e^{\phi P}-1)Y_k\right)\,,
$$
expand the product as a sum
$$
e^{\theta \sum_{i=1}^n c_{\tau_i}}
= e^{-n \phi P}\sum_{J\subset \{\tau_1,\dots, \tau_n\}} 
(e^{\phi P}-1)^{|J|}
\prod_{j\in J} Y_j\,,
$$
and argue as in the other case.

\item 
Fix $\theta\in\mathbb{R}$.
For integers $\ell\ge 0$,
let $f(\ell)=E\left[e^{\theta(S(0,s) + S(s+\ell, s+t+\ell))}\right]$.
Factoring the exponential as in the proof of 
Eq.~\eqref{eq:pos-corr-gaps}, it follows from
Lemma~\ref{lem:On-Off-monomials}
that $f(\ell)$
decreases with $\ell$. Taking
$\ell\to\infty$ and $\ell=0$
yields Eq.~\eqref{eq:pos-corr-supermult}.

\item By Part 2, the function
$g(t)=\log M_S(\theta,0,t)$ is  superadditive.
Therefore, the ratio $\frac1t g(t)$ decreases monotonically
from $g(1)=\log M_c(\theta)$ to
$\lim_{t\to\infty} \frac 1t g(t)= \log m(\theta)$, 
and Eq.~\eqref{eq:pos-corr-effC} follows.

\item We start from
\cite[Eq.~(7.15)]{Book-Chang}, which states that
$$
M_S(\theta,0,t) = 
\bigl(
(1\!-\!p)~, ~p \bigr) \, \bigl(L(\theta)\bigr)^t\, 
\binom{1}{1}\,.
$$
Let $m_+(\theta)$ and $m_-(\theta)$ be the larger and
smaller eigenvalues of $L(\theta)$, respectively.
Inserting the spectral decomposition of
$L(\theta)$ with respect to its eigenvalues $m_+(\theta)$ and $m_-(\theta)$ 
into the expression for $M_S(\theta,0,t)$,
we obtain a constant $K(\theta)$ such that 
\begin{equation}
\label{eq:MMOO-effC-lower-proof}
M_S(\theta,0,t) = K(\theta) (m_+(\theta))^t + (1-K(\theta))
(m_-(\theta))^t
\quad (\forall t\ge 0,\theta\in\mathbb{R})\,.
\end{equation}
Using that $M_S(\theta,0,1)=M_c(\theta)$, we see that 
\[ 
K(\theta) = \frac{M_c(\theta)-m_-(\theta)}
            {m_+(\theta)-m_-(\theta)}\,.
\] 
Since $m_+(\theta)>M_c(\theta)>m_-(\theta)$,
it follows that $0<K(\theta)<1$, and we can drop
the second summand in Eq.~\eqref{eq:MMOO-effC-lower-proof}
to obtain Eq.~\eqref{eq:pos-corr-effC-lower}. 
\end{enumerate}
\end{proof}

\bigskip {\em Remark.} 
The representation for $M_S(\theta)$
in Eq.~\eqref{eq:MMOO-effC-lower-proof} implies in
particular that
$$
\gamma_S(\theta)= \lim_{t\to\infty} 
\frac{1}{\theta t} \log M_S(\theta,0,t) = \frac{1}{\theta}\log m_+(\theta)\,,
$$
in agreement with Eq.~\eqref{eq:MMOO-effC}.

The validity of the bounds on the moment-generating function 
and the effective capacity 
extends to a broader class of time-homogeneous 
two-state Markov-modulated processes, 
as described in
\cite[Example 7.2.7]{Book-Chang}. 
The Markov-modulated process is given by a two-state Markov chain
in the steady-state 
and two independent sequences of i.i.d. random variables, 
$c_k^0$ and $c_k^1$.
The service rate in the $k$-th time slot is given by
\begin{equation}
\label{eq:def-MMP}
c_k = (1-X_k) c_k^0 + X_k c_k^1\,,
\end{equation}
where $X_k$ is the state variable of the system.
The service process $S(s,t)$ is defined by $S(s,t)=\sum_{k=s}^{t-1} c_k$.

\begin{lemma}
\label{lem:pos-corr-MMP} Let $S$ be the
two-state Markov-modulated process described above. Assume that
the random variables $c_k^0$ and $c_k^1$
have moment-generating functions 
$M_{c^0} (\theta)= e^{\theta c_k^0}$  and
$M_{c^1} (\theta)= e^{\theta c_k^1}$.
If $p_{01}+p_{10}<1$, then the conclusions of 
Lemma~\ref{lem:pos-corr} hold, with 
$M_c(\theta)=(1-p)M_{c^0}(\theta) + pM_{c^1}(\theta)$,
and
\begin{equation}
\label{eq:MMP-effC-L}
L(\theta) = \left(\begin{array}{cc} p_{00} & p_{01}\\ p_{10} & p_{11}
\end{array}\right) 
\left(\begin{array}{cc} M_{c^0}(\theta) & 0\\ 0 &
M_{c^1}(\theta)
\end{array}\right) \,.
\end{equation}
\end{lemma}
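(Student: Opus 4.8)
The strategy is to reduce Lemma~\ref{lem:pos-corr-MMP} to Lemma~\ref{lem:pos-corr} by showing that the key structural facts used in the proof of the latter survive the passage from the pure On-Off process to the general Markov-modulated process of Eq.~\eqref{eq:def-MMP}. The main observation is that everything hinged on Lemma~\ref{lem:On-Off-monomials}, i.e.\ on the fact that the joint probability of being ON at prescribed times is decreasing in the inter-arrival gaps, together with a multilinear expansion of $e^{\theta\sum c_{\tau_i}}$ in the state indicators $X_k$. So the first step is to establish the analogue of Lemma~\ref{lem:On-Off-monomials} in the present setting. Since the $c_k^0$ and $c_k^1$ are i.i.d.\ and independent of the chain, conditioning on the state path gives, for any subset $J$ of a strictly increasing sequence $\tau_1<\dots<\tau_n$,
\[
E\Bigl[\prod_{j\in J} X_{\tau_j}\Bigr] = \P(\text{ON at }\tau_j\ \forall j\in J)\,,
\]
which is exactly the quantity that Lemma~\ref{lem:On-Off-monomials} shows to be decreasing in the gaps $\tau_i-\tau_{i-1}$. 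The two-step transition structure and the eigenvalue $\mu=1-p_{01}-p_{10}\in(0,1)$ are properties of the chain alone, so this part carries over verbatim.

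The second step is the multilinear expansion. Here I would write, for each time slot,
\[
e^{\theta c_k} = e^{\theta c_k^0}\bigl(1-X_k\bigr) + e^{\theta c_k^1}X_k\,,
\]
and take expectations over the i.i.d.\ variables first (using independence from the chain) to replace $e^{\theta c_k^0}$ and $e^{\theta c_k^1}$ by $M_{c^0}(\theta)$ and $M_{c^1}(\theta)$; what remains is an expression of the form $M_{c^0}(\theta) + (M_{c^1}(\theta)-M_{c^0}(\theta))X_k$, linear in $X_k$. Multiplying over $k\in\{\tau_1,\dots,\tau_n\}$ and expanding yields a sum over subsets $J$ with coefficients $M_{c^0}(\theta)^{n-|J|}(M_{c^1}(\theta)-M_{c^0}(\theta))^{|J|}$ times $E[\prod_{j\in J}X_j]$. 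The one subtlety compared to the On-Off case is the sign of $M_{c^1}(\theta)-M_{c^0}(\theta)$: in Lemma~\ref{lem:pos-corr} the corresponding coefficient $e^{\theta P}-1$ (or $e^{\phi P}-1$ for $\theta<0$ via the OFF indicator) was always nonnegative, which is what let one invoke monotonicity of each term. Here the difference may have either sign. The fix is to handle it by the same device used for $\theta<0$ in the original proof: if $M_{c^1}(\theta)\ge M_{c^0}(\theta)$ expand in the $X_k$'s, and if $M_{c^1}(\theta)< M_{c^0}(\theta)$ expand instead in $Y_k=1-X_k$, writing $e^{\theta c_k}=M_{c^1}(\theta) + (M_{c^0}(\theta)-M_{c^1}(\theta))Y_k$; Lemma~\ref{lem:On-Off-monomials} applies equally to the OFF state (by symmetry $i\leftrightarrow j$, since $p_{01}+p_{10}<1$ is symmetric). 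In both cases all coefficients are nonnegative and each subset-term is decreasing in the gaps, giving Eq.~\eqref{eq:pos-corr-gaps}, and then Eq.~\eqref{eq:pos-corr-supermult} follows exactly as in Part~2 of Lemma~\ref{lem:pos-corr} by introducing a shift parameter $\ell$ and letting $\ell\to\infty$ versus $\ell=0$.

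For the remaining two parts, once Eq.~\eqref{eq:pos-corr-supermult} is in hand, Part~3 is immediate: $g(t)=\log M_S(\theta,0,t)$ is superadditive, so $\tfrac1t g(t)$ decreases from $g(1)=\log M_c(\theta)$ to its limit, giving the bounds $M_c(\theta)^t\le M_S(\theta,0,t)\le m_+(\theta)^t$ with $M_c(\theta)=(1-p)M_{c^0}(\theta)+pM_{c^1}(\theta)$. The only thing that needs checking is that the limit equals $\log m_+(\theta)$, where $m_+(\theta)$ is the larger eigenvalue of the matrix $L(\theta)$ in Eq.~\eqref{eq:MMP-effC-L}; this follows from the matrix representation $M_S(\theta,0,t)=\bigl((1\!-\!p),p\bigr)L(\theta)^t\binom11$, which holds for the general Markov-modulated process by the same conditioning argument (cf.\ \cite[Example 7.2.7]{Book-Chang}). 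Part~4 then follows verbatim: insert the spectral decomposition of $L(\theta)$ to write $M_S(\theta,0,t)=K(\theta)m_+(\theta)^t+(1-K(\theta))m_-(\theta)^t$, determine $K(\theta)=\frac{M_c(\theta)-m_-(\theta)}{m_+(\theta)-m_-(\theta)}$ from the $t=1$ value, and observe that $m_+(\theta)>M_c(\theta)>m_-(\theta)$ forces $0<K(\theta)<1$, so dropping the second term gives Eq.~\eqref{eq:pos-corr-effC-lower}. I expect the main obstacle to be the sign issue in Step~2 — making sure the ``expand in $X_k$ versus expand in $Y_k$'' dichotomy genuinely covers all $\theta\in\mathbb{R}$ and all relative orderings of $M_{c^0}(\theta),M_{c^1}(\theta)$ — and verifying that the strict inequalities $m_+>M_c>m_-$ remain strict (which uses $p_{01}+p_{10}<1$ and the non-degeneracy of the two i.i.d.\ distributions); everything else is a routine transcription of the On-Off argument.
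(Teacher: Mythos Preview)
Your proposal is correct and follows essentially the same route as the paper: condition on the chain to replace $e^{\theta c_k}$ by $M_{c^0}(\theta)+(M_{c^1}(\theta)-M_{c^0}(\theta))X_k$, expand multilinearly, and apply Lemma~\ref{lem:On-Off-monomials} termwise. The only cosmetic difference is that the paper handles the sign issue by declaring ``without loss of generality $M_{c^0}(\theta)\le M_{c^1}(\theta)$'' (relabeling the two states for the given $\theta$), whereas you spell out the equivalent $X_k$-versus-$Y_k$ dichotomy; Parts~2--4 are then dispatched exactly as you describe.
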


\begin{proof} 
Fix $\theta\in\mathbb{R}$,
and assume without loss of generality
that $M_{c^0}(\theta)\le M_{c^1}(\theta)$.
We follow the proof of Lemma~\ref{lem:pos-corr}.

For the first claim, we condition the left hand
side of Eq.~\eqref{eq:pos-corr-gaps}
on the state of the system at the
times $\tau_1,\dots, \tau_n$.
In a single time slot, we obtain from Eq.~\eqref{eq:def-MMP} that
$$
E\left[e^{\theta c_k}\big\vert X_k\right]
= M_{c^0}(\theta) +\bigl(M_{c^1}(\theta)-M_{c^0}(\theta)\bigr)X_k\,.
$$
Since the random variables $c_k^0$, $c_k^1$ and $X_k$
are all independent, it follows that
\begin{align*}
E\left[ e^{\theta \sum_{i=1}^n c_{\tau_i}}\Big\vert
X_{\tau_1},\dots, X_{\tau_n} \right]
&= \prod_{i=1}^n M_{c^0}(\theta) +
\bigl(M_{c^1}(\theta)-M_{c^0}(\theta)\bigr)X_{\tau_i}\\
& = \sum_{J\subset\{\tau_1,\dots, \tau_n\}}
\bigl(M_{c^0}(\theta)\bigr)^{n-|J|}
\bigl(M_{c^1}(\theta)-M_{c^0}(\theta)\bigr)^{|J|}
\prod_{j\in J} X_j\,.
\end{align*}
Taking expectations, we obtain
$$
E\left[ e^{\theta \sum_{i=1}^n c_{\tau_i}}\right]
= \sum_{J\subset\{\tau_1,\dots, \tau_n\}}
\bigl(M_{c^0}(\theta)\bigr)^{n-|J|}
\bigl(M_{c^1}(\theta)-M_{c^0}(\theta)\bigr)^{|J|}
E\left[\prod_{j\in J} X_j\right]\,.
$$
Since all coefficients are non-negative, 
Eq.~\eqref{eq:pos-corr-gaps} now follows from 
Lemma~\ref{lem:On-Off-monomials}.

A similar application of Lemma~\ref{lem:On-Off-monomials}
yields Eq.~\eqref{eq:pos-corr-supermult},
which directly implies
Eq.~\eqref{eq:pos-corr-effC}.
Finally, Eq.~\eqref{eq:pos-corr-effC-lower} follows
as before from \cite[Eq.~(7.15)]{Book-Chang}.
\end{proof}

\subsection{Bounds for an MMOO server with feedback}

The results in the previous subsection imply 
bounds on the equivalent service  of 
a feedback system in Fig.~\ref{fig:feedback-w-d} containing an 
MMOO server. 
We assume $d>0$ and $w>0$ as  parameters of the system. 
Our first result is analogous to Theorem~\ref{thm:Swin-VBR}.

\bigskip

\begin{theorem}
\label{thm:Swin-MMOO}
Let $S(s,t)$ be an MMOO service process with feedback.  
If the transition probabilities satisfy $p_{01}+p_{10}<  1$,
then, for every $\theta >  0$,
\begin{align}
\label{eq:Swin-MMOO}
M_{S_{\rm win}}(-\theta,s,  t) \le
\left( (m_+(-\theta)^d
+ d e^{-\theta w}\right)^{\left\lfloor\frac{t-s}{d}
\right\rfloor}\,,
\end{align}
where $m_+(-\theta)$ is the larger eigenvalue
of the matrix $L(-\theta)$ defined by Eq.~\eqref{eq:MMOO-effC-L}.
\end{theorem}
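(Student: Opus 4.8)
The plan is to follow the proof of Theorem~\ref{thm:Swin-VBR} almost verbatim, replacing the i.i.d. estimate $M_S(-\theta,\tau_{i-1},\tau_i-d)=(M_c(-\theta))^{\tau_i-\tau_{i-1}-d}$ with the correlated estimates from Lemma~\ref{lem:pos-corr}. First I would fix $\theta>0$ and, using that $S_{\rm win}(s,t)$ is non-decreasing in $t$, reduce to the case $t-s=Nd$ for an integer $N=\lfloor (t-s)/d\rfloor$. Then I would invoke Lemma~\ref{lem:Swin-giganticMin} to write $S_{\rm win}(s,t)$ as the finite minimum over $n=0,\dots,N$ of $\min_{\tau_o,\dots,\tau_n\in C_n(s,t)}\bigl(\sum_{i=1}^n S(\tau_{i-1},\tau_i-d)+S(\tau_n,t)\bigr)+nw$, and apply Lemma~\ref{lemma:mgfofmin} to the outer minimum to get
\begin{align*}
M_{S_{\rm win}}(-\theta,s,t)\le \sum_{n=0}^N \sum_{\tau_o,\dots,\tau_n\in C_n(s,t)} E\Bigl[e^{-\theta\bigl(\sum_{i=1}^n S(\tau_{i-1},\tau_i-d)+S(\tau_n,t)\bigr)}\Bigr] e^{-\theta n w}\,.
\end{align*}

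The key new step — and the main obstacle — is bounding the joint expectation inside the sum. For the VBR server this factored immediately by independence on disjoint intervals, but here $S$ has memory, so $S(\tau_o,\tau_1-d), S(\tau_1,\tau_2-d),\dots,S(\tau_n,t)$ are correlated. The $n$ subintervals $[\tau_{i-1},\tau_i-d)$ together with $[\tau_n,t)$ are disjoint, and $-\theta<0$, so I would argue that the total exponent $-\theta\bigl(\sum_i S(\tau_{i-1},\tau_i-d)+S(\tau_n,t)\bigr)$ is $-\theta$ times a sum $\sum_k c_k$ over a subset of time slots $k\in[s,t)$; by Part~1 of Lemma~\ref{lem:pos-corr} (with negative $\theta$), omitting the "gaps" of length $d$ only increases this expectation, so $E[\cdots]\le M_S(-\theta,0,(N-n)d)$. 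Then Part~3 of Lemma~\ref{lem:pos-corr} gives $M_S(-\theta,0,(N-n)d)\le m_+(-\theta)^{(N-n)d}=\bigl(m_+(-\theta)^d\bigr)^{N-n}$. This replaces $(M_c(-\theta))^{(N-n)d}$ from the VBR proof by $(m_+(-\theta)^d)^{N-n}$.

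With this in hand the remainder is the same counting argument as in Theorem~\ref{thm:Swin-VBR}: the number of sequences in $C_n(s,t)$ is $\binom{(N-n)d+n}{n}\le d^n\binom{N}{n}$, so
\begin{align*}
M_{S_{\rm win}}(-\theta,s,t)\le \sum_{n=0}^N d^n\binom{N}{n}\bigl(m_+(-\theta)^d\bigr)^{N-n} e^{-\theta n w} = \bigl(m_+(-\theta)^d + d e^{-\theta w}\bigr)^N\,,
\end{align*}
and since $N=\lfloor(t-s)/d\rfloor$ this is the claimed bound. A subtlety to watch is the edge case $nd>t-s$ in Lemma~\ref{lem:Swin-giganticMin}, where $C_n(s,t)$ collapses to a single sequence with all $S$-terms vanishing; but because we already reduced to $t-s=Nd$ and only need $n\le N$, this case does not arise. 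I would also double-check that Part~1 of Lemma~\ref{lem:pos-corr} is being applied in the correct direction for $\theta<0$ (it is: the lemma is stated for all $\theta\in\mathbb{R}$, and the bound $M_S(-\theta,0,n)$ there corresponds here to collapsing all the deleted slots), and that the disjointness of the intervals $[\tau_{i-1},\tau_i-d)$ is guaranteed precisely by the defining condition $\tau_i-\tau_{i-1}\ge d$ of $C_n(s,t)$.
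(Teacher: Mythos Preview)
Your proposal is correct and follows essentially the same approach as the paper's proof. The paper's own argument is stated very tersely: it invokes Lemma~\ref{lem:pos-corr} (Parts~1 and~3) to obtain $E\bigl[e^{-\theta(\sum_i S(\tau_{i-1},\tau_i-d)+S(\tau_n,t))}\bigr]\le M_S(-\theta,0,t-s-nd)\le m_+(-\theta)^{t-s-nd}$ and then simply says ``inserting this estimate into the proof of Theorem~\ref{thm:Swin-VBR} gives the result''; you have spelled out those inserted steps explicitly, including the reduction to $t-s=Nd$, the counting bound $|C_n(s,t)|\le d^n\binom{N}{n}$, and the binomial summation, which is exactly what that insertion entails.
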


\begin{proof} 
By Lemma~\ref{lem:pos-corr} (Parts 1 and 3), we have
for each choice of $\tau_o,\dots,\tau_n$ in $C_n(s,t)$
the bound
\begin{align*}
E\left[e^{-\theta 
\sum_{i=1}^{n}
\left(S(\tau_{i-1},\tau_i-d)+S(\tau_n , t)\right)}
\right] \le M_S(-\theta, 0 ,t-s-nd) 
\le \left(m(-\theta)\right)^{t-s-nd}\,.
\end{align*}
Inserting this estimate into the
proof of Theorem~\ref{thm:Swin-VBR} gives the result.
\end{proof}

The theorem implies the following lower bound on the effective
capacity of the service process with feedback.

\begin{corollary}\label{cor:EffCap-MMOO}
Under the assumptions of Theorem~\ref{thm:Swin-MMOO},
the effective capacity $\gamma_{\rm win}$ of the
MMOO process $S(s,t)$ with feedback satisfies
for $\theta >0$ 
\begin{align}
\label{eq:effC-MMOO}
\gamma_{\rm win} (-\theta) \ge 
\gamma_S(-\theta) 
-\frac{1}{d\theta}\log \left(
1  + d e^{\theta (d\gamma_S(-\theta)   -w )}\right)\,,
\end{align}
where $\gamma_S(-\theta)$ is given by Eq.~\eqref{eq:MMOO-effC}.
\end{corollary}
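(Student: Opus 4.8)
The plan is to derive Eq.~\eqref{eq:effC-MMOO} directly from Theorem~\ref{thm:Swin-MMOO} by taking a logarithm and passing to the limit $t\to\infty$, in exact parallel with the derivation of Corollary~\ref{cor:Swin-VBR}(b) from Theorem~\ref{thm:Swin-VBR}. First I would recall the definition of the effective capacity, $\gamma_{\rm win}(-\theta) = \lim_{t\to\infty} -\frac{1}{\theta t}\log M_{S_{\rm win}}(-\theta,0,t)$, so that a pointwise upper bound on $M_{S_{\rm win}}$ translates into a lower bound on $\gamma_{\rm win}$. Applying Theorem~\ref{thm:Swin-MMOO} with $s=0$ gives
\begin{align*}
-\frac{1}{\theta t}\log M_{S_{\rm win}}(-\theta,0,t)
\ \ge\ -\frac{1}{\theta t}\left\lfloor\tfrac{t}{d}\right\rfloor
\log\left((m_+(-\theta))^d + d e^{-\theta w}\right)\,.
\end{align*}
Since $\frac{1}{t}\lfloor t/d\rfloor \to \frac1d$ as $t\to\infty$, the limit is $-\frac{1}{d\theta}\log\left((m_+(-\theta))^d + d e^{-\theta w}\right)$.

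The second step is to rewrite this quantity in terms of $\gamma_S(-\theta)$. By the Remark following Lemma~\ref{lem:pos-corr}, we have $\gamma_S(-\theta) = \frac{1}{-\theta}\log m_+(-\theta)$ when $\theta$ is negated appropriately — more precisely, $\gamma_S(-\theta) = -\frac1\theta\log m_+(-\theta)$, equivalently $m_+(-\theta) = e^{-\theta\gamma_S(-\theta)}$. Substituting $m_+(-\theta)^d = e^{-\theta d\gamma_S(-\theta)}$ and factoring this out of the logarithm,
\begin{align*}
-\frac{1}{d\theta}\log\left(e^{-\theta d\gamma_S(-\theta)} + d e^{-\theta w}\right)
= \gamma_S(-\theta) - \frac{1}{d\theta}\log\left(1 + d e^{\theta(d\gamma_S(-\theta) - w)}\right)\,,
\end{align*}
which is exactly the right-hand side of Eq.~\eqref{eq:effC-MMOO}. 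Combining with the inequality from the first step completes the argument.

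I do not anticipate a genuine obstacle here: the corollary is a routine consequence of the theorem together with the spectral identification $\gamma_S(-\theta) = -\frac1\theta\log m_+(-\theta)$ already recorded in the Remark. The only point requiring a little care is the floor function — one should note that $\frac{1}{t}\lfloor t/d\rfloor$ is squeezed between $\frac1d - \frac1t$ and $\frac1d$, so the limit is unambiguously $\frac1d$, and that the sign of the logarithm's argument (it exceeds $1$, so its log is positive) makes the bound meaningful; neither is a real difficulty. One might also remark that, just as in Corollary~\ref{cor:Swin-VBR}(b), the bound exhibits the service impediment explicitly as a nonnegative quantity subtracted from $\gamma_S(-\theta)$, and that it coincides in form with the VBR bound Eq.~\eqref{eq:effC-VBR} upon replacing the i.i.d.\ effective capacity by the MMOO effective capacity from Eq.~\eqref{eq:MMOO-effC}.
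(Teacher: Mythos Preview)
Your proposal is correct and matches the paper's intent exactly: the paper does not give a separate proof for this corollary, merely stating that ``the theorem implies the following lower bound,'' just as Corollary~\ref{cor:Swin-VBR}(b) was said to follow ``immediately from Theorem~\ref{thm:Swin-VBR}.'' Your write-up fills in precisely the details one would expect---take logs, pass to the limit using $\lfloor t/d\rfloor/t\to 1/d$, and substitute $m_+(-\theta)=e^{-\theta\gamma_S(-\theta)}$ from the Remark after Lemma~\ref{lem:pos-corr}---so there is nothing to add.
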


Both results extend to more general
two-state Markov-modulated service processes.

\begin{corollary} 
\label{cor:MMP}
Let $S(s,t)$ the two-state Markov-modulated process 
at the end of Subsection IV-A.
If $p_{01}+p_{10}<1$, then the moment-generating
function of the service process with feedback satisfies
Eq.~\eqref{eq:Swin-MMOO}, where
where $m_+(-\theta)$ is the larger eigenvalue
of the matrix $L(-\theta)$ defined by Eq.~\eqref{eq:MMP-effC-L}.
Its effective capacity $\gamma_{\rm win}(-\theta)$
satisfies Eq.~\eqref{eq:effC-MMOO}, where
$\gamma_S(-\theta)=-\frac{1}{\theta} \log m_+(-\theta)$
is the effective capacity of the service process.
\end{corollary}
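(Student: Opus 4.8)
The plan is to verify that every ingredient used in the proofs of Theorem~\ref{thm:Swin-MMOO} and Corollary~\ref{cor:EffCap-MMOO} has already been re-established for the general two-state Markov-modulated process in Lemma~\ref{lem:pos-corr-MMP}, so that those two proofs carry over essentially verbatim. Concretely, Lemma~\ref{lem:pos-corr-MMP} asserts that the conclusions of Lemma~\ref{lem:pos-corr} hold for $S(s,t)=\sum_{k=s}^{t-1}c_k$ with $c_k=(1-X_k)c_k^0+X_kc_k^1$, provided one replaces $M_c(\theta)$ by $(1-p)M_{c^0}(\theta)+pM_{c^1}(\theta)$ and uses the matrix $L(\theta)$ from Eq.~\eqref{eq:MMP-effC-L} instead of Eq.~\eqref{eq:MMOO-effC-L}. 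In particular Parts~1 and~3 of that lemma give, for any $\tau_o,\dots,\tau_n\in C_n(s,t)$,
\begin{align*}
E\!\left[e^{-\theta\left(\sum_{i=1}^n S(\tau_{i-1},\tau_i-d)+S(\tau_n,t)\right)}\right]
\le M_S(-\theta,0,t-s-nd)\le \bigl(m_+(-\theta)\bigr)^{t-s-nd}\,,
\end{align*}
which is exactly the estimate that drove the proof of Theorem~\ref{thm:Swin-MMOO}.

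First I would fix $\theta>0$, reduce to $t-s=Nd$ by monotonicity of $S_{\rm win}$ in $t$ (as in the proof of Theorem~\ref{thm:Swin-VBR}), and invoke Lemma~\ref{lem:Swin-giganticMin} to write $S_{\rm win}(s,t)$ as the minimum over $n=0,\dots,N$ of the terms indexed by $C_n(s,t)$. Then, applying Lemma~\ref{lemma:mgfofmin} to pass the expectation through the outer minimum and a union bound over $C_n(s,t)$, together with the displayed bound above, yields
\begin{align*}
M_{S_{\rm win}}(-\theta,s,t)\le\sum_{n=0}^N |C_n(s,t)|\,\bigl(m_+(-\theta)\bigr)^{(N-n)d}e^{-\theta n w}\,.
\end{align*}
Using the same cardinality bound $|C_n(s,t)|\le d^n\binom{N}{n}$ from the proof of Theorem~\ref{thm:Swin-VBR} and the binomial theorem gives $M_{S_{\rm win}}(-\theta,s,t)\le\bigl((m_+(-\theta))^d+d e^{-\theta w}\bigr)^N$, which is Eq.~\eqref{eq:Swin-MMOO} with the general $m_+(-\theta)$. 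For the effective-capacity claim, I would substitute this MGF bound into the definition \eqref{eq:Eff-capacity}, take $-\frac{1}{\theta t}\log(\cdot)$, let $t\to\infty$ along multiples of $d$, and use the Remark identity $\gamma_S(-\theta)=-\frac1\theta\log m_+(-\theta)$ (which holds for the general process by the spectral representation in Lemma~\ref{lem:pos-corr-MMP}, Part~4) to rewrite $(m_+(-\theta))^d=e^{-d\theta\gamma_S(-\theta)}$; this produces Eq.~\eqref{eq:effC-MMOO} verbatim.

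The only point requiring genuine care—and hence the main obstacle—is confirming that Part~1 of Lemma~\ref{lem:pos-corr} really does extend to the \emph{gapped, delayed} sums $\sum_{i=1}^n S(\tau_{i-1},\tau_i-d)+S(\tau_n,t)$ appearing in $C_n(s,t)$, not merely to sums $\sum c_{\tau_i}$ over a single increasing sequence of time slots. This is subsumed by the supermultiplicativity statement \eqref{eq:pos-corr-supermult} (Part~2 of Lemma~\ref{lem:pos-corr-MMP}): the disjoint blocks $(\tau_{i-1},\tau_i-d)$ and $(\tau_n,t)$ are separated by gaps of length $\ge d\ge 1$, so spreading them out can only decrease the joint MGF, and collapsing all gaps to zero bounds the expectation by $M_S(-\theta,0,t-s-nd)$. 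I would spell out this reduction explicitly—conditioning on the state variables at the block endpoints and applying Lemma~\ref{lem:On-Off-monomials} exactly as in the proof of Lemma~\ref{lem:pos-corr-MMP}, Parts~1 and~2—since it is the sole place where the Markov correlation structure (rather than mere i.i.d. reasoning) is used. Everything else is a direct transcription of the VBR arguments with $M_c(-\theta)$ replaced by $m_+(-\theta)$.
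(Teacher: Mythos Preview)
Your proposal is correct and takes essentially the same approach as the paper, whose proof is the one-liner ``in the proof of Theorem~\ref{thm:Swin-MMOO}, replace Lemma~\ref{lem:pos-corr} by Lemma~\ref{lem:pos-corr-MMP}''; you have simply unpacked that substitution in full. One small simplification: the ``obstacle'' you flag is not really an obstacle, because for $\tau_o,\dots,\tau_n\in C_n(s,t)$ the quantity $\sum_{i=1}^n S(\tau_{i-1},\tau_i-d)+S(\tau_n,t)$ is, once you expand $S(a,b)=\sum_{k=a}^{b-1}c_k$, literally a sum $\sum_{j}c_{\sigma_j}$ over a single strictly increasing sequence of $t-s-nd$ time slots, so Part~1 of Lemma~\ref{lem:pos-corr-MMP} applies directly without any detour through supermultiplicativity.
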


{\em Remark:} The formula 
$\gamma(-\theta)=-\frac{1}{\theta} \log m_+(-\theta)$
agrees with~\cite[Eq.~(7.17)]{Book-Chang}. 

\begin{proof} In the proof Theorem~\ref{thm:Swin-MMOO},
replace Lemma~\ref{lem:pos-corr} by 
Lemma~\ref{lem:pos-corr-MMP}. 
\end{proof}


An important application of Corollary~\ref{cor:MMP} is 
a leftover service model with 
a VBR server (as analyzed in Sec.~\ref{sec:cbrvbr}), where  
cross-traffic arrivals are governed by a two-state Markov-modulated process. 
Let $c_k$ denote the total service available to all traffic flows in time slot $k$,
and let $a^{\rm c}_k$ denote the cross-traffic arrivals in that time slot. 
Then the leftover service available to the
through flow can be bounded from below by
$$
S^{\rm lo}(s,t)= \left[\sum_{k=s}^{t-1} (c_k-a^{\rm c}_k)\right]^+
\ge \ \sum_{k=s}^{t-1} (c_k-a^{\rm c}_k)\,, 
$$
where we use $[x]^+= \max \{x,0\}$. 
Corollary~\ref{cor:MMP} applies
since the sum on the right-hand side is a two-state Markov-modulated process. It provides a non-trivial lower bound on the service
of the feedback system, provided that the stability
condition $E[a^{\rm c}_k]<E[c_k]$ is met.

\subsection{Quality of the MMOO bounds}

As with the VBR server, we can use the  bounds of
Theorem~\ref{thm:Swin-apriori} to test the accuracy of the 
results on the MMOO process. 
Inserting the parameters of the MMOO process in
Theorem~\ref{thm:Swin-apriori}, $S_{\rm win}$ is bounded by 
\begin{align}
\label{eq:Swin-MMOO-w/d}
\sum_{k=s}^{t-1} \min\left\{P, \frac w d\right\} X_k
\ \le \ S_{\rm win}(s,t) \ \le \ \min \left\{ S(s,t), \left\lceil\frac {t-s} d \right
\rceil w \right\}\,.
\end{align}
The lower bound on the left corresponds to a service process
of a MMOO process with ON rate 
$P'=\min\left\{P,\frac w d \right\}$. The lower bound is exact for 
when $d=1$. 
Its moment-generating
function can  be 
bounded with the help of Lemma~\ref{lem:pos-corr} (Part~3).
For the upper, we can only exploit  
$S_{\rm win}(s,t) \le \lceil\tfrac {t-s} d  \rceil w$, since we do not have a simple upper bound on the
$\eps$-quantiles of the MMOO process.

We also apply Lemma~\ref{lem:pos-corr} (Part 4) to the upper bound.
Using these expressions in Eq.~\eqref{eq:Eff-capacity} 
yields the following bounds for the effective capacity: 
\begin{corollary}\label{cor:apriori-MMOO}
Under the assumptions of Theorems~\ref{thm:Swin-MMOO}, $\gamma_{\rm win} (-\theta)$ 
is bounded for $\theta>0$ by 
\begin{align}
\label{eq:effC-MMOO-w/d}
\gamma'_{\rm win}(-\theta)\ \le\
\gamma_{\rm win}(-\theta) \ \le \ 
\min \left\{
\gamma_S(-\theta), \frac{w}{d}\right\}\,,\quad (\theta>0)\,.
\end{align}
\end{corollary}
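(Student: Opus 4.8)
The plan is to obtain both inequalities from Theorem~\ref{thm:Swin-apriori}, applied to the MMOO process, by passing to moment-generating functions and then taking the limit that defines the effective capacity in Eq.~\eqref{eq:Eff-capacity}. The starting point is the sandwich bound in Eq.~\eqref{eq:Swin-MMOO-w/d}. Its left-hand side is the equivalent service process with parameters $d'=1$, $w'=w/d$; by Lemma~\ref{lem:Swin-d1-exact} this equals $\sum_{k=s}^{t-1}\min\{PX_k,w/d\}=\sum_{k=s}^{t-1}P'X_k$ with $P'=\min\{P,w/d\}$, i.e., it is the service process $S'$ of an MMOO server with the same transition probabilities but ON rate $P'$. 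Let $\gamma'_{\rm win}(-\theta)$ denote its effective capacity, which exists and equals $-\tfrac1\theta\log m'_+(-\theta)$ by Lemma~\ref{lem:pos-corr} (Part~3) and the Remark following it, applied with $P$ replaced by $P'$.

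For the lower bound, since $S_{\rm win}(0,t)\ge S'(0,t)$ holds pointwise as random variables, we have $M_{S_{\rm win}}(-\theta,0,t)\le M_{S'}(-\theta,0,t)$ for every $\theta>0$. Applying $-\tfrac1{\theta t}\log(\cdot)$ reverses the inequality, and letting $t\to\infty$ yields $\gamma_{\rm win}(-\theta)\ge\gamma'_{\rm win}(-\theta)$. This is an equality when $d=1$, because then $S_{\rm win}=S'$.

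For the upper bound, I would use the two deterministic consequences of Eq.~\eqref{eq:Swin-MMOO-w/d}: $S_{\rm win}(0,t)\le S(0,t)$ and $S_{\rm win}(0,t)\le\lceil t/d\rceil w$. The first gives $M_{S_{\rm win}}(-\theta,0,t)\ge M_S(-\theta,0,t)$, hence $\gamma_{\rm win}(-\theta)\le\gamma_S(-\theta)$ after taking $-\tfrac1{\theta t}\log(\cdot)$ and $t\to\infty$; here the limit defining $\gamma_S(-\theta)$ exists by Lemma~\ref{lem:pos-corr} (Parts~3--4) and coincides with Eq.~\eqref{eq:MMOO-effC}. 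The second, being deterministic, gives $M_{S_{\rm win}}(-\theta,0,t)\ge e^{-\theta\lceil t/d\rceil w}$, so $-\tfrac1{\theta t}\log M_{S_{\rm win}}(-\theta,0,t)\le\lceil t/d\rceil w/t\to w/d$. Combining the two inequalities yields $\gamma_{\rm win}(-\theta)\le\min\{\gamma_S(-\theta),w/d\}$; equivalently, one may first write $M_{S_{\rm win}}(-\theta,0,t)\ge\max\{M_S(-\theta,0,t),e^{-\theta\lceil t/d\rceil w}\}$ and take the limit once.

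The main obstacle is not any of these estimates but the justification that the limit defining $\gamma_{\rm win}(-\theta)$ exists at all: $S_{\rm win}$ is not additive, and because of the feedback $\log M_{S_{\rm win}}(-\theta,0,t)$ is not obviously super- or subadditive in $t$. The clean way around this is to phrase the lower-bound argument with $\liminf_{t\to\infty}$ and the upper-bound argument with $\limsup_{t\to\infty}$, so that the stated inequalities hold for $\gamma_{\rm win}(-\theta)$ whenever the limit exists --- which is the standing assumption of this section and is what the corollary presupposes.
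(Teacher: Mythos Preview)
Your proposal is correct and follows essentially the same route as the paper: derive the sandwich bound on $S_{\rm win}$ from Theorem~\ref{thm:Swin-apriori} (Eq.~\eqref{eq:Swin-MMOO-w/d}), pass to moment-generating functions, and take the limit in Eq.~\eqref{eq:Eff-capacity}. The paper likewise invokes Lemma~\ref{lem:pos-corr} (Part~3) for the lower bound and Part~4 for the $\gamma_S(-\theta)$ term in the upper bound, exactly as you do. Your explicit remark that the existence of the limit defining $\gamma_{\rm win}(-\theta)$ is not obvious, and that the argument should be phrased with $\liminf$/$\limsup$, is a valid point that the paper does not address.
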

Here, $\gamma'_{\rm win}(-\theta)$
is the effective capacity of an MMOO process
with peak rate $P'=\min\{P,\frac w d\}$.
Note that the average rate of this
process, given by $\gamma_{\rm win}'(0)$, lies strictly
below the upper bound, $\min\{\gamma_S(0,\frac{w}{d}\}$.
In the limit $\theta\to\infty$, the three bounds 
in Eqs.~\eqref{eq:effC-MMOO} and~\eqref{eq:effC-MMOO-w/d}
become sharp.


\subsection{Numerical evaluation of MMOO bounds}
We present numerical examples for an MMOO server with 
window flow control, proceeding in a similar fashion as in the 
evaluation of the VBR server. 
The parameters of the MMOO process are selected as 
\[
p_{00} = 0.2, \ \ p_{11} = 0.9, \ \ P = 1.125~\text{Mb} \, .    
\]
With time slots of length 1~ms, 
the server has an average rate of 1~Gbps, which is the same rate as 
that of the VBR service evaluated in Subsec.~\ref{subsec:eval-VBR}. 

\begin{figure}[!t]
\centering
	\subfigure[$w/d= 100$~Mbps.]{
 \includegraphics[width=0.8\textwidth]{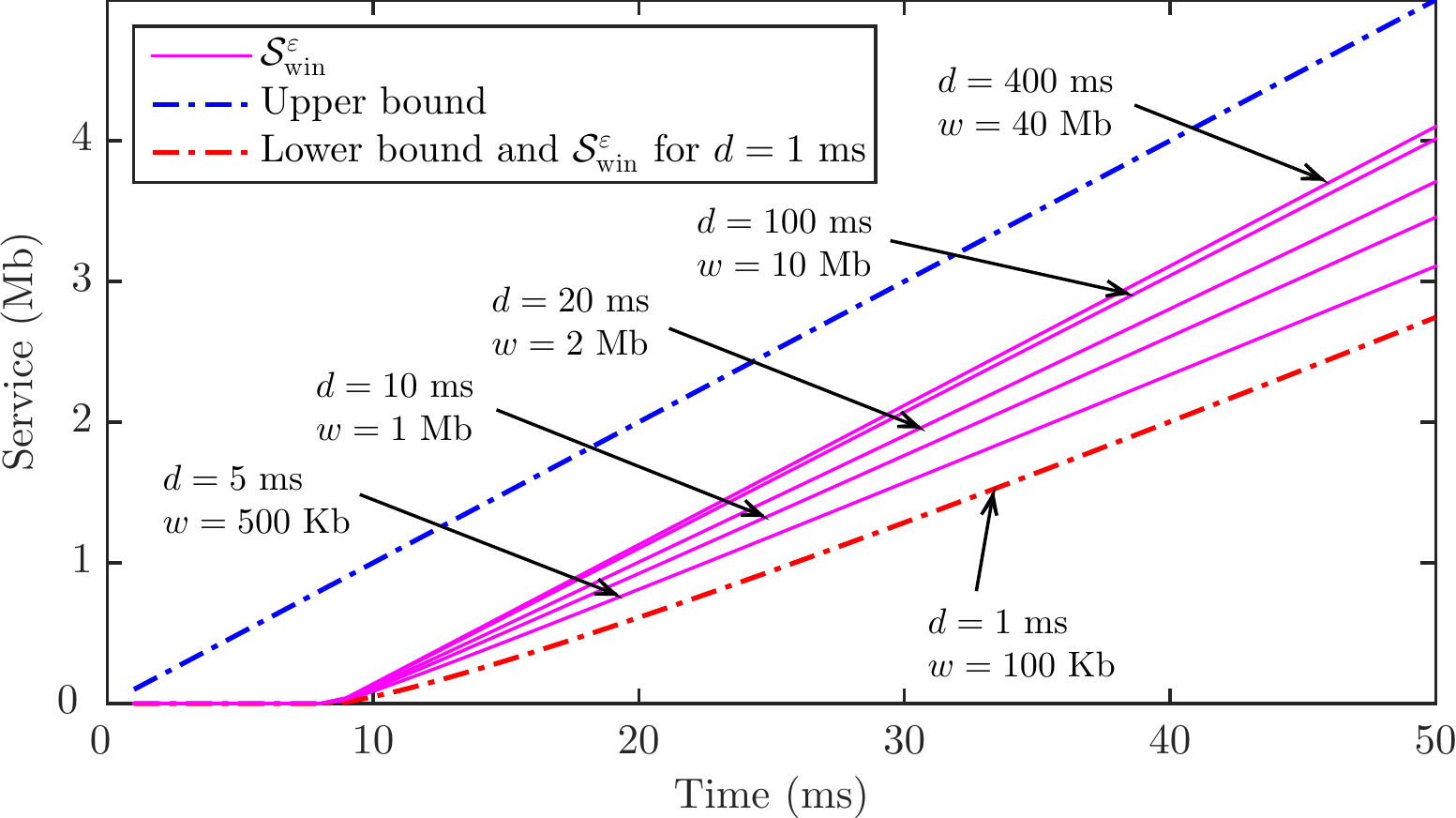}
		\label{fig:MMOO-Swin-100}
	}
	
\centering
	\subfigure[$w/d= 500$~Mbps.]{
		\includegraphics[width=0.8\textwidth]{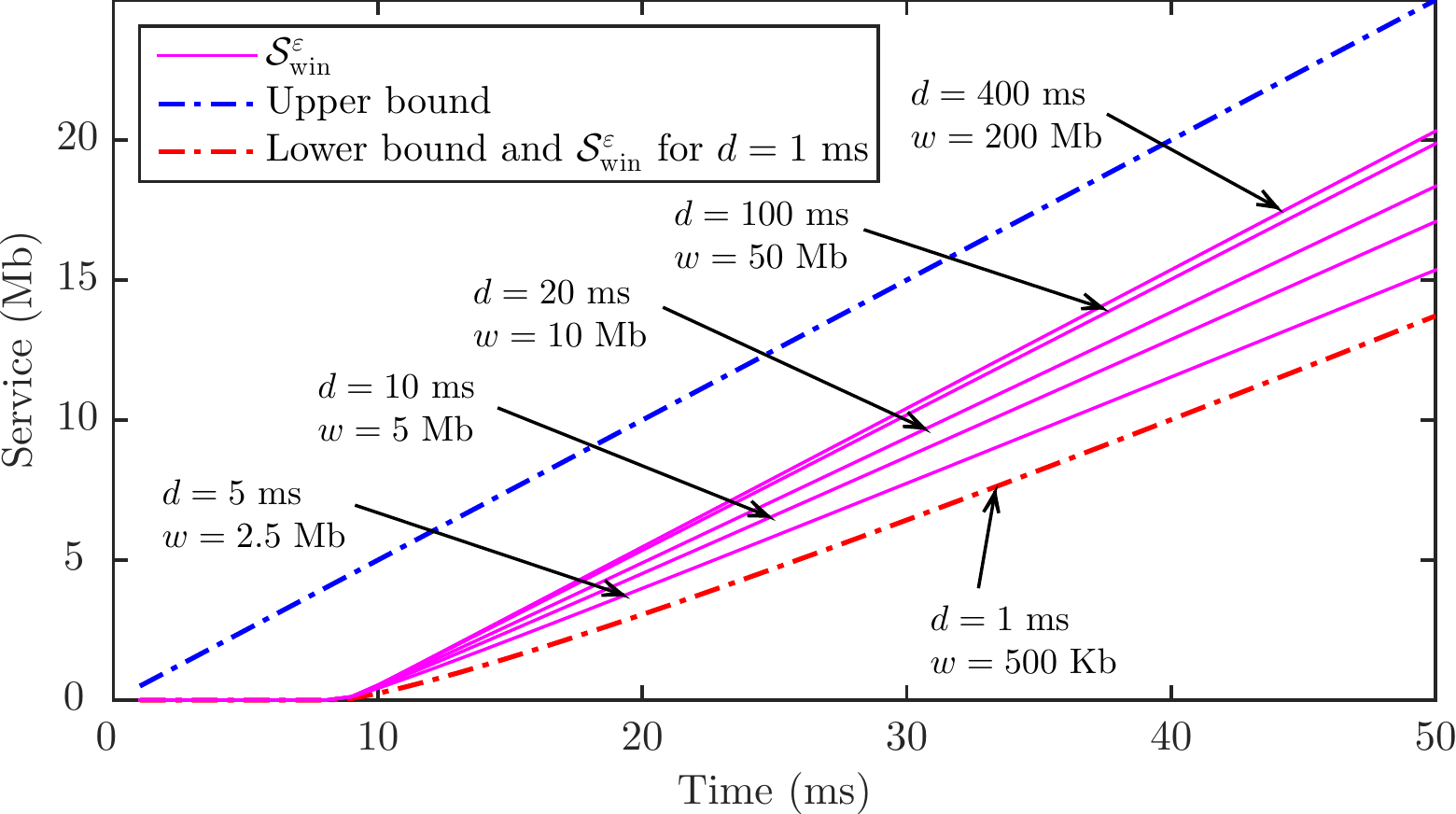}
		\label{fig:MMOO-Swin-500}
	}
\caption{Statistical service curves $\S^\eps_{\rm win} (0,t)$ for time-correlated (Markov-modulated On-Off)  service  ($\eps = 10^{-6}$).}
\label{fig:MMOO-Swin}

\vspace{-17pt}
\end{figure}
\begin{figure}[!t]
\centering
	\subfigure[$w/d= 100$~Mbps.]{
 \includegraphics[width=0.8\textwidth]{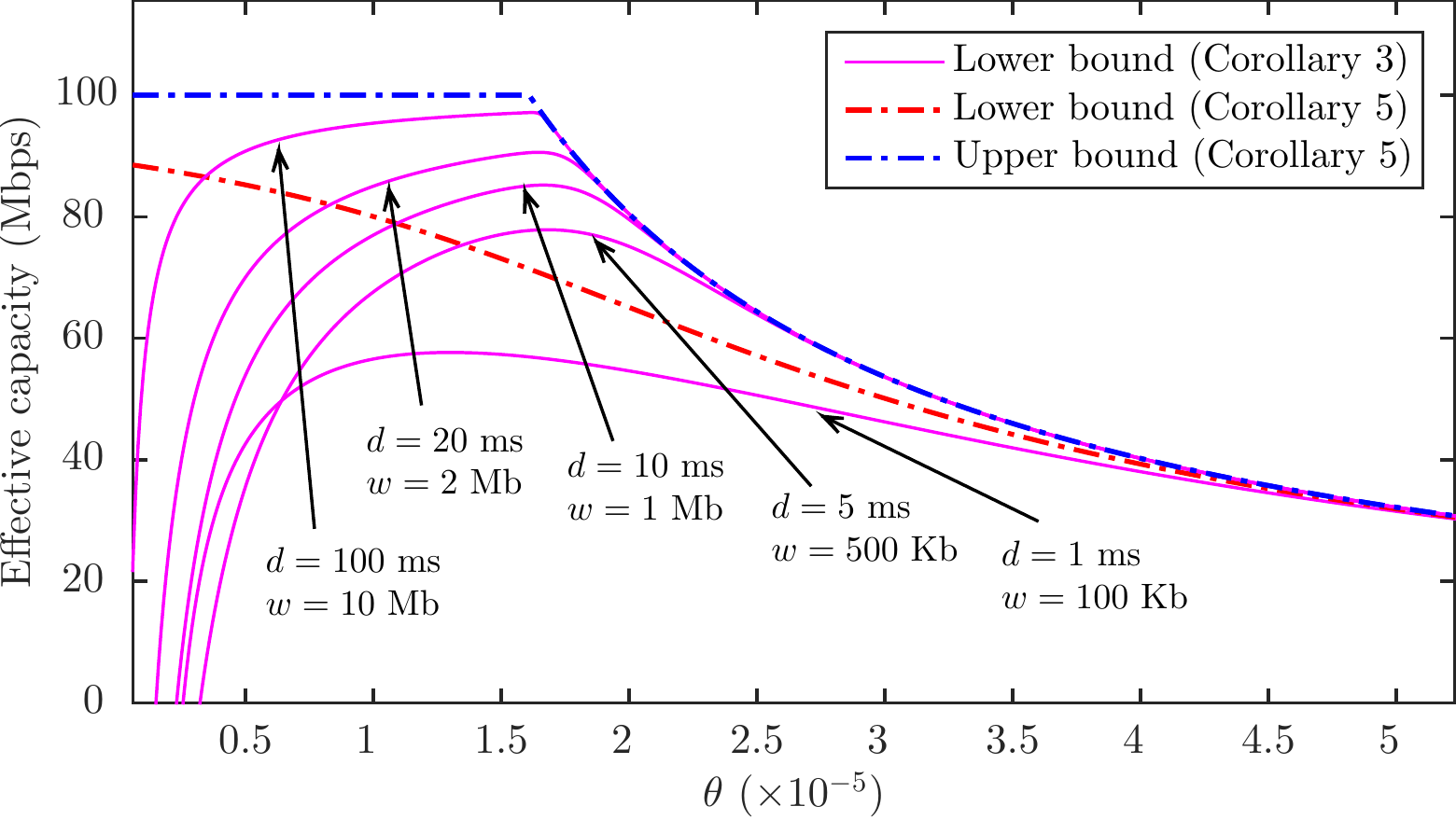}
		\label{fig:MMOO-EffCap-100}
	}
		
\centering
	\subfigure[$w/d= 500$~Mbps.]{
\includegraphics[width=0.8\textwidth]{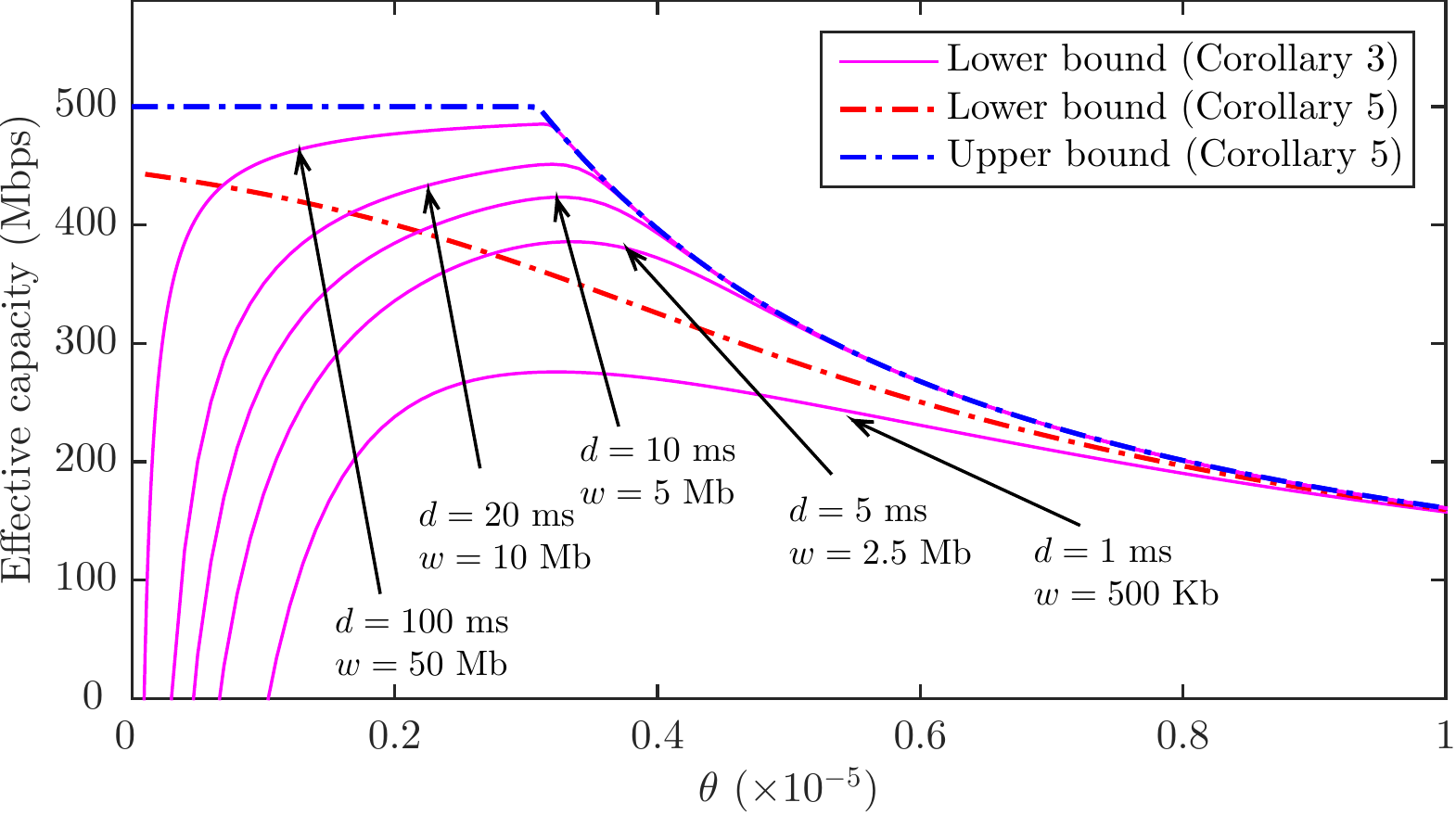}
		\label{fig:MMOO-EffCap-500}
	}			
\caption{Effective capacity $\gamma_{\rm win} (-\theta)$ for  time-correlated (Markov-modulated On-Off) service.}
\label{fig:MMOO-EffCap}

\vspace{-17pt}
\end{figure}

In Fig.~\ref{fig:MMOO-Swin} 
we present the statistical service curve  $\S_{\rm win}^\eps (0,t)$ as a function of time 
where we fix $w/d=100$~Mbps in Fig.~\ref{fig:MMOO-Swin-100} and 
$w/d=500$~Mbps in Fig.~\ref{fig:MMOO-Swin-500}. We set 
$\eps=10^{-6}$.  
The statistical service curves, plotted as solid 
lines, have been constructed with the bound 
on $M_\Swin$ from Theorem~\ref{thm:Swin-MMOO}.
We include for comparison, 
upper and lower bounds obtained from Theorem~\ref{thm:Swin-apriori}
via Eq.~\eqref{eq:Swin-MMOO-w/d} 
represented by dash-dotted lines. 
Note that the lower bounds become positive 
only for $t>10$~ms, and that this is well matched by the  (solid line) statistical service curves.  The initial 
latency is  a property of the MMOO service process, which may 
reside for extended time periods in the OFF state.  
As seen in the VBR service in  Fig.~\ref{fig:VBR-Swin}, 
the statistical service curves increase when  
increasing $w$ and $d$ proportionally. 
Recall that the lower bound is also a statistical service curve when $d=1$~ms. 
As another comparison 
with the VBR service, we note that the upper and lower bounds in 
Fig.~\ref{fig:MMOO-Swin} are separated by a wider margin than 
in Fig.~\ref{fig:VBR-Swin}. This is due to the simpler upper bound, 
since we have not derived an upper bound for the
$\eps$-quantiles of the MMOO process.

Fig.~\ref{fig:MMOO-EffCap} shows the bounds on the effective capacity 
$\gamma_{\rm win} (-\theta)$ as a function of $\theta$, 
where we use $w/d = 100$~Mbps in Fig.~\ref{fig:MMOO-EffCap-100}, 
and $w/d = 500$~Mbps in Fig.~\ref{fig:MMOO-EffCap-500}. 
The bounds from  Corollary~\ref{cor:EffCap-MMOO}, 
for different values of $w$ and $d$, 
are shown as solid lines. 
The upper and lower bounds obtained from
Corollary~\ref{cor:apriori-MMOO} are depicted 
as dash-dotted lines. 
When the bounds from Corollary~\ref{cor:EffCap-MMOO} fall 
below those of Corollary~\ref{cor:apriori-MMOO}, the 
better bound should be used. 
Except for small values of $\theta$, the lower bounds of Corollary~\ref{cor:EffCap-MMOO} 
are close to the upper bound 
from Corollary~\ref{cor:apriori-MMOO}, indicating that the effective capacity can be 
accurately computed.

\section{Conclusions}
\label{sec:conclusions}

We have approached a  well-known open  problem in 
the stochastic network calculus, i.e., an extension of 
the analysis of feedback systems.  
Our analysis addressed a window flow control system with stochastic service, 
where we considered a service with and without time correlations. 
We analyzed and then addressed the difficulty of accounting for the 
time correlations introduced by  feedback mechanisms. 
Our analysis revealed major differences between deterministic and random 
window flow control systems, in particular, an additional dependency 
between the characteristic time scales of the feedback delay and the service process. 
We provided  lower as well as upper bounds on the available service of 
the feedback system, which enabled us to discuss the accuracy of our results, 
and discovered special cases where exact expressions for the service can be obtained. 
The results in this paper can be extended in many directions. Obvious generalizations are to consider random feedback delays and time-variable window sizes. We chose a window flow control network since a corresponding deterministic network calculus analysis exists in the literature. 
There are numerous  
other feedback systems that await an analysis of their backlog and/or delay properties.
Applying our analysis to the detailed dioid algebraic models of TCP feedback in \cite{baccelliTCP} is a logical first step.






\end{document}